\documentclass[a4paper,UKenglish,cleveref,autoref,thm-restate
,authorcolumns]{lipics-v2021}

\graphicspath{{figures/}}
\usepackage{tikz}
\newcommand{\tikzcircle}[2][red,fill=red]{\tikz[baseline=-0.5ex]\draw[#1,radius=#2] (0,0) circle ;}%
\definecolor{defred}{RGB}{139,0,0}
\definecolor{defgray}{RGB}{169,169,169}
\definecolor{defblue}{RGB}{0,0,139}
\definecolor{defpurple}{RGB}{139,0,139}
\definecolor{deforange}{RGB}{255,140,0}
\definecolor{defgreen}{RGB}{0,100,0}
\definecolor{defcyan}{RGB}{0,139,139}
\definecolor{defpink}{RGB}{255,105,180}

\usepackage{booktabs,tabularx,multirow,makecell}
\usepackage{bm}

\title{Stack and Queue Layouts with Defects}

\acknowledgements{Research started at the Bertinoro Workshop on Graph Drawing 2024. Michael A. Bekos and Maria Eleni Pavlidi were supported by HFRI Grant No: 26320.}

\author{Michael A. Bekos}{University of Ioannina, Greece \and \url{https://myweb.uoi.gr/bekos/} }{bekos@uoi.gr}{https://orcid.org/0000-0002-3414-7444}{}

\author{Carla Binucci}{University of Perugia, Italy \and \url{https://mozart.diei.unipg.it/binucci/} }{carla.binucci@unipg.it}{https://orcid.org./0000-0002-5320-9110}{}

\author{Emilio {Di Giacomo}}{University of Perugia, Italy \and \url{https://mozart.diei.unipg.it/digiacomo/} }{emilio.digiacomo@unipg.it}{https://orcid.org/0000-0002-9794-1928}{}

\author{Walter Didimo}{University of Perugia, Italy \and \url{https://mozart.diei.unipg.it/didimo/} }{walter.didimo@unipg.it}{https://orcid.org/0000-0002-4379-6059}{}

\author{Luca Grilli}{University of Perugia, Italy \and \url{https://mozart.diei.unipg.it/grilli/} }{luca.grilli@unipg.it}{https://orcid.org/0000-0002-2463-3772}{}

\author{Maria Eleni Pavlidi}{University of Ioannina, Greece \and \url{https://algo.math.uoi.gr/marialena/index.html}}{m.e.pavlidi@uoi.gr}{https://orcid.org/0009-0009-4500-0112}{}

\author{Alessandra Tappini}{University of Perugia, Italy \and \url{https://mozart.diei.unipg.it/tappini/} }{alessandra.tappini@unipg.it}{https://orcid.org/0000-0001-9192-2067}{}

\author{Alexandra Weinberger}{FernUniversit\"at in Hagen, Germany \and \url{https://www.fernuni-hagen.de/ti/team/alexandra.weinberger.shtml} }{alexandra.weinberger@fernuni-hagen.de}{https://orcid.org/0000-0001-8553-6661}{}

\authorrunning{Bekos, Binucci, Di Giacomo, Didimo, Grilli, Pavlidi, Tappini, Weinberger} 

\Copyright{Michael A. Bekos, Carla Binucci, Emilio Di Giacomo, Walter Didimo, Luca Grilli, Maria Eleni Pavlidi, Alessandra Tappini, Alexandra Weinberger} 

\keywords{Linear layouts with defects, stack layouts, queue layouts}

\relatedversion{} 

\ccsdesc[500]{Mathematics of computing~Graph algorithms}
\ccsdesc[500]{Mathematics of computing~Graph theory}

\usepackage{xspace}
\usepackage{todonotes}
\usepackage{thmtools} 
\usepackage{thm-restate}
\usepackage{apptools}
\usepackage{hyperref}
\usepackage{cleveref}
\usepackage{complexity}
\usepackage[inline]{enumitem}

\newcommand{\restateref}[1]{\IfAppendix{\hyperref[#1]{$\star$}}{\hyperref[#1*]{$\star$}}}

\newtheorem{property}{Property}

\nolinenumbers
\hideLIPIcs

\begin{document}

\maketitle

\begin{abstract}
Linear layouts of graphs -- particularly \emph{stack} and \emph{queue} layouts -- are well-established types of representations in graph drawing, thanks to their connection with numerous theoretical and practical problems. In such layouts, all vertices are linearly ordered and the edges are partitioned into sets that avoid specific forbidden configurations: in a stack layout no two independent edges within the same set cross, whereas in a queue layout no two independent edges within the same set are nested. 
A central problem in this context is to determine, for a given graph $G$, its \emph{stack number} or \emph{queue number}, that is, the minimum number of sets into which the edges can be partitioned so that a corresponding stack or queue layout of $G$ exists.   
In this work, we introduce a relaxation of stack and queue layouts, which allows some forbidden patterns for the edges in the same set. Namely, for a given integer $k > 0$, a \emph{$k$-defective stack layout} (resp. a \emph{$k$-defective queue layout}) allows an edge to be in a crossing (resp. nesting) relationship with at most~$k$ edges within the same set. Our motivation is to extend the classes of graphs that admit linear layouts using a limited number of edge-partition sets, at the cost of allowing some defects.
We study defective linear layouts both from a combinatorial and from an algorithmic perspective, providing an array of results across different graph classes and parameters.
\end{abstract}

\section{Introduction}\label{se:intro}

Over the last four decades, linear layouts of graphs have been a central research topic in graph drawing and topological graph theory. As a result, the corresponding literature is rich in terms of both algorithmic and combinatorial results for several variants that have been proposed over the years. In this work, we focus on stack~\cite{DBLP:journals/jct/BernhartK79} and queue~\cite{DBLP:journals/siamcomp/HeathR92} layouts, where the task is to find a linear order of the vertices of a given graph and a partition of its edges into sets, called \emph{stacks} and \emph{queues}, respectively, such that edges in the same set avoid certain \emph{forbidden patterns} in the underlying linear order. In particular, no two independent edges of a stack~are allowed to \emph{cross} (that is, to have alternating endpoints in the underlying linear order), while two independent edges of a queue are not allowed to \emph{nest} (that is, the endpoints of one edge nest the endpoints of the other). A linear layout of a graph consisting exclusively of stacks is called a \emph{stack layout}, while a \emph{queue layout} is one consisting exclusively of queues. 

Given a graph, a central algorithmic question in this context is to determine its \emph{stack} or \emph{queue number}, i.e., the minimum number of stacks or queues, respectively, that are needed for a corresponding layout to exist. Since, in general, establishing these numbers is computationally hard~\cite{DBLP:journals/siamcomp/HeathR92,Wig82}, a big part of the literature has focused on the combinatorial problem of finding corresponding lower and upper bounds, assuming that the graph to be laid out belongs to a certain graph class (e.g., it is planar); refer to \cref{tbl:bounds} for an overview.

\begin{table}[tb]
	\centering
	\caption{%
	Lower and upper bounds on the 
	queue and stack numbers for various classes of graphs.}
	\label{tbl:bounds}
\begin{tabular}{p{4.5cm}|cccc|cccc}
	\toprule
	& \multicolumn{4}{c|}{\textbf{queue number}} & \multicolumn{4}{c}{\textbf{stack number}} \\
	\cmidrule(l{1ex}r{1ex}){2-5}
	\cmidrule(l{1ex}r{1ex}){6-9}
	\textbf{Graph class} & 
	\multicolumn{1}{c}{lower} & \multicolumn{1}{c}{ref.} & 
	\multicolumn{1}{c}{upper} & \multicolumn{1}{c|}{ref.} & 
	\multicolumn{1}{c}{lower} & \multicolumn{1}{c}{ref.} & 
	\multicolumn{1}{c}{upper} & \multicolumn{1}{c}{ref.}\\
	\midrule
	Trees &  
	1 & \textcolor{blue}{\cite{DBLP:journals/siamcomp/HeathR92}} & 
	1 & \textcolor{blue}{\cite{DBLP:journals/siamcomp/HeathR92}} & 
	1 & \textcolor{blue}{\cite{DBLP:journals/jct/BernhartK79}} & 
	1 & \textcolor{blue}{\cite{DBLP:journals/jct/BernhartK79}} \\
	
	Outerplanar & 
	2 & \textcolor{blue}{\cite{DBLP:journals/siamcomp/HeathR92}} & 
	2 & \textcolor{blue}{\cite{DBLP:journals/siamcomp/HeathR92}} & 
	1 & \textcolor{blue}{\cite{DBLP:journals/jct/BernhartK79}} & 
	1 & \textcolor{blue}{\cite{DBLP:journals/jct/BernhartK79}} \\

    Outer 1-planar & 
	3 & \textcolor{blue}{\cite{DBLP:journals/algorithmica/AuerBBGHNR16}} & 
	3 & \textcolor{blue}{\cite{DBLP:journals/algorithmica/AuerBBGHNR16}} & 
	2 & \textcolor{blue}{\cite{DBLP:journals/algorithmica/AuerBBGHNR16}} & 
	2 & \textcolor{blue}{\cite{DBLP:journals/algorithmica/AuerBBGHNR16}} \\
	
	Series-parallel & 
	3 & \textcolor{blue}{\cite{DBLP:journals/combinatorics/Wiechert17}} & 
	3 & \textcolor{blue}{\cite{DBLP:conf/cocoon/RengarajanM95}} & 
	2 & \textcolor{blue}{\cite{DBLP:conf/cocoon/RengarajanM95}} & 
	2 & \textcolor{blue}{\cite{DBLP:conf/cocoon/RengarajanM95}} \\
		
	Planar 3-trees & 
	4 & \textcolor{blue}{\cite{DBLP:journals/algorithmica/AlamBGKP20}} & 
	5 & \textcolor{blue}{\cite{DBLP:journals/algorithmica/AlamBGKP20}} & 
	3 & \textcolor{blue}{\cite{Wig82}} & 
	3 & \textcolor{blue}{\cite{DBLP:conf/focs/Heath84}} \\

	Planar bipartite & 
	3 & \textcolor{blue}{\cite{DBLP:conf/wads/ForsterKMPR23}} & 
	28 & \textcolor{blue}{\cite{DBLP:conf/wads/ForsterKMPR23}} & 
	2 & \textcolor{blue}{\cite{DBLP:journals/dcg/FraysseixMP95}} & 
	2 & \textcolor{blue}{\cite{DBLP:journals/dcg/FraysseixMP95}} \\
	
	Planar  & 
	4 & \textcolor{blue}{\cite{DBLP:journals/algorithmica/AlamBGKP20}} & 
	42 & \textcolor{blue}{\cite{DBLP:journals/algorithmica/BekosGR23}} & 
	4 & \textcolor{blue}{\cite{DBLP:journals/jocg/KaufmannBKPRU20}} & 
	4 & \textcolor{blue}{\cite{DBLP:journals/jcss/Yannakakis89}} \\
	
	Treewidth at most $t$ & 
	$t+1$ & \textcolor{blue}{\cite{DBLP:journals/combinatorics/Wiechert17}} & 
	$2^t+1$ & \textcolor{blue}{\cite{DBLP:journals/combinatorics/Wiechert17}} & 
	$t+1$ & \textcolor{blue}{\cite{DBLP:journals/siamdm/VandenbusscheWY09}} & 
	$t+1$ & \textcolor{blue}{\cite{DBLP:journals/dam/GanleyH01}} \\
	
	$K_n$ & 
	$\left\lfloor \frac{n}{2} \right\rfloor$ & \textcolor{blue}{\cite{DBLP:journals/siamcomp/HeathR92}} & 
	$\left\lfloor \frac{n}{2} \right\rfloor$ & \textcolor{blue}{\cite{DBLP:journals/siamcomp/HeathR92}} & 
	$\left\lceil \frac{n}{2} \right\rceil$ & \textcolor{blue}{\cite{DBLP:journals/jct/BernhartK79}} & 
	$\left\lceil \frac{n}{2} \right\rceil$ & \textcolor{blue}{\cite{DBLP:journals/jct/BernhartK79}} \\

    $K_{n,n}$ (non-sep.) & 
	$\left\lceil \frac{n}{2} \right\rceil$ & \textcolor{blue}{\cite{DBLP:journals/siamcomp/HeathR92}} & 
	$\left\lceil \frac{n}{2}\right\rceil$ & \textcolor{blue}{\cite{DBLP:journals/siamcomp/HeathR92}} & 
	$\left\lceil \frac{n}{2} \right\rceil$ & \textcolor{blue}{\cite{DBLP:journals/jct/BernhartK79}} &
	$\left\lfloor \frac{2n}{3} \right\rfloor$ +1 & \textcolor{blue}{\cite{DBLP:journals/jct/EnomotoNO97}} \\
	\bottomrule
\end{tabular}
\end{table}

\medskip
\noindent\textbf{Defective linear layouts.} We introduce a relaxation of stack and queue layouts, called \emph{defective}, which allows some forbidden patterns for the edges assigned to the same stack/queue, but to a ``limited extent''. 
As in graph drawing beyond planarity~\cite{DBLP:journals/csur/DidimoLM19,DBLP:books/sp/20/HT2020}, our motivation is to extend the classes of graphs that admit certain types of layouts (in our case linear layouts on a limited number of edge-partition sets), at the cost of allowing some~defects.

Our idea of relaxation is made more 
precise by means of a \emph{conflict graph} for a given linear order of the vertices, namely: each node of the conflict graph corresponds to an edge of the given graph to be laid out, and there is an arc connecting two nodes in the conflict graph if and only if the corresponding edges of the graph define a forbidden pattern in the given linear order when assigned to the same stack/queue. In the research direction of this paper, the goal is to determine a linear order of the vertices of the given graph and a partition of its edges into a minimum number of sets (\emph{defective stacks/queues}), such that
for the edges of each defective stack/queue
the degree of the conflict graph is at most $k$, where $k$ is a natural number defined as part of the input (see \cref{se:basic-definitions} for formal definitions). We refer to such layouts as \emph{$k$-defective}. Note that traditional stack and queue layouts are specializations of the $k$-defective ones to the case $k=0$, that is, when no forbidden patterns are allowed.

To the best of our knowledge, stack and queue layouts with defects have not previously been studied in the literature, although corresponding variants have been extensively studied for several other graph-theoretic problems~\cite{DBLP:journals/combinatorics/AboulkerAH22,DBLP:journals/disopt/AkdemirE15,DBLP:journals/pacmmod/Chang23,DBLP:journals/rsa/GiraoISW24,DBLP:journals/cpc/HendreyW19,DBLP:journals/dm/JingKMX22}. 
For instance, in the classic vertex coloring problem, the variant of the problem in which vertices of the same color are allowed to be adjacent if and only if the corresponding underlying conflict graph has degree at most $k$ has been studied under the term \emph{defective coloring}~\cite{DBLP:journals/jgt/CowenCW86} (where a conflict here corresponds to an edge whose end-vertices are of the same color); refer to~\cite{Wood2018DefectiveClustered} for~a~survey.
 
\medskip\noindent\textbf{Our results.} Inspired by the research on traditional stack and queue layouts, we investigate $k$-defective linear layouts from multiple~perspectives.
Given a graph~$G$ and a positive integer~$k$, the \emph{$k$-defective stack number} (\emph{$k$-defective queue number}, respectively) of~$G$ is the minimum number of defective stacks (defective queues, respectively) required for a $k$-defective layout of $G$ to exist. Our main results are bounds on the edge density and on the defective stack and queue numbers for various types of graphs, which offer a broad view of defective linear layouts across graph classes and parameters.
Moreover, we discuss the connection of our model with defective coloring and provide insights on the algorithmic problem of recognizing graphs that admit linear layouts with defects. 
More precisely, our results are the following:

\begin{itemize}
\item In \cref{se:coloringvslayouts} we study the relation between $k$-defective linear layouts and $k$-defective colorings. We also show that graphs having bounded $k$-defective stack (queue) number also have bounded stack (queue) number.

\item In \cref{se:density} we provide bounds on the edge density of graphs admitting $k$-defective layouts with $h$ defective queues (\Cref{sse:queue-density}) or $h$ defective stacks (\Cref{sse:stack-density}), for different values of $k$ and $h$. See \Cref{tb:edge-density} for a summary of these bounds.

\item In \cref{se:defective-page-number} we give bounds on the $k$-defective queue number (\Cref{sse:queue-defective-page-number}) and on the $k$-defective stack number (\Cref{sse:stack-defective-page-number}) for various graph families, including outer 1-planar graphs, planar graphs, complete graphs, and complete bipartite graphs. See \Cref{tb:queue-stack-number-1,tb:queue-stack-number-k} for a summary of these~bounds.

\item In \cref{se:final} we show that recognizing graphs that admit $k$-defective $1$-stack layouts can be done in quasi-polynomial time and we give some preliminary insights about recognizing graphs that admit $k$-defective $1$-queue layouts. We also provide a summary of our main findings and an array of interesting open research directions that arise from our work.
\item \cref{se:final} concludes our study by providing additional remarks and highlighting some future research directions stemming from our work.
\end{itemize}

\renewcommand{\arraystretch}{1.2}
\setlength{\tabcolsep}{2pt}

\begin{table}[htb]
\caption{Edge density results. The table also reports which bounds are proved to be tight.}
\label{tb:edge-density}
\centering
\begin{tabular}{@{} c|c|cc|cc @{}}
\toprule
  & & \multicolumn{2}{c|}{\textbf{$k$-defective $h$-queue layouts}} & \multicolumn{2}{c}{\textbf{$k$-defective $h$-stack layouts}} \\
\cmidrule(lr){3-4} \cmidrule(l){5-6}
\textbf{$k$} & \textbf{$h$} & \textbf{Upper bound} & \textbf{Tight} & \textbf{Upper bound} & \textbf{Tight} \\
\midrule
$> 1$ & $\geq 1$ & $h(k+2)\left(n -  \frac{h(k+2)+1}{2}\right)$ \textcolor{blue}{[Th.~\ref{th:density-general}]} &  & $O(\sqrt{k}hn)$ \textcolor{blue}{[Co.~\ref{th:1-defective-stack-density}]} & \\
$1$ & $> 1$ & $3h(n - \frac{3h+1}{2})$ \textcolor{blue}{[Th.~\ref{th:density-1page-1defect}]} & $\checkmark$ & $\left(\frac{3}{2}h + 1\right)n - 4h$ \textcolor{blue}{[Co.~\ref{th:1-defective-stack-density}]} & \\
$2$ & $1$ & $\frac{10}{3}n-\frac{21+r}{3}$, $r = n\mod 3$ \textcolor{blue}{[Th.~\ref{th:density-1page-2defect}]} & $\checkmark$ & $3n - 5$ \textcolor{blue}{[Th.~\ref{th:k-defective-1-stack-density}]} & $\checkmark$ \\
$1$ & $1$ & $3n - 6$ \textcolor{blue}{[Th.~\ref{th:density-1page-1defect}]} & $\checkmark$ & $\frac{5}{2}n - 4$ \textcolor{blue}{\cite{DBLP:journals/algorithmica/AuerBBGHNR16,DBLP:journals/ipl/Didimo13}} & $\checkmark$ \\
\bottomrule
\end{tabular}
\end{table}

\begin{table}[htb]
\caption{Queue/stack number results for $k=1$.}
\label{tb:queue-stack-number-1}
\centering
\begin{tabular}{@{} l|c|c|c|c @{}}
\toprule
 & \multicolumn{2}{c}{\textbf{$1$-defective queue number}}  & \multicolumn{2}{c}{\textbf{$1$-defective stack number}} \\
\cmidrule(lr){2-3} \cmidrule(l){4-5}
 \textbf{Graph class} & \textbf{Lower bound} & \textbf{Upper bound} & \textbf{Lower bound} & \textbf{Upper bound} \\
 \midrule
Outerplanar & $2$ \textcolor{blue}{[Le.~\ref{le:outerplanar-defectiveness-1-page}]} & $2$ \textcolor{blue}{\cite{DBLP:journals/siamcomp/HeathR92}} & 1 \textcolor{blue}{\cite{DBLP:journals/jct/BernhartK79}} & 1 \textcolor{blue}{\cite{DBLP:journals/jct/BernhartK79}} \\

Outer 1-planar & $2$ \textcolor{blue}{[Th.~\ref{th:outer-1-planar}]} & $2$ \textcolor{blue}{[Th.~\ref{th:outer-1-planar}]} & 1 \textcolor{blue}{[Th.~\ref{th:charact-k-defective-stack}]} & 1 \textcolor{blue}{[Th.~\ref{th:charact-k-defective-stack}]} \\
Planar & $2$ \textcolor{blue}{[Th.~\ref{th:outer-1-planar}]} & $33$ \textcolor{blue}{[Th.~\ref{th:planar-1-def-queue}]} & $2$ \textcolor{blue}{[Th.~\ref{th:charact-k-defective-stack}]} & $4$ \textcolor{blue}{\cite{DBLP:journals/jctb/Yannakakis20}} \\
$K_n$ & $\left\lceil \frac{n-1}{3} \right\rceil$ \textcolor{blue}{[Co.~\ref{cor:kn-1defect}]} & $\left\lceil \frac{n-1}{3} \right\rceil$ \textcolor{blue}{[Co.~\ref{cor:kn-1defect}]} & $\left\lfloor \frac{n}{3} \right\rfloor - 1$ \textcolor{blue}{[Th.~\ref{th:1-defective-sn-kn}]} & $\left\lceil \frac{n}{3} \right\rceil$ \textcolor{blue}{[Th.~\ref{th:1-defective-sn-kn}]} \\
$K_{n,n}$ (sep.) & $\left\lceil \frac{2n-1}{3} \right\rceil$ \textcolor{blue}{[Co.~\ref{lem:knn-sep-1defect}]} & $\left\lceil \frac{2n-1}{3} \right\rceil$ \textcolor{blue}{[Co.~\ref{lem:knn-sep-1defect}]} & $\left\lceil \frac{n}{2} \right\rceil$ \textcolor{blue}{[Co.~\ref{cor:1-defective-stack-knn-sep}]} & $\left\lceil \frac{2n}{3} \right\rceil$ \textcolor{blue}{[Co.~\ref{cor:1-defective-stack-knn-sep}]} \\
$K_{n,n}$ (non-sep.) & $\left\lceil \frac{n-1}{3} \right\rceil$ \textcolor{blue}{[Co.~\ref{lem:knn-sep-1defect}]} & $\left\lceil \frac{n}{2} \right\rceil$  \textcolor{blue}{\cite{DBLP:journals/siamcomp/HeathR92}} & $\left\lceil \frac{n}{4} \right\rceil$ \textcolor{blue}{[Th.~\ref{thm:1-defective-stack-knn}]} & $\left\lceil \frac{n}{2} \right\rceil$ \textcolor{blue}{[Th.~\ref{thm:1-defective-stack-knn}]} \\
\bottomrule
\end{tabular}
\end{table}

\begin{table}[htb]
\caption{Queue/stack numbers for $k > 1$, where $l_1 = \left\lfloor \frac{3+\sqrt{8k+1}}{2} \right\rfloor$, $l_2 = \left\lfloor \frac{-1 + \sqrt{8k+1}}{2} \right\rfloor$, $l_3 = \sqrt{k} + 1$.}
\label{tb:queue-stack-number-k}
\centering
\begin{tabular}{@{} l|c|c|c|c @{}}
\toprule
& \multicolumn{2}{c}{\textbf{$k$-defective queue number}}  & \multicolumn{2}{c}{\textbf{$k$-defective stack number}} \\
\cmidrule(lr){2-3} \cmidrule(l){4-5}
 \textbf{Graph class} & \textbf{Lower bound} & \textbf{Upper bound} & \textbf{Lower bound} & \textbf{Upper bound} \\
\midrule
Outerplanar & $2$ \textcolor{blue}{[Le.~\ref{le:outerplanar-defectiveness-1-page}]} & $2$ \textcolor{blue}{\cite{DBLP:journals/siamcomp/HeathR92}} & 1 \textcolor{blue}{\cite{DBLP:journals/jct/BernhartK79}} & 1 \textcolor{blue}{\cite{DBLP:journals/jct/BernhartK79}} \\

Outer 1-planar & $2$ \textcolor{blue}{[Th.~\ref{th:outer-1-planar}]}& $2$ \textcolor{blue}{[Th.~\ref{th:outer-1-planar}]} & 1 \textcolor{blue}{[Th.~\ref{th:charact-k-defective-stack}]} & 1 \textcolor{blue}{[Th.~\ref{th:charact-k-defective-stack}]} \\ 
Outer $k$-planar & $2$ \textcolor{blue}{[Th.~\ref{th:outer-1-planar}]} &  & 1 \textcolor{blue}{[Th.~\ref{th:charact-k-defective-stack}]} & 1 \textcolor{blue}{[Th.~\ref{th:charact-k-defective-stack}]} \\
$K_n$ & $\left\lceil \frac{n-1}{k+2} \right\rceil$ \textcolor{blue}{[Th.~\ref{th:kn-Kdefect-lower}]} & $\left\lceil \frac{n-1}{l_1} \right\rceil$ \textcolor{blue}{[Th.~\ref{th:kn-kdefect}]} & $\left\lceil \frac{n}{2k+2} \right\rceil$ \textcolor{blue}{[Th.~\ref{th:k-defective-stack-kn}]} & $\left\lceil \frac{n}{l_2+2} \right\rceil$ \textcolor{blue}{[Th.~\ref{th:k-defective-stack-kn}]} \\
$K_{n,n}$ (sep.) & $\left\lceil \frac{2n-1}{k+2} \right\rceil$ \textcolor{blue}{[Co.~\ref{lem:knn-sep-1defect}]} & $\left\lceil \frac{2n-1}{l_1} \right\rceil$ \textcolor{blue}{[Co.~\ref{lem:knn-sep-1defect}]} & $\left\lceil \frac{n}{k+1} \right\rceil$ \textcolor{blue}{[Co.~\ref{cor:1-defective-stack-knn-nonsep}]} & $\left\lceil \frac{2n}{l_2+2} \right\rceil$ \textcolor{blue}{[Co.~\ref{cor:1-defective-stack-knn-nonsep}]} \\
$K_{n,n}$ (non-sep.) & $\left\lceil \frac{n-1}{k+2} \right\rceil$ \textcolor{blue}{[Co.~\ref{lem:knn-sep-1defect}]} & $\left\lceil\frac{2n-1}{l_1}\right\rceil$ \textcolor{blue}{[Co.~\ref{lem:knn-sep-1defect}]} & $\left\lceil\frac{n}{2k+2}\right\rceil$ \textcolor{blue}{[Th.~\ref{th:k-defective-stack-knn}]} & $\left\lceil \frac{n}{l_3} \right\rceil$ \textcolor{blue}{[Th.~\ref{th:k-defective-stack-knn}]} \\
\bottomrule
\end{tabular}
\end{table}

\section{Basic Definitions}\label{se:basic-definitions}

Let $G$ be an $n$-vertex graph and let  $v_0 \prec \dots \prec v_{n-1}$ be a linear order of its vertices. Two independent edges $(v_i,v_j)$ and $(v_g,v_l)$, with $v_i \prec v_g$, $v_i \prec v_j$ and $v_g \prec v_l$
are in a \emph{crossing relationship} (or \emph{cross}) if 
$v_i \prec v_g \prec v_j \prec v_l$; they are in a \emph{nesting relationship} (or they \emph{nest}) if $v_i \prec v_g \prec v_l \prec v_j$.
When referring to the linear order $\prec$, we say that $v_i$ is \emph{to the left} of $v_j$ (or that $v_j$ is \emph{to the right} of $v_i$) if $v_i\prec v_j$. In other words, we consider~$\prec$ as a left-to-right order. Moreover, we say that an edge $(v_i,v_j)$ is an \emph{$r$-hop edge} or it has \emph{hop-size} $r$, if $r=|i-j|$.

Let $h \geq 1$  and $k \geq 0$ be two integers.
A \emph{$k$-defective $h$-stack ($h$-queue) layout} $\mathcal{L}$ of a graph $G$ consists of a linear order of the vertices of $G$ and a partition of the edges of $G$ into $h$ non-empty sets, called \emph{defective stacks} (\emph{defective queues}), such that every edge is in a crossing (nesting) relationship with at most $k$ other edges of the same defective stack (defective queue); see  \cref{fi:examples-queue-stack-conflict-a,fi:examples-queue-stack-conflict-b,fi:examples-queue-stack-conflict-c,fi:examples-queue-stack-conflict-d}. An equivalent definition in terms of a suitable conflict graph is given in \cref{se:coloringvslayouts}.
As typically done for classic linear layouts, we visually represent defective linear layouts by drawing all vertices as points on a horizontal line, ordered from left to right according to $\prec$, and by drawing the edges as half-circles. 

A set of $k+1$ edges in the same defective stack (queue) such that one of them is in a crossing (nesting) relationship with the other $k$ edges is called a \emph{$k$-defect}.
If $\mathcal{L}$ is a $k$-defective $h$-stack ($h$-queue) layout with $k \ge 1$ and $\mathcal{L}$ is not a $(k-1)$-defective $h$-stack ($h$-queue) layout, then $k$ is the {\em defectiveness} of~$\mathcal{L}$.
For a graph $G$ and for an integer $k \geq 0$, let $h$ be the minimum non-negative integer for which $G$ admits a $k$-defective $h$-stack ($h$-queue) layout. We say that $h$ is the \emph{$k$-defective stack (queue) number} of $G$. The stack (queue) number of $G$ coincides with the $0$-defective stack (queue) number of $G$.
In the following, when talking about a $k$-defective $h$-stack ($h$-queue) layout, we omit the values of $h$ and/or of $k$ if they are not relevant for the context.
Further, since we are mainly interested in $k$-defective stack (queue) layouts for $k \geq 1$, we assume that $k$ is a positive number if not specified otherwise.

\begin{figure}[tb]
    \centering
    \begin{subfigure}{.32\textwidth}
        \centering
        {\includegraphics[page=1,width=\textwidth]{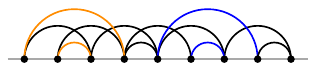}
        \subcaption{\label{fi:examples-queue-stack-conflict-a}}}
    \end{subfigure}
    \hfil
    \begin{subfigure}{.32\textwidth}
        \centering
        {\includegraphics[page=2,width=\textwidth]{figures/examples-queue-stack-conflict.pdf}\subcaption{\label{fi:examples-queue-stack-conflict-c}}}
    \end{subfigure}
    \hfil
    \begin{subfigure}{.32\textwidth}
        \centering
        {\includegraphics[page=3,width=\textwidth]{figures/examples-queue-stack-conflict.pdf}
        \subcaption{\label{fi:examples-queue-stack-conflict-e}}}
    \end{subfigure}
    \hfil
    \begin{subfigure}{.32\textwidth}
        \centering
        {\includegraphics[page=4,width=\textwidth]{figures/examples-queue-stack-conflict.pdf}
        \subcaption{\label{fi:examples-queue-stack-conflict-b}}}
    \end{subfigure}
    \hfil
    \begin{subfigure}{.32\textwidth}
        \centering
        {\includegraphics[page=5,width=\textwidth]{figures/examples-queue-stack-conflict.pdf}
        \subcaption{\label{fi:examples-queue-stack-conflict-d}}}
    \end{subfigure}
    \hfil
    \begin{subfigure}{.32\textwidth}
        \centering
        {\includegraphics[page=6,width=\textwidth]{figures/examples-queue-stack-conflict.pdf}
        \subcaption{\label{fi:examples-queue-stack-conflict-f}}}
    \end{subfigure}
    \hfil
    \caption{
    (a)~A $1$-defective $1$-queue layout; (b)~a $2$-defective $1$-queue layout; (d)~a $1$-defective $1$-stack layout; (e)~a $2$-defective $1$-stack layout.  (c) and (f) The conflict graphs of the layouts in \cref{fi:examples-queue-stack-conflict-c,fi:examples-queue-stack-conflict-d}, respectively. A $k$-defect is represented by edges with the same (non-black) color.}
    \label{fi:configuration}
\end{figure}

\section{Defective Linear Layouts and Defective Coloring}\label{se:coloringvslayouts}

One may observe an interesting relationship between defective stack and queue layouts and defective coloring~\cite{DBLP:journals/jgaa/AngeliniBLD0KMR17}. \emph{Defective coloring} is a variant of the traditional vertex coloring in which adjacent vertices can have the same color, as long as each monochromatic component has a certain structure. Let $\prec$ be a linear order of the vertices of a graph $G$. We define a \emph{conflict graph} $C_{\prec}$ as follows. The nodes of $C_\prec$ correspond to the edges of $G$, and an edge $(u,v)$ exists in $C_\prec$ if and only if the two edges corresponding to $u$ and $v$ form a $1$-defect in $\prec$ (i.e., they are in a crossing/nesting relationship in a defective stack/queue layout);
see~\cref{fi:examples-queue-stack-conflict-e,fi:examples-queue-stack-conflict-f}. By this definition, we immediately obtain the following property.

\begin{property}\label{pr:ch-crossing-graph}
A graph admits a $k$-defective $h$-stack ($h$-queue) layout if and only if there exists a linear order $\prec$ of its vertices whose conflict graph $C_\prec$ admits a defective coloring with $h$ colors such that every monochromatic induced component has vertex-degree at most $k$. 
\end{property}

\noindent As a degree-$k$ graph is $(k+1)$-colorable, the next observation follows from \cref{pr:ch-crossing-graph}.

\begin{observation}\label{obs:bounded-defective}
Every graph that has $k$-defective stack (queue) number $h$ has stack (queue) number at most $(k+1)\cdot h$.
\end{observation}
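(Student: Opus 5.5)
The plan is to keep the same linear order $\prec$ and refine each defective stack (queue) into $k+1$ genuine stacks (queues). Start from a $k$-defective $h$-stack ($h$-queue) layout $\mathcal{L}$ of $G$ with vertex order $\prec$ and edge partition $E_1,\dots,E_h$. By \cref{pr:ch-crossing-graph}, this corresponds to a coloring of the conflict graph $C_\prec$ with $h$ colors in which, for each $i$, the subgraph $C_\prec[E_i]$ induced by the nodes of color $i$ has maximum degree at most $k$.

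The key step is to properly color each $C_\prec[E_i]$ with $k+1$ colors, which the greedy algorithm does for any graph of maximum degree at most $k$. Process the nodes of $C_\prec[E_i]$ in any order and give each node a color from $\{1,\dots,k+1\}$ not used by its already-colored neighbors. At most $k$ colors are blocked, so one is always free. This splits $E_i$ into classes $E_{i,1},\dots,E_{i,k+1}$, each an independent set in $C_\prec$.

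Next, translate back. Two edges of $G$ are adjacent in $C_\prec$ exactly when they cross (nest) with respect to $\prec$. So an independent set of $C_\prec$ is a set of edges no two of which cross (nest), that is, a stack (queue) in the classical sense for the order $\prec$. Edges sharing an endpoint are never in conflict by definition, since the definitions refer to independent edges; this is consistent with the classical notions of stack and queue. The sets $E_{i,j}$ for $1\le i\le h$ and $1\le j\le k+1$ therefore form a $0$-defective layout with at most $(k+1)h$ stacks (queues) on the same order $\prec$.

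Some of the $E_{i,j}$ may be empty, while the definition requires non-empty sets. Discarding the empty classes only lowers the count, so the stack (queue) number of $G$ is at most $(k+1)\cdot h$. The argument has no substantial obstacle; the only point needing care is that the conflict relation used within a defective stack is exactly the one that defines ordinary stacks (queues). This is immediate from the definitions in \cref{se:basic-definitions}.
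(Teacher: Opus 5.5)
Your proposal is correct and follows the paper's argument, which derives the observation directly from \cref{pr:ch-crossing-graph}: a graph of maximum degree at most $k$ is $(k+1)$-colorable, so each defective stack (queue) splits into $k+1$ classical ones on the same order $\prec$. Your version simply makes the greedy coloring and the discarding of empty classes explicit.
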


\Cref{obs:bounded-defective} implies that if a graph has $k$-defective stack (queue) number~$1$, then its stack (queue) number is at most $k+1$. The reverse does not hold in general. For instance, $K_n$ has stack (queue) number $\left\lceil \frac{n}{2} \right\rceil$~\cite{DBLP:journals/jct/BernhartK79} ($\left\lfloor \frac{n}{2} \right\rfloor$~\cite{DBLP:journals/siamcomp/HeathR92}), whereas every defective $1$-stack ($1$-queue) layout of $K_n$ has defectiveness $\frac{(n-2)^2}{4}$ ($\frac{(n-2)(n-3)}{2}$), which is more than $\left\lceil \frac{n}{2} \right\rceil - 1$~($\left\lfloor \frac{n}{2} \right\rfloor - 1$).     
\cref{pr:ch-crossing-graph} has also implications to layouts with a fixed linear order, e.g., if in a fixed linear order $\prec$ each edge is involved in a crossing (nesting) relationship with at most five edges (i.e., the degree of the conflict graph $C_\prec$ is at most $5$), then there is a $1$-defective $3$-stack ($3$-queue) layout respecting $\prec$~\cite{Lo66}. More in general, given a graph $G$ with a fixed vertex order $\prec$, testing whether $G$ admits a $k$-defective $h$-stack ($h$-queue) layout respecting $\prec$ is equivalent to testing whether the corresponding conflict graph $C_\prec$ admits a defective coloring with $h$ colors such that every monochromatic induced component has vertex-degree at most~$k$. The latter testing problem is NP-complete for general graphs even for $k=1$ and $h=2$~\cite{DBLP:journals/jgt/CowenGJ97}. However, this result does not imply NP-completeness for the problem of testing if a graph has a $k$-defective $h$-stack ($h$-queue) layout with a fixed linear order $\prec$ (e.g., for $k=1$ and $h=2$), as not every graph corresponds to the conflict graph of a linear order. This is proven in \cref{lem:crossing-graph-observation}, where the operation of \emph{stacking} a degree-$2$ vertex on an edge $(u,v)$ introduces a new vertex $w$ into a graph along with the edges $(u,w)$ and $(w,v)$.

\begin{lemma}\label{lem:crossing-graph-observation}
For $n \geq 5$, the graph $G_n$ obtained from the complete graph $K_n$ by stacking a degree-$2$ vertex on each of its edges does not express the crossing relationships of a set of $\frac{n(n+1)}{2}$ edges.    
\end{lemma}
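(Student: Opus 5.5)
The plan is to read the statement as saying that $G_n$ is not the conflict graph $C_\prec$ of any set of $\frac{n(n+1)}{2}$ edges on a line. Its vertex count is $n+\binom{n}{2}=\frac{n(n+1)}{2}$, and the relevant conflicts are crossings, so this is a circle-graph-type statement. I will argue by contradiction. Suppose some set of edges in a linear order realises $G_n$. Let $e_1,\dots,e_n$ be the edges playing the role of the original vertices of $K_n$, and let $f_{ij}$ be the edge playing the role of the vertex stacked on $(i,j)$. Then the following hold:
\begin{itemize}
\item the $e_i$'s pairwise do not cross;
\item $f_{ij}$ crosses exactly $e_i$ and $e_j$;
\item two distinct $f$'s never cross.
\end{itemize}

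Step 1 (tree structure). I will close the line into a circle, so that edges become chords of a disk and crossing means interleaved endpoints. The pairwise non-crossing chords $e_1,\dots,e_n$ cut the disk into faces. The dual of this subdivision is a tree $T$, whose nodes are the faces and whose edges correspond to the chords $e_i$. A chord $f$ whose endpoints lie in the interiors of faces $A$ and $B$ crosses exactly the $e_i$ on the $A$--$B$ path of $T$. Since $f_{ij}$ crosses exactly $e_i$ and $e_j$, this path has length two, so the tree edges $e_i$ and $e_j$ share a node. Thus the $n$ edges of $T$ pairwise share a node. A tree has no triangles, so for $n\ge 3$ they form a star: all $e_i$ bound one common face $F$, and each $f_{ij}$ runs through $F$ from the face behind $e_i$ to the face behind $e_j$.

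Step 2 (forced crossing). The chords $e_1,\dots,e_n$ appear in some cyclic order along the boundary of $F$; rename them so this order is $e_1,e_2,e_3,e_4,\dots$. The endpoints of $f_{13}$ lie beyond $e_1$ and beyond $e_3$, and those of $f_{24}$ lie beyond $e_2$ and beyond $e_4$. These four endpoint regions are separated from each other by the arcs of $\partial F$, and they alternate around the circle. Hence $f_{13}$ and $f_{24}$ interleave and are independent, so they cross. But the two corresponding vertices of $G_n$ are non-adjacent, which is a contradiction.

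The main obstacle is the degenerate situations that the definition allows. Crossing is only defined for independent edges, so an $f$ sharing an endpoint with some $e$ does not count as crossing it, and several $e_i$ may share endpoints. I will handle this by a perturbation or normalisation argument. For each $f_{ij}$, an endpoint coinciding with an endpoint of some $e_l$ can be moved slightly to the side that keeps it from interleaving with $e_l$; this preserves exactly the crossing pattern among all edges, and similarly for coincident endpoints of the $e$'s. One must check that such a choice is simultaneously consistent for all coincidences. I expect the extra room given by $n\ge 5$ is what makes this bookkeeping work: we can pick four of the $e_i$ whose $f$'s avoid the problematic shared endpoints, or we can use a fifth chord to rule out the configurations in which the two $f$'s would meet only at a common vertex. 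Once the realisation is in general position, Steps 1 and 2 apply verbatim.
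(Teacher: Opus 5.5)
Your proof addresses a different graph from the one in the lemma, so there is a genuine gap. Stacking $w$ on $(u,v)$ adds $w$ with the edges $(u,w)$ and $(w,v)$ but \emph{keeps} $(u,v)$. So in $G_n$ the $n$ original vertices still form a clique, and each stacked vertex closes a triangle with an edge of it. Consequently the edges $e_1,\dots,e_n$ representing the clique must pairwise cross, i.e., form an $n$-twist. This is exactly where the paper's proof starts, and it is the opposite of your first bullet.

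What you analyse is the $1$-subdivision of $K_n$. Your Steps~1 and~2 (dual tree of a non-crossing chord family, common face, forced crossing of $f_{13}$ and $f_{24}$) are a sensible argument for that graph, but they do not apply to $G_n$. Nothing transfers by heredity either: the subdivision is not an induced subgraph of $G_n$. Its branch vertices have degree $n-1\ge 3$, so they would have to map to clique vertices of $G_n$, and those are pairwise adjacent.

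The missing idea is to use the twist. Crossing edges are independent, so the $e_i$ have $2n$ distinct endpoints. After renaming, they appear as $a_1\prec\dots\prec a_n\prec b_1\prec\dots\prec b_n$ with $e_i=(a_i,b_i)$; read cyclically, each $e_i$ joins antipodal positions. Take a further edge whose endpoints avoid these $2n$ points. It crosses exactly $s$ twist edges, where $s\le n$ is the number of twist endpoints on its smaller side. Hence $f_{ij}$ must cut off exactly two cyclically consecutive twist endpoints, which forces $i$ and $j$ to be consecutive modulo $n$ in the twist order. If $f_{ij}$ shares an endpoint with some $e_k$, it simply does not cross $e_k$, and the same count on the smaller side gives the same conclusion.

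Only $n$ such consecutive pairs exist, yet all $\frac{n(n-1)}{2}$ pairs are required, a contradiction for every $n\ge 4$. The paper counts ``at most $2n$ consecutive endpoints'', which at $n=5$ only gives $10\le 10$. Counting distinct pairs instead (positions $p$ and $p+n$ yield the same pair) gives $n$ and makes the inequality strict.
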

\begin{proof}
Assume for a contradiction that graph $G_n$ expresses the crossing relationships of a set of $\frac{n(n+1)}{2}$ edges in a linear layout $\mathcal{L}$. It follows that the $n$ edges $e_0,\ldots,e_{n-1}$ that correspond to $K_n$ in $G_n$ form an $n$-twist in $\mathcal{L}$, that is, they pairwise cross. Without loss of generality, assume that the left endpoint of $e_i$ precedes the left endpoint of $e_j$ when $i < j$. Each degree-$2$ vertex $v_{ij}$ in $G_n \setminus K_n$ corresponds to an edge $e_{ij}$ in $\mathcal{L}$ that is involved in a crossing situation with exactly two edges of $\mathcal{L}$, say $e_i$ and $e_j$, where $i,j\in\{0,\ldots,n-1\}$. Since the edges $e_0,\ldots,e_{n-1}$ form an $n$-twist, it follows that the endpoints of $e_i$ and $e_j$ must be consecutive in $\mathcal{L}$, as otherwise $e_{ij}$ is inevitably involved in a crossing situation with more than two edges; a contradiction to the degree of $v_{ij}$ in $G_n$. Since there exist at most $2n$ consecutive endpoints for the edges $e_0,\ldots,e_{n-1}$, while the degree-$2$ vertices of the conflict graph $G_n$ are $\frac{n(n-1)}{2}$, this yields a contradiction when $n \geq 5$. Hence, $G_n$ does not express the crossing relationships of any set of $\frac{n(n+1)}{2}$ edges, as desired.
\end{proof}

\section{Edge Density}\label{se:density}
In this section we give bounds on the density of graphs admitting $k$-defective $h$-queue layouts and $k$-defective $h$-stack layouts, for different values of $k$ and $h$.

\subsection{Defective Queue Layouts}\label{sse:queue-density}

Let $v_0 \prec \dots \prec v_{n-1}$ be a linear order of the vertices of a graph $G$ and let $K_n$ be the complete graph on the same number of vertices. Partition the set of edges of $K_n$ into $n-1$ \emph{classes} $C'_1, \dots, C'_{n-1}$; two edges $(v_{i_1},v_{j_1})$ and $(v_{i_2},v_{j_2})$, with $0 \leq i_1,j_1,i_2,j_2 \leq n-1$, belong~to~the same class if $\left\lfloor\frac{i_1+j_1}{2} \right\rfloor=\left\lfloor\frac{i_2+j_2}{2}\right\rfloor$. By definition, each class contains a different number of edges ranging from $1$ to $n-1$; the class $C'_i$ is that containing $i$ edges (see \cref{fi:alternating-a}). Now define classes $C_1, \dots, C_{n-1}$, each containing a subset (possibly empty) of distinct edges of $G$. Namely, $C_i$ is the restriction of $C'_i$ to the edges of $G$ (see \cref{fi:alternating-b}). Clearly, $C_i$ contains at most $i$ edges. Also, if $C_i$ contains exactly $i$ edges, these edges form a path.

\begin{figure}[h]
    \centering
    \begin{subfigure}{.49\textwidth}
        \centering
        {\includegraphics[page=6,width=.9\textwidth]{figures/alternating.pdf}
        \subcaption{\label{fi:alternating-a}}}
    \end{subfigure}
    \hfill
    \begin{subfigure}{.49\textwidth}
        \centering
        {\includegraphics[page=7,width=.9\textwidth]{figures/alternating.pdf}
        \subcaption{\label{fi:alternating-b}}}
    \end{subfigure}
    \caption{(a) Partition of the edges of $K_{10}$ into $9$ classes.
    (b) Partition of the edges of a subgraph~$G$ of $K_{10}$ into two non-empty classes. The turquoise edges belong to $C_9$; the black edges belong to~$C_6$.}
    \label{fi:alternating}
\end{figure}

\begin{lemma}\label{le:alternating}
    Let $\mathcal{L}$ be a $k$-defective $h$-queue layout  of a graph $G$. Every defective queue of $\mathcal{L}$ contains at most $k+2$ edges of each class.  
\end{lemma}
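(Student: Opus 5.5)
The plan is to show that any two edges of the same class either share an endpoint or nest, and that sharing an endpoint is rare. Then, inside a single defective queue, the outermost edge of a class must nest with almost every other edge of that class, which forces the bound.

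First, describe the classes explicitly. An edge $(v_i,v_j)$ with $i<j$ lies in the class indexed by $s=\lfloor (i+j)/2\rfloor$, so $i+j\in\{2s,2s+1\}$. Hence the edges of $C'_i$ have the form $(v_{s-t},v_{s+t})$ with $t\ge 1$, or $(v_{s-t},v_{s+1+t})$ with $t\ge 0$. Order them by increasing span:
\[(v_s,v_{s+1}),\ (v_{s-1},v_{s+1}),\ (v_{s-1},v_{s+2}),\ (v_{s-2},v_{s+2}),\ \dots\]
Each edge in this sequence is obtained from the previous one by moving exactly one endpoint outward by one position. Two facts follow:
\begin{itemize}
\item consecutive edges in the sequence share an endpoint, which is the path structure already noted in the text;
\item two non-consecutive edges have all four endpoints distinct, and the wider one strictly contains the narrower one, so they are independent and nest.
\end{itemize}
The second fact holds because, between them, both the left endpoint has moved strictly left and the right endpoint has moved strictly right. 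I would verify this carefully from the two parametrizations; it is the only technical point.

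Next, fix a defective queue $Q$ of $\mathcal{L}$ and a class $C_i$. Suppose $Q$ contains $m\ge 2$ edges of $C_i$, and let $e$ be the one with the largest span. Since $C_i\subseteq C'_i$, every other such edge precedes $e$ in the sequence above. At most one of them, namely the immediate predecessor of $e$ in $C'_i$, shares an endpoint with $e$. Every other one is independent of $e$ and nested inside it. Therefore $e$ is in a nesting relationship with at least $m-2$ edges of $Q$. The $k$-defective condition gives $m-2\le k$, that is, $m\le k+2$, as claimed.

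The main obstacle is the structural claim about non-consecutive edges of a class. One has to check the mixed parity cases, where one edge has $i+j$ even and the other $i+j$ odd, and confirm that both endpoints strictly separate. Once that is in place, the counting argument is immediate.
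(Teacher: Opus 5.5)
Your proof is correct and follows the same route as the paper. In both, you take the longest edge of the class inside the defective queue and note that it nests with every other such edge except at most the one sharing an endpoint with it, which gives $m-2\le k$. The paper does this in one line, and you additionally justify the structural fact it leaves implicit: endpoints move outward alternately, so non-consecutive edges of a class are independent and nested.
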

\begin{proof}
    If a defective queue contained a set $E^*$ of more than $k+2$ edges of a class $C_i$, for $1 \leq i \leq n-1$, then the longest edge would be in a nesting relationship with all the edges in $E^*$ except at most one, and the defectiveness of $\mathcal{L}$ would be at least $k+1$.  
\end{proof}

\noindent We obtain the following upper bound on the edge density of $k$-defective $h$-queue layouts by using a technique similar to~\cite{DBLP:journals/jgaa/BoseCMW04,DBLP:journals/dmtcs/DujmovicW04}.

\begin{theorem}\label{th:density-general}
Any $n$-vertex graph that admits a $k$-defective $h$-queue layout has at most $h(k+2)\left(n -  \frac{h(k+2)+1}{2}\right)$ edges, for $n \geq h(k+2) + 1$.
\end{theorem}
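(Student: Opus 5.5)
The plan is to count edges class by class, using \cref{le:alternating}. Fix a $k$-defective $h$-queue layout $\mathcal{L}$ of $G$ with vertex order $v_0 \prec \dots \prec v_{n-1}$. Partition $E(G)$ into the classes $C_1,\dots,C_{n-1}$ defined above; $C_i$ is the restriction of $C'_i$, so $|C_i| \le i$.

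Each defective queue contains at most $k+2$ edges of $C_i$ by \cref{le:alternating}. Summing over the $h$ queues gives
\[
|C_i| \le \min\{i,\; h(k+2)\}.
\]
Hence
\[
|E(G)| \le \sum_{i=1}^{n-1} \min\{i, h(k+2)\}.
\]

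Set $t = h(k+2)$. The hypothesis $n \ge t+1$ gives $t \le n-1$, so the sum splits into two parts. The indices $i \le t$ contribute $\sum_{i=1}^{t} i = \frac{t(t+1)}{2}$. The remaining $n-1-t$ indices contribute at most $t$ each. The total is therefore
\[
\frac{t(t+1)}{2} + t(n-1-t) = t\left(n - \frac{t+1}{2}\right),
\]
which is exactly the claimed bound $h(k+2)\left(n - \frac{h(k+2)+1}{2}\right)$.

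The only step needing care is the fact that class $C'_i$ has exactly $i$ edges, giving exactly one class of each size $1,\dots,n-1$. I would verify it directly: the edges with $\lfloor (a+b)/2 \rfloor = s$, where $a<b$ and $a+b \in \{2s, 2s+1\}$, number $\min\{s, n-1-s\} + \min\{s+1, n-1-s\}$ roughly. These counts, over $s = 0,\dots,n-2$, realise each value $1,\dots,n-1$ once. The paper states this as "by definition", so I would take it as given and at most sketch it. Beyond that, the argument is a direct summation with no real obstacle.
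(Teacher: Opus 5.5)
Your proof is correct and is essentially the paper's argument: the paper splits the classes into small ones ($i \le h(k+2)$) and large ones, and bounds each large class by $h(k+2)$ edges via \cref{le:alternating}. That is exactly your per-class bound $|C_i| \le \min\{i, h(k+2)\}$ summed over $i$, and your count $\min\{s,n-1-s\}+\min\{s+1,n-1-s\}$ for the class with $\lfloor (a+b)/2\rfloor = s$ is in fact exact, giving each size $1,\dots,n-1$ exactly once.
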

\begin{proof}
Let $\mathcal{L}$ be a $k$-defective $h$-queue layout of a graph $G$ with $n$ vertices and $m$ edges. A class $C_i$ of edges of $G$ is \emph{small} if $i \leq h(k+2)$, it is said to be \emph{large} otherwise.
Observe that, since $C_i$ has at most $i$ edges, a small class has at most $h(k+2)$ edges.
Let $n_l$ be the number of large classes and let $m_s$ be the total number of edges of $G$ that belong to a small class. We claim that $m \leq n_lh(k+2)+m_s$. Assume for a contradiction that $m > n_lh(k+2)+m_s$. Then $m - m_s> n_lh(k+2)$; that is, the number of edges from large classes is more than $n_lh(k+2)$. This implies that there is at least one class with more than $h(k+2)$ edges, which in turn implies that there is one defective queue of $\mathcal{L}$ with more than $k+2$ edges from this class, thus contradicting \Cref{le:alternating}.

Since each class $C_i$ has at most $i$ edges, we have $n_l=n-1-h(k+2)$.
Note that since $n \ge h(k+2) + 1$, we have $n_l \ge 0$.
Further, $m_s \leq \sum_{i=1}^{h(k+2)}i=\frac{h(k+2)(h(k+2)+1)}{2}$. 
It follows that $m \leq (n-1-h(k+2))h(k+2)+\frac{h(k+2)(h(k+2)+1)}{2}=h(k+2)\left(n -\frac{h(k+2)+1}{2}\right)$.
Observe that $h(k+2)\left(n -\frac{h(k+2)+1}{2}\right) \leq \frac{n(n-1)}{2}$ for $n \geq h(k+2) + 1$.
\end{proof}

For $k=0$ and any $h\geq 1$, the bound of \cref{th:density-general} is tight~\cite{DBLP:journals/dmtcs/DujmovicW04,DBLP:journals/siamcomp/HeathR92}.
The next theorem shows that it is also tight for $k=1$ and any $h\ge 1$.

\begin{theorem}\label{th:density-1page-1defect}
    Any $n$-vertex graph that admits a $1$-defective $h$-queue layout has at most $3h(n-\frac{3h+1}{2})$ edges, for $n \geq 3h+1$, and this bound is tight.
\end{theorem}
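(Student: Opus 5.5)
The upper bound is \cref{th:density-general} with $k=1$: then $h(k+2)=3h$ and the bound becomes $3h\left(n-\frac{3h+1}{2}\right)$, valid for $n\ge 3h+1$. So all the work is in tightness. For every $h\ge 1$ and every $n\ge 3h+1$ I will exhibit an $n$-vertex graph with a $1$-defective $h$-queue layout meeting the bound.

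The proof of \cref{th:density-general} shows when equality can hold. Every small class $C_i$ with $i\le 3h$ must be full, that is, it contains all edges of $C'_i$. Every large class must contain exactly $3h$ edges, and each queue must receive exactly $3$ edges of each large class. I therefore fix the order $v_0\prec\dots\prec v_{n-1}$ and build $G$ as follows.
\begin{itemize}
\item From each large class $C'_i$, which is a ``path'' of edges sharing (almost) the same midpoint $\lfloor (i_1+j_1)/2\rfloor$, take the $3h$ innermost edges, i.e., those of smallest hop-size.
\item Include all edges of the small classes.
\end{itemize}
Equivalently, $G$ consists of all edges of hop-size at most $t$ for a suitable threshold $t$, with the two midpoint parities handled separately. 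The count then matches the computation in \cref{th:density-general} exactly.

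Next comes the assignment to queues, in direct analogy with the $k=0$ construction of Heath and Rosenberg, where queue $q$ takes hop-sizes $2q-1$ and $2q$. Within a class, the innermost edges form a chain of nested edges, nested in order of increasing hop-size. Define the \emph{rank} of an edge as its position in this nesting chain, counted from the innermost. I assign ranks $3q-2$, $3q-1$, $3q$ to queue $q$, for $q=1,\dots,h$. This amounts to giving queue $q$ three consecutive hop-size levels; for instance, with classes indexed by $\lfloor (i+j)/2\rfloor$, queue $q$ receives hop-sizes $3q-2$, $3q-1$, $3q$.

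Within a class, each queue's three edges are consecutive in the nesting chain. The middle edge therefore nests the inner one and is nested by the outer one, so it has 2 conflicts. This is exactly the configuration that \cref{le:alternating} permits as the extreme case for $k=1$: the longest of three edges nests all but at most one. So the assignment has to be tuned so that within each class the three edges form two nesting pairs plus one non-nesting pair. Whether a pair shares an endpoint (and hence is not independent) or merely crosses depends on parity.

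This is the main obstacle: verifying that, in each queue, every edge nests at most one other edge of that queue, both inside its own class and across classes. Across classes it is easy. Two edges whose hop-sizes differ by at most $2$ nest only when their midpoints are close, and hop-sizes in the same queue differ by at most $2$. I would handle it by case analysis on hop-size differences $0$, $1$ and $2$. Hop-size difference $0$ never nests. For difference $1$, one endpoint must be shared, or else the shorter edge is strictly inside, which is impossible for independent edges. For difference $2$, nesting requires that the shorter edge be exactly centered inside the longer one.

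If the consecutive-hop assignment gives some edge two such partners, I would first try a different grouping: for example, pairing hop-sizes $\{2q-1,2q\}$ in the $k=0$ style, and adding a third edge per class from a hop-size at which each edge has exactly one centered partner. Failing that, I would fall back on checking small cases ($h=1$, $n=4,5$ gives $6$ and $9$ edges) to guide the explicit pattern and then generalize.
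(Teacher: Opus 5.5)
Your graph and queue assignment are exactly the paper's. Taking the $3h$ innermost edges of each class gives all edges of hop-size at most $3h$. No parity bookkeeping is needed, because each class contains exactly one edge of each hop-size $1,2,\dots$ up to its size. Queue $q$ gets hop-sizes $3q-2,3q-1,3q$, and your count $\sum_{q=1}^{h}(3n-9q+3)=3h\left(n-\frac{3h+1}{2}\right)$ is right.

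\textbf{The gap.} It lies in the one step that needs proof: that each edge is in at most one nesting within its queue. Your within-class analysis there is wrong. The class of edges $(v_i,v_j)$ with $\lfloor (i+j)/2\rfloor=m$ is the path $v_m,v_{m+1},v_{m-1},v_{m+2},v_{m-2},\dots$. So two edges of a class with consecutive hop-sizes share an endpoint and cannot nest. In fact, edges of one class never cross: they either share an endpoint or nest.

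Hence the middle edge of a queue's triple has \emph{no} conflict inside its class, and the triple has exactly one nesting pair, outer with inner. Your claim that the middle edge has two conflicts would, if true, refute your own construction. Your target of ``two nesting pairs plus one non-nesting pair'' would always give some edge two conflicts. The correct target is one nesting pair plus two endpoint-sharing pairs, and the consecutive-hop assignment gives this automatically. As written, the proposal leaves $1$-defectiveness unverified and hedges with alternative groupings, so it does not establish tightness.

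\textbf{The repair.} Use the hop-difference computation you sketched for ``across classes''; it covers every pair. For $d_1>d_2$, the edge $(v_p,v_{p+d_1})$ nests $(v_{p'},v_{p'+d_2})$ if and only if $p<p'<p+d_1-d_2$. Inside queue $q$ this is impossible when $d_1-d_2\in\{0,1\}$, and when $d_1-d_2=2$ it forces $p'=p+1$. So:
\begin{itemize}
\item each $3q$-hop edge $(v_p,v_{p+3q})$ nests exactly $(v_{p+1},v_{p+3q-1})$;
\item each $(3q-2)$-hop edge is nested by at most one $3q$-hop edge;
\item $(3q-1)$-hop edges are in no nesting at all.
\end{itemize}
This is precisely the paper's one-line verification. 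All these pairs lie in a single class, so cross-class pairs never conflict. You should also check that every queue is non-empty, as the definition requires. Queue $q$ contains $n-3q+2\geq 3$ edges of hop-size $3q-2$ when $n\geq 3h+1$.
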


\begin{proof}
    The upper bound follows from \Cref{th:density-general}.
    To prove that the bound is tight, for every $n \geq 3h+1$ we construct~an~$n$-vertex graph $G$ with $3h\left(n-\frac{3h+1}{2}\right)$ edges admitting a $1$-defective $h$-queue layout $\mathcal{L}$. We describe $G$ by means of its layouts $\cal L$; see \cref{fi:density-1-page-h-defective}. 

\begin{figure}[h]
    \centering
    {\includegraphics[page=3, width=.35\textwidth]{figures/density-1-page-h-defective.pdf}}
    \caption{A $1$-defective $2$-queue layout of a graph
    with 7 vertices and 21 edges.
    }\label{fi:density-1-page-h-defective}
\end{figure}
    
    Let $v_0 \prec v_1 \prec \dots \prec v_{n-1}$ be the linear order of $\mathcal{L}$. For $i = 1,2,\dots,h$, add to the $i$-th defective queue of $\mathcal{L}$, all the $(3(i-1)+1)$-hop edges, the $(3(i-1)+2)$-hop edges, and the $(3(i-1)+3)$-hop edges. 
    Observe that since $n \geq 3h+1$, each defective queue is non-empty. Indeed, the number of edges in the last defective queue is $n-(3(h-1)+1)+n-(3(h-1)+2)+n-(3(h-1)+3) = 3n-9(h-1)-6 \geq 3(3h+1)-9(h-1)-6 = 6$.
    
    The constructed layout is a $1$-defective $h$-queue layout. Namely, each $(3(i-1)+1)$-hop edge is in a nesting relationship with at most one $(3(i-1)+3)$-hop edge, and vice versa. Since the number of $j$-hop edges is $n-j$, with $j\in\{1,2,\dots,n-1\}$, the total number of edges of~$G$ is 
    $\sum_{i=1}^{h}(3n-9(i-1)-6)=3h\left(n-\frac{3h+1}{2}\right)$. 
\end{proof}

For $k \geq 2$, the bound of \cref{th:density-general} is not tight. In particular, for $k=2$ and $h=1$, in the rest of this section, we seek to prove the following.

\begin{theorem}
\label{th:density-1page-2defect}
Any $n$-vertex graph that admits a $2$-defective $1$-queue layout has at most $\frac{10}{3}n-\frac{21+r}{3}$ edges, where $n \geq 3$ and $r=n\mod 3$. This bound is tight for every $n=3s+r$ and $s\in \mathbb{N}^+$.
\end{theorem}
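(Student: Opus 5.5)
The plan is to refine the class-counting argument behind \cref{th:density-general}. For $k=2$, $h=1$, \cref{le:alternating} only gives at most $4$ edges per class, which yields roughly $4n$. To reach $\frac{10}{3}n$ I will show that, on average, only $\frac{10}{3}$ edges per class survive.

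Index the classes by their ``midpoint'' $c=\lfloor (i+j)/2\rfloor$ rather than by size. The classes with consecutive midpoints $c, c+1, c+2,\dots$ are then geometrically adjacent. Within one class the edges form a path $e_1,e_2,\dots$ ordered by length, in which consecutive edges share an endpoint and non-consecutive ones nest. Hence a class containing $4$ edges has its longest edge $e_4$ already nesting $e_1$ and $e_2$. Since the queue is $2$-defective, $e_4$ cannot nest any further edge.

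\textbf{Step 1 (local lemma).} Show that any three classes with consecutive midpoints $c,c+1,c+2$ contain at most $10$ edges of $G$ in total. Equivalently, the configuration $4,4,\ast$ is impossible and $4,3,4$ is impossible. I would argue by case analysis on the relative hop-sizes.
\begin{itemize}
\item If two neighbouring classes both have $4$ edges, their longest edges have lengths within one of each other and the same center up to $1/2$. Then the longest edge of one of them nests the second- or third-longest edge of the other, or the short edges of the other class, pushing its conflict degree to at least $3$.
\item Similarly, in the pattern $4,3,4$ the middle class's edges are trapped under the long edge of one of the outer classes.
\end{itemize}
I will make this precise by writing the edges of class $c$ as $(c-t,c+t)$ and $(c-t,c+t+1)$ and listing which pairs nest. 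I expect this case analysis to be the main obstacle. The delicate parts are getting the exact boundary statement (windows of three, not two) and handling classes whose paths have gaps because $G$ omits some edges of $K_n$. If windows of three do not suffice, I would try a discharging argument: each $4$-edge class sends charge $\frac{2}{3}$ to its deficient neighbours.

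\textbf{Step 2 (summation).} Partition the $2n-3$ possible midpoints into consecutive triples and sum the local bound. The first and last few classes are small (class sizes near the ends are $1,1,2,2,\dots$ since only short edges fit), so handle them exactly. This boundary accounting should produce the constant $-\frac{21+r}{3}$, with the residue $r=n\bmod 3$ arising from the leftover classes that do not fill a complete triple. I will sanity-check the result against $n=3,4$ (where the bound gives $3$ and $6$, i.e.\ $K_3$ and $K_4$) and against $n=5$ (bound $9$).

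\textbf{Step 3 (tightness).} For $n=3s+r$, build a layout on $v_0\prec\dots\prec v_{n-1}$ whose classes realize the periodic pattern $4,3,3,4,3,3,\dots$ permitted by Step 1. Concretely, take all $1$- and $2$-hop edges, and add $3$-hop and $4$-hop edges only at midpoints in a fixed residue class modulo $3$, chosen so that each added long edge nests at most two edges and is nested by at most two. Verify the defect condition by checking the few nesting pairs locally. Finally, count the edges and compare with the claimed formula, adjusting the pattern near the two ends to match the boundary term.
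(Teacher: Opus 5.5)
\textbf{Step 3 fails.} If all $1$-hop and $2$-hop edges are present, then every edge of hop-size at least $4$, e.g.\ $(v_t,v_{t+4})$, nests the three edges $(v_{t+1},v_{t+2})$, $(v_{t+2},v_{t+3})$ and $(v_{t+1},v_{t+3})$. So no $4$-hop edge can be added, and no class ever reaches size $4$. The best you can get is all $1$-, $2$- and $3$-hop edges, i.e.\ $3n-6$ edges, which is below $\frac{10}{3}n-\frac{21+r}{3}$ for every $n\ge 6$ (e.g.\ $21<23$ for $n=9$). The roles must be reversed, as in the paper. Keep all $2$- and $3$-hop edges, and thin out the $1$-hop and $4$-hop edges by deleting $(v_t,v_{t+1})$ and $(v_t,v_{t+4})$ for $t\equiv 2\pmod 3$, keeping the last $1$-hop edge. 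Then each $4$-hop edge nests exactly one $1$-hop and one $2$-hop edge, and each $1$-hop edge is nested by one $3$-hop and one $4$-hop edge. The classes then follow your pattern $4,3,3,4,3,3,\dots$.

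\textbf{The upper bound has two gaps.} Your Step~1 claim is true, but you have not proved it, and Step~2 does not match it. You index classes by $\lfloor (i+j)/2\rfloor$. There are $n-1$ such classes, with sizes $1,3,5,\dots$ at the left end and $\dots,6,4,2$ at the right end, capped at $4$. Yet you then partition ``$2n-3$ midpoints'' and quote end sizes $1,1,2,2$; those are the sizes of the diagonals $i+j=s$, not of your classes. With the correct class sizes, the window bound alone is too weak. For $n=5$ the sizes $1,3,4,2$ satisfy every window constraint and give $10>9$. More generally, for every $n\equiv 2\pmod 3$ the count overshoots by one (e.g.\ $1,3,4,3,3,4,2$ for $n=8$). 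You would need an extra boundary fact, for instance that a full class~$1$ and a $4$-edge class~$2$ cannot coexist, because the longest edge of class~$2$ would also nest $(v_1,v_2)$.

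\textbf{A clean repair: work on diagonals throughout.} Two edges on diagonals $s,s'$ nest if and only if their hop-sizes differ by at least $|s-s'|+2$. Three consequences follow.
\begin{itemize}
\item Each diagonal is a clique of the conflict graph, so it has at most $3$ edges.
\item A diagonal with $3$ edges forces both neighbouring diagonals to be empty, and the diagonals at distance $2$ to have at most one edge.
\item Three consecutive diagonals cannot carry two edges each. Every such edge has room for only one external conflict. Two adjacent diagonals with two edges each always conflict at least once, so there is exactly one conflict on each side of the middle diagonal. This forces hop-sizes $\{x,x+2\},\{x+1,x+3\},\{x+2,x+4\}$ or the mirror image. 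Then the two outer diagonals conflict at an edge that already has two conflicts.
\end{itemize}
Hence any three consecutive diagonals carry at most $5$ edges. For $n\ge 4$, sum over the $2n-7$ diagonals $3\le s\le 2n-5$ in triples, taking the one or two leftover diagonals among $s=3,4$ (at most $2$ edges each). Adding the four end diagonals with one edge each gives exactly $\frac{10}{3}n-\frac{21+r}{3}$. This also implies your class-level Step~1.

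This route is considerably more elementary than the paper's. The paper normalizes the $M$-diagram (non-crossing frontiers, used valleys, rectangular polygons) and then bounds the weight of an Italian coloring of the $(2n-3)$-point frontier.
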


To prove \cref{th:density-1page-2defect}, we need additional definitions and lemmas.
Let $\Gamma$ be a linear layout with linear order $v_0 \prec \dots \prec v_{n-1}$; $\Gamma$ can be described by means of its \emph{$M$-diagram $M(\Gamma)$}, which is a (half) matrix of points such that: (i) there is a point with coordinates $(i,j)$ for every $i \in \{0,1\dots,n-2\}$ and every $j \in \{1,2,\dots,n-1\}$; (ii) point $(i,j)$ is colored if and only if the edge $(v_i,v_j)$ is an edge of $\Gamma$. A point of $M(\Gamma)$ is \emph{used} if the corresponding edge exists in $\Gamma$, it is \emph{unused} otherwise; see \cref{fi:M-diagram-a}. 
The \emph{lower-right wedge} of a point $(i,j)$ of $M(\Gamma)$ is the set of points $(i',j')$ with $i' > i$ and $j' < j$. The \emph{closed lower-right wedge} of $(i,j)$ is the set of points $(i',j')$ with $i'\geq i$ and $j' \leq j$.
The \emph{upper-left wedge} of a point $(i,j)$ of $M(\Gamma)$ is the set of points $(i',j')$ with $i'< i$ and $j' > j$. The \emph{closed upper-left wedge} of $(i,j)$ is the set of points $(i',j')$ with $i'\leq i$ and $j' \geq j$.
An edge $(v_i,v_j)$ has a conflict with all edges $(v_{i'},v_{j'})$ such that $(i',j')$ lies in its upper-left wedge or in its lower-right wedge of $(i,j)$ in $M(\Gamma)$.

\begin{figure}[tb]
    \centering
    {\includegraphics[page=1]{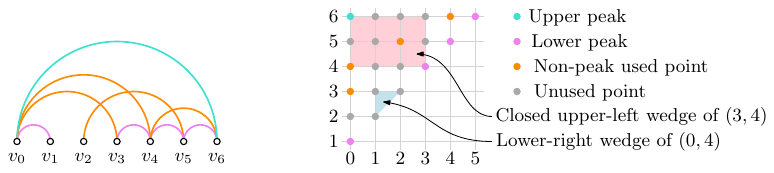}}
    \caption{A linear layout and its M-diagram.}
    \label{fi:M-diagram-a}
\end{figure} 

An \emph{upper peak} of $M(\Gamma)$ is a used point $(i,j)$ such that every other point in its closed upper-left wedge is unused. An upper peak corresponds to an edge that is not nested inside any other edge in $\Gamma$. A \emph{lower peak} of $M$ is a used point $(i,j)$ such that every other point in its closed lower-right wedge is unused. A lower peak corresponds to an edge that does not nest any other edge in $\Gamma$.  Notice that a peak can be a lower peak and an upper peak at the same time; see \cref{fi:M-diagram-b} (left). 

An \emph{orthogonal polyline} of $M(\Gamma)$ is a polyline consisting of alternating horizontal and vertical segments, each connecting two points of $M(\Gamma)$. A point $(i,j)$ of an orthogonal polyline is a \emph{left (right) turn} if walking along the polyline from bottom to top, one makes a left (right) turn at $(i,j)$. 
Refer to \cref{fi:M-diagram-b}.
The \emph{upper frontier} of $M(\Gamma)$ is an orthogonal polyline starting at the bottomost point of $M(\Gamma)$, ending at the rightmost point of $M(\Gamma)$, and whose right turns are all and only the upper peaks. The \emph{lower frontier} of $M(\Gamma)$ is an orthogonal polyline starting at the bottommost point of $M(\Gamma)$, ending at the rightmost point of $M(\Gamma)$, and whose left turns are all and only the lower peaks. By definition, the upper frontier and the lower frontier are both $x$- and $y$-monotone. Further, the upper frontier and the lower frontier can share vertices and edges, and they can also cross. The upper and lower frontiers \emph{cross} if walking along one of them, one sees portions of the other one both on the left-hand side and on the right-hand side; see \cref{fi:M-diagram-b} (left).

\begin{figure}
    \centering
    {\includegraphics[page=2]{M-diagram}}
    \caption{Transformation of an M-diagram whose upper and lower frontiers cross into an M-diagram whose upper and lower frontiers do not cross.}
    \label{fi:M-diagram-b}
\end{figure}  

The next four lemmas allow us to concentrate on 2-defective 1-queue layouts whose M-diagram has suitable properties that will help us prove~\cref{th:density-1page-2defect}. In particular, the next lemma proves that a layout $\Gamma$ such that the upper and the lower frontiers of $M(\Gamma)$ cross can be modified by removing such an undesired property, while not decreasing the number of edges. We remark that if $M(\Gamma)$ has a point that is simultaneously a lower and an upper peak, then the upper and lower frontiers necessarily cross. Thus, the removal of crossings between frontiers ensures that no point is simultaneously an upper and a lower peak.

\begin{lemma}\label{le:no-crossing-frontiers}
Let $\Gamma$ be a 2-defective 1-queue layout with $m$ edges. There exists a 2-defective 1-queue layout $\Gamma'$ with $m'$ edges such that $m \leq m'$ and such that the upper frontier and the lower frontier of $M(\Gamma')$ do not cross.
\end{lemma}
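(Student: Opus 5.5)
We argue by an extremal choice followed by a local modification.

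\textbf{Extremal choice.} Among all 2-defective 1-queue layouts on the same $n$ vertices with at least $m$ edges, pick $\Gamma'$ maximizing the number of edges. Among these, pick one minimizing a potential that measures how much the frontiers cross, for instance the number of points of $M(\Gamma')$ lying strictly between the two frontiers on the ``wrong'' side, or the number of points that are simultaneously upper and lower peaks. Suppose the upper and lower frontiers of $M(\Gamma')$ still cross. We aim for a contradiction by producing a layout with at least as many edges and smaller potential.

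\textbf{Locating a crossing.} Walk along the upper frontier from the bottom point. Since the frontiers cross, there is a first place where the lower frontier passes from one side of the upper frontier to the other. In the clean situation, the lower frontier lies weakly above and to the left of the upper frontier, which is the expected picture: lower peaks are ``low'' edges and upper peaks are ``high'' ones in the M-diagram. At the first violation, take a lower peak $p=(i,j)$ lying strictly in the upper-left of the upper frontier, or on it, possibly with $p$ itself an upper peak. Every used point is in the closed lower-right wedge of some upper peak and in the closed upper-left wedge of some lower peak. So $p$ being on the wrong side means the region between the frontiers near $p$ contains no used points. This is the configuration in \cref{fi:M-diagram-b}, where the crossing creates a region with no used points.

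\textbf{Modifying the layout.} The transformation suggested by the figure is to shift $p$ into the empty pocket. Concretely, replace the edge $(v_i,v_j)$ by an unused point $(i',j')$ inside the pocket, chosen adjacent to the frontier (for example $(i+1,j)$ or $(i,j-1)$). Alternatively, add such a pocket point outright. We then have to check that each edge still conflicts with at most two others. The conflicts of a point are exactly the used points in its open upper-left and lower-right wedges.

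Because $p$ was a lower peak, its lower-right wedge was empty. Because it sat outside the upper frontier, its upper-left wedge was nearly empty. The new point's wedges differ from $p$'s only by one row or column segment. That segment borders the empty pocket, so we expect the new point's wedges to contain no more used points than $p$'s. The same argument applies symmetrically to the points whose wedges gain the new point, which we expect to be exactly those that lost $p$. Hence 2-defectiveness is preserved, the edge count does not decrease, and the potential strictly drops.

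\textbf{Main obstacle.} The step I expect to be hardest is this wedge-containment check. The new point may enter the wedge of some edge $q$ that did not previously contain $p$. That $q$ could then gain a third conflict. To handle this I would choose the replacement point extremally, namely the point of the pocket closest to the crossing. I would then argue that any such $q$ would have to lie in the pocket, contradicting its emptiness. If this fails, the fallback is a global operation: reflect the portion of the lower frontier beyond the crossing onto the upper frontier, moving all its peaks at once, and recount conflicts along the frontier segment.
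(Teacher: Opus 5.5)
Your ``Locating a crossing'' step has the orientation reversed, and the rest of the argument is built on it. The right turns of the upper frontier $\phi_U$ are the upper peaks, i.e.\ the top-left corners of a staircase. So in the clean situation $\phi_U$ bounds the used points from the upper-left, and the lower frontier $\phi_L$ bounds them from the lower-right. You quote the relevant fact yourself: every used point lies in the closed lower-right wedge of some upper peak and in the closed upper-left wedge of some lower peak. That fact says exactly that every used point is weakly lower-right of $\phi_U$ and weakly upper-left of $\phi_L$. Hence the object you propose to pick, a lower peak strictly upper-left of $\phi_U$, never exists. A crossing is instead a stretch between two consecutive shared points $p$, $q$ along which $\phi_L$ runs to the upper-left of $\phi_U$.

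Beyond this, the proposal is a plan rather than a proof. The move is left open (replace or add, and which point). The check that no edge acquires a third conflict is only ``expected'' and flagged as unresolved. The potential is not fixed, and your second candidate (points that are both upper and lower peaks) does not even detect all crossings, since the frontiers can cross at unused points. The ``replace'' variant is also the wrong move: it does not increase $m$, and nothing guarantees that the edges conflicting with the new point are among those that conflicted with $p$.

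The missing idea is the rigid shape of the pocket, which turns your ``main obstacle'' into a triviality. This is the paper's argument:
\begin{itemize}
\item Points of $\phi_L$ strictly between $p$ and $q$ are strictly upper-left of $\phi_U$, and points of $\phi_U$ strictly between them are strictly lower-right of $\phi_L$. So none of these points is used.
\item Hence on this stretch $\phi_U$ has no right turn and $\phi_L$ has no left turn. The pocket is therefore the rectangle $R$ with corners $p$, $(x_q,y_p)$, $q$, $(x_p,y_q)$, and only $p$ and $q$ can be used in it.
\item For any edge position $r$ in $R$, the open upper-left wedge of $r$ lies strictly left of $\phi_U$ and its open lower-right wedge lies strictly below $\phi_L$. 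So neither wedge contains a used point. Since nesting is symmetric, adding $r$ creates no conflict with any existing edge.
\item The paper adds the two free corners of $R$. Then $(x_p,y_q)$ becomes an upper peak and $(x_q,y_p)$ a lower peak, the frontiers exchange their two portions, that crossing disappears, and $m$ grows. Repeating terminates since $m\le n(n-1)/2$.
\end{itemize}
Two fine points. The two corners nest each other, which costs each of them one conflict; this is harmless for $k=2$. Also, $(x_q,y_p)$ is an edge position only if $x_q<y_p$, whereas $(x_p,y_q)$ always is one (as $x_p<y_p<y_q$), and adding it alone already makes both frontiers follow the upper-left side of $R$. With your extremal choice of $\Gamma'$ of maximum $m$, this argument gives an immediate contradiction, so no secondary potential is needed.
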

\begin{proof}
    If the upper and lower frontiers of $M(\Gamma)$ do not cross, then $\Gamma'=\Gamma$. Thus, assume that the lower frontier $\phi_L$ and the upper frontier $\phi_U$ cross; see \cref{fi:M-diagram-b}. This means that there exists at least one portion of $\phi_L$ that is to the left of $\phi_U$ when walking along $\phi_U$ from bottom to top. Let $p$ and $q$ be two points shared by $\phi_U$ and $\phi_L$ such that:  (i) $\phi_L$ is to the left of $\phi_U$ when walking along $\phi_U$ from bottom to top; and (ii) when walking from $p$ to $q$ along $\phi_U$ there is no other vertex shared by $\phi_U$ and $\phi_L$. Let $\phi'_L$ (resp. $\phi'_U$) be the portion of $\phi_L$  (resp. $\phi_U$) from $p$ to $q$. We claim that $\phi'_U$ and $\phi'_L$ do not contain a used point other than $p$ and $q$. Namely, suppose that $\phi'_L$ contains a used point $r$ between $p$ and $q$; since $r$ does not belong to $\phi'_U$, it is not an upper peak and therefore its closed upper-left wedge contains at least one used point. Let $r'$ be a used point in the closed upper-left wedge of $r$ such that it has no used point in its closed upper-left wedge. Point $r'$ is an upper peak and therefore it belongs to $\phi_U$. In this case however, $\phi_U$ would not be $x$- and $y$-monotone. Thus, point $r$ cannot exist. Symmetrically $\phi'_U$ cannot contain a used point between $p$ and $q$. As a consequence, the polygon defined by $\phi'_L$ and $\phi'_U$ is a rectangle $R$ whose used points are only $p$ and $q$. Let $p'$ and $q'$ be the two corners of $R$ distinct from $p$ and $q$, with $p'$ belonging to $\phi'_U$ and $q'$ belonging to $\phi'_L$. By making $p'$ and $q'$ used (i.e., by adding the corresponding edges to $\Gamma$), $p'$ becomes a lower peak and $q'$ becomes an upper peak. Also, no conflict is created. As a consequence, $\phi'_L$ becomes part of $\phi_U$ and $\phi'_U$ becomes part of $\phi_L$ (in other words, the two frontiers exchange the two portions $\phi'_U$ and $\phi'_L$). Finally, the number of used points (and therefore the number of edges in the layout) has increased by two units. If after modifying $\phi'_U$ and $\phi'_L$, the (new) upper and lower frontiers still cross (in some other portion), we apply again the procedure above. 
\end{proof}

\begin{lemma}\label{le:shared-used}
Let $\Gamma$ be a 2-defective 1-queue layout with $m$ edges. There exists a 2-defective 1-queue layout $\Gamma'$ with $m'$ edges such that $m \leq m'$ and such that in $M(\Gamma')$ all points shared by the upper and lower frontiers are used.
\end{lemma}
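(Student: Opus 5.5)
We first apply \cref{le:no-crossing-frontiers}, so we may assume that the upper frontier $\phi_U$ and the lower frontier $\phi_L$ of $M(\Gamma)$ do not cross. The plan is then a local augmentation argument in the spirit of that proof. We repeatedly pick a point shared by $\phi_U$ and $\phi_L$ that is unused, make it used (add the edge), and show that neither the defectiveness nor the property of non-crossing frontiers is destroyed. Each step strictly increases the number of edges and there are finitely many points, so the process terminates with the desired layout $\Gamma'$ and $m\le m'$.

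The key step is to understand an unused point $s=(i,j)$ lying on both frontiers. Because the frontiers are $x$- and $y$-monotone orthogonal polylines whose turns are peaks, every used point lies weakly between them. Take $s$ on $\phi_U$. Since $\phi_U$ passes through $s$, no used point lies strictly in the upper-left wedge of $s$. Otherwise some upper peak would lie in that wedge, and $\phi_U$ would have to pass to the upper-left of $s$; by monotonicity it could not then also go through $s$. Symmetrically, $s$ on $\phi_L$ implies that no used point lies strictly in the lower-right wedge of $s$. Hence the new edge $(v_i,v_j)$ nests with no existing edge, so adding it creates no conflict and every old edge keeps its conflict count. The layout therefore remains 2-defective.

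Next we check that the frontiers after the addition still satisfy the hypotheses. The point $s$ has nothing used in its open upper-left or lower-right wedge. It may still have used points on its row or column in the closed wedges, so it becomes an upper peak, a lower peak, or neither. If the closed wedges contain other used points on the same row or column, $s$ is neither, and the frontiers are unchanged as polylines since they already pass through $s$. If $s$ becomes a peak, the frontier already passes through it and simply acquires a turn there. In the worst case $s$ becomes both an upper and a lower peak, which forces a crossing; in that case we apply \cref{le:no-crossing-frontiers} again, which only adds edges. To guarantee termination despite interleaving the two procedures, we use the fact that both strictly increase the number of used points, which is bounded by $\binom{n}{2}$.

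The main obstacle is the rigorous claim that a shared point has empty open wedges, i.e., that the frontiers really bound all used points and pass below/above peaks as described. I would prove it first from the definition of the frontiers as the staircase through the upper peaks (respectively the lower peaks), arguing via monotonicity exactly as in the proof of \cref{le:no-crossing-frontiers}. I would also handle separately the degenerate shared points on the boundary of $M(\Gamma)$, namely the starting bottommost point and the ending rightmost point. These correspond to the edges $(v_0,v_1)$ and $(v_{n-2},v_{n-1})$, which are $1$-hop edges and can never nest anything, so they can always be added.
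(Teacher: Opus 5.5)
Your key step is exactly the paper's proof, which consists of this observation alone: a point on both frontiers has, by the monotonicity of each frontier, no used point in its open upper-left or open lower-right wedge, so its edge nests with no existing edge and can be added without creating any conflict. The extra bookkeeping (pre-applying \cref{le:no-crossing-frontiers}, re-applying it if a crossing appears, and the counting argument for termination) is not needed for the statement as given, but it is harmless and makes explicit that the two properties defining a standard $M$-diagram can be obtained simultaneously, which the paper leaves implicit; in fact, as one can check, adding a shared point never changes either frontier polyline, so your fallback is never triggered.
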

\begin{proof}
    Let $(i,j)$ be a point shared by the upper frontier $\phi_U$ and the lower frontier $\phi_L$. The upper-left wedge of $(i,j)$ does not contain any used point. Namely, if a point $(i',j')$ with in the upper-left wedge of $(i,j)$ was used, then $(i',j')$ would be a point of $\phi_U$ and, by the monotonicity of $\phi_U$, $(i,j)$ would not be a point of $\phi_U$, contradicting the fact that $(i,j)$ is shared by $\phi_U$ and $\phi_L$. Symmetrically, the lower-right wedge of $(i,j)$ does not contain any used point. It follows that using $(i,j)$, i.e., adding the corresponding edge to $\Gamma$, does not create any conflict.
\end{proof}

An $M$-diagram with the properties of \cref{le:no-crossing-frontiers,le:shared-used} is called a \emph{standard} $M$-diagram; see \cref{fi:M-diagram-b} (right). In a standard $M$-diagram the upper and lower frontier define a set of, possibly degenerate, orthogonal polygons, each sharing a vertex with the next one (except the last one). A degenerate orthogonal polygon consists of a sequence of edges shared by the upper and lower frontiers. 
We refer to this set of polygons as the \emph{polygons of $M(\Gamma)$}. Notice that the first and the last polygons of $M(\Gamma)$ are always degenerate. Each point shared by the upper and lower frontiers, including possibly shared upper and lower peaks, is called a \emph{valley}.

\begin{lemma} \label{le:at-most-two-points}
Let $\Gamma$ be a 2-defective 1-queue layout with $m$ edges with a standard $M$-diagram. There exists a 2-defective 1-queue layout $\Gamma'$ with $m'$ edges such that $m \leq m'$, $M(\Gamma')$ has the same set of polygons of $M(\Gamma)$, and each non-degenerate polygon of $M(\Gamma')$ has at most two used points that are neither peaks nor valleys. 
\end{lemma}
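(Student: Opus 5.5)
We analyse one non-degenerate polygon $P$ of the standard $M$-diagram $M(\Gamma)$ at a time. If $P$ has at most two used points that are neither peaks nor valleys, we leave it unchanged. Otherwise we replace the set $S$ of these \emph{interior} used points of $P$ by a set $S'$ with $|S'|\ge |S|$ and $|S'|\le 2$. The replacement must not change the peaks, the valleys, or the frontiers, so the polygons are preserved. It must also keep every conflict degree at most $2$.

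\textbf{Where interior points live.} Every used point that is not a peak has a used point in its closed upper-left wedge and one in its closed lower-right wedge. Hence it lies weakly between the two frontiers, i.e., inside some polygon or on its boundary. Consider a used point $r$ strictly inside $P$ (or on its boundary but not a peak). It has an upper peak of $P$ strictly in its upper-left wedge and a lower peak of $P$ strictly in its lower-right wedge. Hence $r$ already conflicts with at least two edges, and because of $2$-defectiveness it conflicts with exactly these two edges and with nothing else. In particular, two interior points never conflict with each other: interior used points of $P$ are pairwise non-nested, so they form a chain that increases in both coordinates. Moreover, each of them lies in the upper-left wedge of exactly one lower peak and in the lower-right wedge of exactly one upper peak.

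\textbf{Charging argument.} Each peak of $P$ conflicts with at most two edges. An upper peak $u$ sees, in its lower-right wedge, the lower peaks of $P$ below it, the interior points, and possibly points of neighbouring polygons. Each interior point of $P$ consumes one unit of the budget of some upper peak and one unit of the budget of some lower peak. The idea is that three or more interior points force the budgets of the peaks of $P$ to be used almost exclusively on interior points. We then show that we can instead add points of a different kind, for example extra points on the frontiers of $P$ between consecutive peaks, which conflict only with the interior. Alternatively, we can move interior points so that they are grouped near two corners. Concretely, I would choose two positions: one just inside the lower-left corner region next to the first valley, and one near the upper-right valley. Each of these conflicts with only one upper and one lower peak. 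I would then argue that the remaining interior points can be traded for new edges that coincide with already-unused frontier-corner points of $P$. Each such trade is a corner swap like the one in \cref{le:no-crossing-frontiers}, and it keeps $|S'|\ge|S|$.

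\textbf{Main obstacle.} The hard step is showing that these swaps do not create new peaks, which would change the frontiers and hence the polygons. It also requires verifying that no peak's conflict count exceeds $2$, including conflicts with points in adjacent polygons that share a valley. I would first try induction on $|S|$: take the two extreme interior points of the chain, $a$ (lowest) and $b$ (highest), and a middle point $c$. The goal is to show that the peaks conflicting with $c$ are also dominated by $a$ or $b$, so that $c$ can be deleted while an unused reflex corner of the frontier is made used. The counting of conflicts of peaks is the delicate part.
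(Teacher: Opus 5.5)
\textbf{There is a genuine gap, and the plan cannot succeed as set up.}

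If $|S|\ge 3$, you cannot have both $|S'|\ge|S|$ and $|S'|\le 2$. Trading points for used reflex corners of the frontiers of $P$ does not help. A left turn $(x(p_{l+1}),y(p_l))$ of the upper frontier has $p_l$ in its closed upper-left wedge, so once used it is neither a peak nor a valley and still counts towards the bound. Corner swaps as in the frontier-uncrossing lemma, on the other hand, exchange frontier portions and so change the polygons.

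The missing idea is that $|S|\ge 3$ never occurs at all. The lemma holds with $\Gamma'=\Gamma$, and what must be proved is a conflict-budget count over the peaks of $P$:
\begin{enumerate}
\item Peaks of $P$ conflict only with points of $P$. Points of later (earlier) polygons lie in the closed upper-right (lower-left) quadrant of the shared valley, so they lie in no wedge of a peak of $P$. This settles your worry about neighbouring polygons.
\item Every internal upper peak $p_l$, $2\le l\le a-1$, already conflicts with two lower peaks. Its lower-right wedge contains some lower peak $q_g$. If $q_g$ were the only one, then $q_{g-1}$ and $q_{g+1}$ lie outside that wedge. Since the frontiers do not cross, $p_{l-1},p_l,p_{l+1}$ then all lie in the upper-left wedge of $q_g$, giving $q_g$ three conflicts. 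The same holds for internal lower peaks.
\item A similar argument at the two valleys of $P$ shows that $p_1,q_1,p_a,q_b$ together have at most two spare units of budget.
\end{enumerate}
Every used point of $P$ that is neither a peak nor a valley conflicts with at least one peak of $P$, so at most two such points exist. Your remark that three such points ``force the budgets of the peaks to be used almost exclusively on interior points'' is the start of exactly this argument. It has to be pushed to a contradiction, not to a rearrangement.

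\textbf{Your structural claim is also false for boundary points of $P$.} Take a used non-peak point on the upper frontier of $P$, either on a side or at a reflex corner. It has no upper peak \emph{strictly} in its upper-left wedge, because the neighbouring upper peaks share a coordinate with it. It therefore conflicts only with lower peaks; symmetrically, such points on the lower frontier conflict only with upper peaks. This is precisely why a rectangle can carry two extra used points, one on each sub-frontier, each consuming one unit of a single peak's budget. Only strictly interior points consume one unit from an upper and one from a lower peak. Your increasing-chain conclusion applies only to them, and the budget count must charge boundary points a single unit.
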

\begin{proof}
    Let $P$ be a non-degenerate polygon of $M(\Gamma)$. The peaks of $P$ are all used points by definition; the two valleys that $P$ shares with the polygon preceding and with the polygon following it are also used because $M(\Gamma)$ is standard. Denote by $p_1, p_2, \dots,p_a$ the upper peaks of $P$ and by $q_1,q_2,\dots q_b$ the lower peaks of $P$. We claim that the edge corresponding to a peak $p_l$ with $l \in \{2,3,\dots,a-1\}$ or to a peak $q_g$ with $g \in \{2,3,\dots,b-1\}$ has exactly two conflicts in $\Gamma$. 

    \begin{figure}
        \centering
        {\includegraphics[page=3,width=\textwidth]{M-diagram}}
        \caption{Illustration for the proof of \cref{le:at-most-two-points}.}
        \label{fi:M-diagram-c}
    \end{figure}  
    
    Suppose that there exists one upper peak $p_l=(i_l,j_l)$ whose corresponding edge $e_l$ has at most one conflict. This means that the lower-right wedge of $p_l$ contains at most one used point. The polygon can exist only if in the lower-right wedge of $p_l$ there is a left turn (i.e., an angle of $\pi/2$ inside the polygon); such a left turn can only be a lower peak $q_g=(i_g,j_g)$. So, $e_l$ has at least one conflict. Assume hence that $e_l$ has exactly one conflict, i.e., that the lower-right wedge of $p_l$ contains no used point other than $q_g$. This means that the two lower peaks $q_{g-1}$ and $q_{g+1}$ must be outside the lower-right wedge of $p_l$ and therefore they have coordinates $(i_{g-1},j_{g-1})$ and $(i_{g+1},j_{g+1})$ such that $i_{g-1} \leq i_l$ and $j_{g+1} \geq j_{l}$; see \cref{fi:M-diagram-c}~(left). Since the upper and lower frontiers do not cross, this implies that the two upper peaks  $p_{l-1}$ and $p_{l+1}$ have coordinates $(i_{l-1},j_{l-1})$ and $(i_{l+1},j_{l+1})$ such that $j_{l-1} > j_{g}$ and $i_{l+1}<i_{g}$. That is $p_{l-1}$, $p_l$, and $p_{l+1}$ are all in the upper-left wedge of $q_g$ and therefore the edge $e_g$ corresponding to $q_g$ has three conflicts. But this contradicts the fact that $\Gamma$ is 2-defective.  

    With a similar argument it is possible to prove that at least one between the edges that correspond to $p_{1}$ and to $q_1$ has exactly two conflicts in $\Gamma$ while the other one has exactly one conflict; see \cref{fi:M-diagram-c}~(middle). Symmetrically, we can prove that at least one between the edges that correspond to $p_a$ and $q_b$ has exactly two conflicts in $\Gamma$ while the other one has exactly one conflict; see \cref{fi:M-diagram-c}~(right).

    So every peak in $P$, except at most two, correspond to edges that already have the maximum number of conflicts. Since any point of $P$ is in the upper-left wedge of some lower peak or in the lower-right wedge of some upper peak, it follows that at most two points can be used that are neither peaks nor valleys. 
\end{proof}

\begin{lemma}\label{le:all-rectangles}
Let $\Gamma$ be a 2-defective 1-queue layout with $m$ edges. There exists a 2-defective 1-queue layout $\Gamma'$ with $m'$ edges such that $m \leq m'$, $M(\Gamma')$ is standard, and the non-degenerate polygons of $M(\Gamma')$ are all rectangles.
\end{lemma}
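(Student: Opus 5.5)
We first apply \cref{le:no-crossing-frontiers}, then \cref{le:shared-used}, and then \cref{le:at-most-two-points} to $\Gamma$. This gives a 2-defective 1-queue layout whose $M$-diagram is standard, which has at least $m$ edges, and in which every non-degenerate polygon has at most two used points that are neither peaks nor valleys. We then proceed polygon by polygon. Let $P$ be a non-degenerate polygon that is not a rectangle, so it has at least two upper peaks or at least two lower peaks. The plan is to replace the used points inside $P$ by a configuration whose polygon is a rectangle, or a chain of rectangles joined at new valleys. The replacement must use at least as many points and create no conflict with edges outside $P$.

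The key observation is locality. A point strictly inside the region bounded by the frontiers between the two valleys $v_1=(i_1,j_1)$ and $v_2=(i_2,j_2)$ of $P$ lies in the box $[i_1,i_2]\times[j_1,j_2]$. It conflicts only with used points in its upper-left or lower-right wedge. Any used point outside $P$ is either bottom-left of $v_1$ or top-right of $v_2$, or in another polygon, and is not in such a wedge relationship with the points of $P$. The one exception is the valleys themselves, and their conflicts are unchanged if the valleys stay. So we may redesign the interior of $P$ freely, subject to the conflict bound within $P$.

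Next we count what $P$ contributes. Suppose $P$ has $a$ upper peaks and $b$ lower peaks. Since the frontiers alternate around the polygon, $|a-b|\le 1$. By \cref{le:at-most-two-points}, $P$ then has at most $a+b+2$ used points besides the two valleys. The replacement staircase will split $P$ into consecutive rectangles: each rectangle has one upper peak, one lower peak, and its two corner valleys, and can additionally hold interior points up to the conflict budget. We choose the new valleys along the diagonal from $v_1$ to $v_2$, which is possible since $i_2-i_1$ and $j_2-j_1$ are at least as large as the number of turns of $P$. This lets us fit enough rectangles that the total count of peaks plus new valleys plus allowed extra points is at least $a+b+2$. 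We then verify that each new point has at most two conflicts: a rectangle's upper-right/lower-left corners conflict only with each other plus at most one interior point.

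The main obstacle will be this counting and the check of conflicts in the replacement, in particular the end rectangles, where \cref{le:at-most-two-points} allows the two extra points. The first approach to try is to map each non-extreme pair $(p_l,q_l)$ to its own unit rectangle, with a fresh valley between consecutive rectangles. Each such rectangle yields at least two used points per peak pair, plus a valley, so the count does not drop. The two extra points are placed in the first and last rectangles as in \cref{fi:M-diagram-c}. After the replacement the diagram is still standard, since the frontiers do not cross and the valleys are used, and the number of edges has not decreased. Iterating over all non-rectangular polygons yields $\Gamma'$.
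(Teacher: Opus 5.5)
\textbf{There is a genuine gap.} The step that would prove the lemma is only announced: an explicit replacement of $P$, with a proof that it is $2$-defective, uses only legitimate positions, and has at least as many used points. The concrete version you sketch fails on a small example.

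Two of your preliminary claims are fine. Your reduction to one polygon at a time is sound: bounding boxes of distinct polygons lie weakly south-west/north-east of each other, so no wedge relation crosses between them, and valleys have no conflicts at all. Your claim $|a-b|\le 1$ is also true, but ``the frontiers alternate'' is not a proof of it.

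Now the counterexample. Take the polygon with upper sub-frontier $(0,0),(0,1),(1,1),(1,2),(2,2),(3,2)$ and lower sub-frontier $(0,0),(1,0),(2,0),(2,1),(3,1),(3,2)$, in coordinates relative to $v_1$. Its peaks are $(0,1),(1,2)$ and $(2,0),(3,1)$, and it has the two extra used points $(1,0)$ and $(2,2)$; every point has at most two conflicts.
\begin{itemize}
\item Each frontier has three turns but the height is $2$. So your claim that the box dimensions are at least the number of turns is false; only $W,H\ge\max(a,b)$ holds.
\item $P$ carries $6$ used points besides its valleys. Your accounting (two peaks per rectangle plus one fresh valley between consecutive rectangles) gives only $3r-1=5$ for $r=2$.
\item The extras cannot be ``placed in the first and last rectangles'' if these are $1\times 1$, since a unit rectangle has no points other than its corners.
\item Nothing keeps your rectangles between the old frontiers. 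If $v_1=(i_1,i_1+3)$, the corner $(3,0)$ of the bounding box is the loop position $(i_1+3,i_1+3)$. So the validity of a ``diagonal'' chain must be argued.
\item Minor: the conflicting corners of a rectangle are the upper-left and lower-right ones; the other two are valleys.
\end{itemize}

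\textbf{How the paper avoids this.} It uses a local exchange. At the first valley $p$ of $P$ it compares the second corners $q_2$ and $r_2$ of the two sub-frontiers. Case by case, it replaces the first peak(s) by points inside $P$, so that a rectangle is cut off at $p$ and a strictly smaller polygon remains. \cref{le:at-most-two-points} bounds how many of the newly used points were already used, so the count never drops, and the step is iterated.

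\textbf{How your route could be rescued.} Your global approach can work, but only with an accounting that credits every valley on the degenerate stretches between rectangles. For example, let $W\times H$ be the bounding box of $P$ and suppose $a\ge b$. Use every point of $P$'s upper sub-frontier, and for each upper peak $(x,y)$ also use $(x+1,y-1)$. This point lies in $P$, since some lower peak is strictly lower-right of $(x,y)$. Then:
\begin{itemize}
\item each point has at most one conflict;
\item there are $W+H-1+a\ge 3a-1\ge a+b+2$ used points besides the valleys when $a\ge 3$;
\item the cases $a\le 2$ can be checked by hand.
\end{itemize}
Your proposal contains neither this argument nor the paper's.
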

\begin{proof}
    By \cref{le:no-crossing-frontiers,le:shared-used}, we can assume that $M(\Gamma)$ is standard and thus we can consider its polygons. If the non-degenerate polygons of $M(\Gamma)$ are all rectangles then $\Gamma'=\Gamma$. Assume then that there is a non-degenerate polygon $P$ that is not a rectangle. Let $p$ be the valley that $P$ shares with the polygon preceding it. Let $q_1$ and $q_2$ be the first and the second corner of $P$ encountered when walking along the upper sub-frontier of $P$ starting from $p$. Notice that $q_1$ is an upper peak that has the same $x$-coordinate as $p$, while $q_2$ is not a peak. Analogously, let $r_1$ and $r_2$ be the first and the second corner of $P$  encountered when walking along the lower sub-frontier of $P$ starting from $p$. In this case $r_1$ is a lower peak that has the same $y$-coordinate as $p$, while $r_2$ is not a peak. We now describe how to modify $\Gamma$ to produce a new layout $\Gamma''$ where the polygon $P$ is replaced by a rectangle and a smaller non-degenerate polygon $P'$, i.e., a non-degenerate polygon whose upper and lower sub-frontiers have less points. There are three cases depending of the $x$-coordinate and the $y$-coordinate of $q_2$ and $r_2$. We denote the $x$- and $y$-coordinate of a point $p$ as $x(p)$ and $y(p)$, respectively.

\begin{figure}[tb]
    \centering
    \begin{subfigure}{.5\textwidth}
        \centering
        {\includegraphics[page=4,width=\textwidth]{M-diagram}
        \subcaption{Case 1.}{\label{fi:M-diagram-d-a}}}
    \end{subfigure}
    \hfill
    \begin{subfigure}{.5\textwidth}
        \centering
        {\includegraphics[page=5,width=\textwidth]{M-diagram}
        \subcaption{Case 2.}{\label{fi:M-diagram-d-b}}}
    \end{subfigure}
    \hfill
    \begin{subfigure}{.5\textwidth}
        \centering
        {\includegraphics[page=6,width=\textwidth]{M-diagram}
        \subcaption{Case 3.}{\label{fi:M-diagram-d-c}}}
    \end{subfigure}
    \caption{Illustration for the proof of \cref{le:all-rectangles}.}
    \label{fi:M-diagram-d}
\end{figure}

    \begin{description}
        \item[Case 1: $x(q_2)<x(r_2)$ and $y(r_2) > y(q_2)$.] Refer to \cref{fi:M-diagram-d-a}. In this case we replace the lower peak $r_1$ with two new lower peaks $r'_1$ of coordinates $(x(q_2),y(r_1))$ and $r''_1$ of coordinates $(x(r_1),y(q_2))$. This replacement means that $r_1$ is no longer used (the corresponding edge is removed from $\Gamma$) and $r'_1$ and $r''_1$ are used in $\Gamma''$ (notice that each of them may have already been used in $\Gamma$). After this change the point $q_2$ is shared by the upper and lower frontiers, i.e., it is a valley of $M(\Gamma'')$ and, by \cref{le:shared-used}, it can also be used. The three points $q_2$, $r'_1$, and $r''_1$ are not peaks nor valleys in $M(\Gamma)$ and therefore, by \cref{le:at-most-two-points}, at most two of them may have already been used in $M(\Gamma)$. On the other hand, the point $r_1$ was used in $M(\Gamma)$ and it is no longer used in $M(\Gamma'')$. So at most three among  $q_2$, $r'_1$, $r''_1$, and $r_1$ were used in $M(\Gamma)$, while exactly three of these points are used in $M(\Gamma'')$, namely $q_2$, $r'_1$, and $r''_1$. Hence, the number of used points in $M(\Gamma'')$ is at least the number of used points in $M(\Gamma)$.  
        
        As observed above, the point $q_2$ is a valley in $M(\Gamma'')$; this means that the polygon $P$ is replaced by the rectangle defined by $p$, $q_1$, $q_2$, and $r'_1$, and by a new non-degenerate polygon $P'$ that contains all the remaining upper and lower peaks of $P$ (including the new peak $r''_1$). 
        
        \item[Case 2: $x(q_2) > x(r_2)$ and $y(r_2) < y(q_2)$.]
         Refer to \cref{fi:M-diagram-d-b}. In this case we replace the upper peak $q_1$ with two new upper peaks $q'_1$ of coordinates $(x(q_1),y(r_2))$ and $q''_1$ of coordinates $(x(r_2),y(q_1))$. This replacement means that $q_1$ is no longer used (the corresponding edge is removed from $\Gamma$) and $q'_1$ and $q''_1$ are used in $\Gamma''$ (notice that each of them may have already been used in $\Gamma$). After this change the point $r_2$ is shared by the upper and lower frontiers, i.e., it is a valley of $M(\Gamma'')$ and, by \cref{le:shared-used}, it can also be used. The three points $r_2$, $q'_1$, and $q''_1$ are not peaks nor valleys in $M(\Gamma)$ and therefore, by \cref{le:at-most-two-points}, at most two of them may have already been used in $M(\Gamma)$. On the other hand, the point $q_1$ was used in $M(\Gamma)$ and it is no longer used in $M(\Gamma'')$. So at most three among  $r_2$, $q'_1$, $q''_1$, and $q_1$ were used in $M(\Gamma)$, while exactly three of these points are used in $M(\Gamma'')$, namely $r_2$, $q'_1$, and $q''_1$. Hence, the number of used points in $M(\Gamma'')$ is at least the number of used points in $M(\Gamma)$.  
        
        As observed above, the point $r_2$ is a valley in $M(\Gamma'')$; this means that the non-degenerate polygon $P$ is replaced by the rectangle defined by $p$, $r_1$, $r_2$, and $q'_1$, and by a new non-degenerate polygon $P'$ that contains all the remaining upper and lower peaks of $P$ (including the new peak $q''_1$). 

        \item[Case 3: $x(q_2) < x(r_2)$ and $y(r_2) < y(q_2)$.] Notice that, since $P$ is not a rectangle it cannot be $x(q_2)=x(r_2)$ and $y(r_2)=y(q_2)$ at the same time. Refer to \cref{fi:M-diagram-d-c}. In this case we replace the lower peak $r_1$ with a new lower peak $r'_1$ of coordinates $(x(q_2),y(r_1))$ and the upper peak $q_1$ with a new upper peak $q'_1$ of coordinates $(x(q_1),y(r_2))$. This replacement means that $r_1$ and $q_1$ are no longer used (the corresponding edges are removed from $\Gamma$) and $r'_1$ and $q'_1$ are used in $\Gamma''$ (notice that each of them may have already been used in $\Gamma$). After this change the point $p'=(x(q_2),y(r_2))$ is shared by the upper and lower frontiers, i.e., it is a valley of $M(\Gamma'')$ and, by \cref{le:shared-used}, it can also be used. 
        
        The two points $q'_1$ and $q_2$  are not used in $M(\Gamma)$ as otherwise the upper-left wedge of $r_1$ would contain more than two points: the upper peak $q_1$, another upper peak $q_3$ (having the same $x$-coordinate as $q_2$) and $q'_1$ or $q_2$. Analogously, the two points $r'_1$ and $r_2$ are not used in $M(\Gamma)$ as otherwise the lower-right wedge of $q_1$ would contain more than two points: the lower peak $r_1$, another lower peak $r_3$ (having the same $y$-coordinate as $r_2$) and $r'_1$ or $r_2$. Thus, among the seven points $q_1$, $q'_1$, $q_2$, $r_1$, $r'_1$, $r_2$, and $p'$, exactly two are used in $M(\Gamma)$, while exactly three are used in $M(\Gamma'')$, namely $q'_1$, $r'_1$, and $p'$. Hence, the number of used points in $M(\Gamma'')$ is larger than the number of used points in $M(\Gamma)$.
        
        As observed above, the point $p'$ is a valley in $M(\Gamma'')$; this means that the non-degenerate polygon $P$ is replaced by the rectangle defined by $p$, $q'_1$, $r'_1$, and $p'$, and by a new non-degenerate polygon $P'$ that contains all the remaining upper and lower peaks of $P$.  

        \item[Case 4: $x(q_2) = x(r_2)$ or $y(r_2) = y(q_2)$.] Notice that, since $P$ is not a rectangle it cannot be $x(q_2)=x(r_2)$ and $y(r_2)=y(q_2)$ at the same time. We proceed as in the previous case. The only difference is that: $(i)$ if $x(q_2) = x(r_2)$, we have  $r'_1 \equiv r_1$ and $p' \equiv r_2$; $(ii)$ if $y(r_2) = y(q_2)$, we have $q'_1 \equiv q_1$ and $p' \equiv q_2$.
    \end{description}

    In each of the cases above, the non-degenerate polygon $P$ is replaced by a rectangle and a non-degenerate polygon~$P'$ that is smaller than $P$ (that is, whose upper and lower frontiers have less points). By repeatedly applying the above modification, we replace $P'$ with a set of rectangles. Repeating the same procedure for all non-degenerate polygons that are not rectangles, one obtains the desired layout $\Gamma'$.
\end{proof}

A \emph{canonical $M$-diagram} is a standard $M$-diagram with the additional property that every non-degenerate polygon is a rectangle. A \emph{maximal canonical $M$-diagram} is a canonical $M$-diagram such that  no unused point can become used while keeping the layout $2$-defective.

Let $\pi$ be a path. An \emph{Italian coloring} of $\pi$ is a coloring of the vertices of $\pi$ with colors \emph{green}, \emph{white} and \emph{red} such that: (i) between any pair of consecutive red vertices there are at most two green vertices; (ii) at the beginning and at the end of the path there are at least two consecutive red vertices. The \emph{weight} of an Italian coloring of a path is $2n_g+n_r$, where $n_g$ is the number of green vertices and $n_r$ is the number of red vertices. 

\begin{lemma}\label{le:italian-coloring}
Let $\Gamma$ be a $2$-defective $1$-queue layout with $n$ vertices and $m$ edges, and let $M(\Gamma)$ be maximal canonical. Then, there exists a path $\pi$ with $2n-3$ vertices that admits an Italian coloring whose weight is equal to $m$.
\end{lemma}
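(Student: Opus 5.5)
The plan is to take $\pi$ to be a monotone lattice path in $M(\Gamma)$ from the bottommost point $(0,1)$ to the rightmost point $(n-2,n-1)$. Every $x$- and $y$-monotone orthogonal staircase between these two points uses $(n-2)$ horizontal unit steps and $(n-2)$ vertical unit steps. It therefore visits exactly $2n-3$ points of $M(\Gamma)$, which explains the number of vertices in the statement. Concretely, $\pi$ will follow the common part of the upper and lower frontiers on the degenerate polygons. Inside each rectangle it will follow one fixed side, say the lower frontier through the lower peak. Since $M(\Gamma)$ is canonical, these pieces concatenate into a single monotone staircase.

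Next I will account for all used points. A point strictly in the upper-left of the upper frontier, or strictly in the lower-right of the lower frontier, must be unused. Otherwise a used point there would have in its closed upper-left (resp. lower-right) wedge an upper (resp. lower) peak lying off the frontier, contradicting the monotonicity argument in the proof of \cref{le:no-crossing-frontiers}. Hence every used point is either a valley on a degenerate part, or a point of the closed region of some rectangle. Valleys are used because the diagram is standard (\cref{le:shared-used}), and I will color them red with weight $1$ each. For a rectangle $R$, the used points are its two valleys (already counted as red), its upper and lower peaks, and, by \cref{le:at-most-two-points}, at most two further points. I will encode the two peaks by coloring green the lower-peak vertex of $\pi$, which has weight $2$. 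The extra used points of $R$ will be charged by coloring red suitably chosen vertices of $\pi$ on the side of $R$. All remaining vertices of $\pi$ are colored white. Each used point is then charged exactly once, so the weight $2n_g+n_r$ equals $m$.

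It remains to verify the two conditions of an Italian coloring.

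Condition (ii) should come from maximality. The first polygon and the last polygon are degenerate. Near the corner $(0,1)$, adding the short edges creates no conflicts, so maximality forces the first two points of $\pi$ to be used valleys, and symmetrically the last two. Both ends of $\pi$ therefore start with two consecutive reds.

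Condition (i) is where the real work lies, and I expect it to be the main obstacle. The claim is that between two consecutive red vertices there are at most two green ones. I will use $2$-defectiveness and maximality to bound how rectangles can chain between used non-peak points. Each peak of a rectangle conflicts with the opposite peak, so it has room for only one more conflict. Consider a long run of rectangles containing no extra used point. Maximality should then let us insert a used point, either a corner or a valley-adjacent point, without creating a third conflict. More precisely, I would show that three consecutive rectangles without an intervening red point admit an addable point, contradicting maximality. To do this I would first try a case analysis on the side lengths of consecutive rectangles, mirroring the conflict counting in the proof of \cref{le:at-most-two-points}. A further difficulty is that the red charges for the interior extra points must be placed on $\pi$ so that they actually separate the green vertices, and choosing these positions consistently is the delicate part.
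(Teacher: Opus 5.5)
Your choice of $\pi$ (a frontier, which coincides with the shared staircase on the degenerate parts) and your red valleys match the paper, which uses the upper frontier instead of the lower one. Your argument for condition (ii) is also fine. The gap is in the weight accounting. You pay for both peaks of a rectangle $R$ with one green (the lower peak), and for each further used point with a separate red on a non-corner vertex of the lower sub-frontier of $R$. When $R$ is $1\times 2$ or $2\times 1$, that sub-frontier has only one non-corner vertex, yet maximality can force two extra used points.

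Concretely, take $n=5$ and the used points $(0,1),(0,2),(0,3),(0,4),(1,2),(1,3),(1,4),(2,4),(3,4)$. They form a maximal canonical $2$-defective diagram with $m=9$, attaining the bound of \cref{th:density-1page-2defect}, with a single rectangle with corners $(0,2),(0,4),(1,2),(1,4)$. Your coloring has five red valleys, the green $(1,2)$, and room for only one more red, at $(1,3)$. Its weight is therefore at most $8\neq m$.

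The missing idea is the pairing that maximality supplies:
\begin{itemize}
\item Each peak of $R$ conflicts with the opposite peak, so it has exactly one spare conflict.
\item Every non-corner point of the upper (lower) sub-frontier conflicts with the lower (upper) peak.
\item Hence, if only one sub-frontier carries a used non-corner point, any non-corner point of the other sub-frontier can be added while every edge keeps at most two conflicts.
\end{itemize}
So in a maximal diagram the extra boundary points come in pairs, one on each sub-frontier. The one lying on $\pi$ must be colored \emph{green} so that it pays for both; in the example, $(1,3)$ must be green. This is exactly the paper's argument, and it is why an Italian coloring allows two greens between consecutive reds.

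The step you single out as the main obstacle is not one. Consecutive rectangles share a red valley, so in your scheme there is never more than one green between consecutive reds. The ``three consecutive rectangles without an intervening red point'' you plan to rule out cannot occur. In the corrected scheme, (i) follows from the spare-conflict count alone: two used non-corner points on one sub-frontier would give the opposite peak three conflicts.

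Your red charge is still the right tool in one case. A used interior point of $R$ saturates both peaks and has no partner. There you should color the lower peak green and one non-corner vertex of $\pi$ red; such a vertex exists, since $R$ is then at least $2\times 2$. The paper's own proof overlooks this case: it asserts that every non-$1\times 1$ rectangle has two used points on its upper sub-frontier, which fails when the only extra used point is interior.
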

\begin{proof}
    The upper and lower frontiers of $M(\Gamma)$ have $2n-3$ points. Namely, $M(\Gamma)$ has $n-1$ rows and $n-1$ columns. Both the lower and the upper frontiers have $n-2$ vertical edges and $n-2$ horizontal edges (one edge for every pair of consecutive rows/columns); thus the number of edges is $2n-4$ and the number of vertices is $2n-3$. 

    \begin{figure}
        \centering
        {\includegraphics[page=7,width=.5\textwidth]{M-diagram}}
        \caption{Bijection between the vertices of the upper frontier of an M-diagram and the vertices of a path $\pi$. The obtained coloring of the vertices of $\pi$ is an Italian coloring.}
        \label{fi:italian-coloring}
    \end{figure} 
    
    Let $\pi$ be a path with $2n-3$ vertices and bijectively map the vertices of $\pi$ to the points of the upper frontier. Color with red all the vertices that correspond to the valleys, with green the vertices that correspond to all the other used points, and with white the vertices that correspond to all the unused points; refer to \cref{fi:italian-coloring}. The corresponding coloring is a valid Italian coloring. In fact, there cannot be three green vertices between two consecutive red vertices. If so, there would be a rectangle whose upper sub-frontier contains three used points, but this means that the lower peak of this rectangle has three conflicts. We now claim that $m=2n_g+n_r$. To this aim, it is sufficient to prove that for every point corresponding to a green vertex there is another point in the lower frontier that is used and not shared with the upper frontier. Let $R$ be a rectangle of $M(\Gamma)$. If $R$ is a $1 \times 1$ rectangle, then it has only one point corresponding to a green vertex: its upper peak. In the lower frontier there is the lower peak that is used and not shared with the upper frontier. If $R$ is not a $1 \times 1$ rectangle, then it has two points each corresponding to a green vertex: one is the upper peak and the other one is a point that is not a peak nor a valley. Both these points belong to the upper sub-frontier of $R$; since $R$ is a rectangle, the number of points of its lower sub-frontier is equal to the number of points of its upper sub-frontier. Thus, it contains at least another point other than the lower peak. Let $\lambda$ be the set of points in the lower sub-frontier of $R$ that are not shared with the upper frontier and are distinct from the peak. It is easy to see that any of the points in $\lambda$ can be a used point (i.e., using it does not cause a used vertex to have three conflicts). Since $M(\Gamma)$ is maximal, one point in $\lambda$ is used. 
\end{proof}

\begin{lemma}\label{le:italian-weight}
    Let $\pi$ be a path with $2n-3$ vertices having an Italian coloring of weight $\omega$. Then $\omega \leq \left(\frac{10}{3}n-\frac{21+r}{3}\right)$, where $r=n \mod 3$, and this bound is tight for every $n=3s+r$ and $s\in \mathbb{N}^+$.
\end{lemma}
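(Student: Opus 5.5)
The plan is to reduce the lemma to a one-variable optimization over the number of red vertices. First I normalize the coloring, then express the weight and the constraints in terms of the red count, and finally give an explicit coloring showing tightness. Throughout let $N=2n-3$ be the number of vertices of $\pi$, and let $n_r,n_g,n_w$ be the numbers of red, green and white vertices.

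\textbf{Step 1 (no white vertices).} I first argue that we may assume $n_w=0$. Recoloring a white vertex red keeps the coloring Italian. Condition (i) survives because a new red vertex only splits an existing gap between consecutive reds, so no gap gains a green vertex. Condition (ii) is unaffected by adding reds. The weight increases by $1$. Hence a maximum-weight Italian coloring has $n_w=0$, so $n_g=N-n_r$ and
\[ \omega = 2n_g+n_r = 2N-n_r. \]
Maximizing $\omega$ is therefore the same as minimizing $n_r$.

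\textbf{Step 2 (lower bound on $n_r$).} The $n_r$ red vertices split $\pi$ into $n_r-1$ internal gaps, each holding at most two green vertices by condition (i). Condition (ii) says the first two and the last two vertices are red, so:
\begin{itemize}
\item every green vertex lies in some internal gap;
\item the first and last internal gaps are empty.
\end{itemize}
These two gaps are distinct as long as $n_r\ge 3$, which holds because $N\ge 3$. This gives $N-n_r=n_g\le 2(n_r-3)$, that is, $n_r\ge \lceil (N+6)/3\rceil=\lceil (2n+3)/3\rceil$.

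\textbf{Step 3 (evaluate the bound).} Consequently
\[ \omega \le 4n-6-\left\lceil \frac{2n+3}{3}\right\rceil . \]
I then check the three residues $n=3s+r$ separately:
\begin{itemize}
\item $r=0$: the ceiling is $2s+1$, so $\omega\le 10s-7=\frac{10}{3}n-\frac{21}{3}$.
\item $r=1$: the ceiling is $2s+2$, so $\omega\le 10s-4=\frac{10}{3}n-\frac{22}{3}$.
\item $r=2$: the ceiling is $2s+3$, so $\omega\le 10s-1=\frac{10}{3}n-\frac{23}{3}$.
\end{itemize}
In every case this is $\frac{10}{3}n-\frac{21+r}{3}$, which is the claimed bound.

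\textbf{Step 4 (tightness).} Take $n_r=\lceil (2n+3)/3\rceil$ and color the first two and last two vertices red. Place the remaining $N-n_r$ green vertices into the $n_r-3$ middle gaps with at most two per gap, for example as $R\,R\,(G\,G\,R)^{*}\cdots R\,R$ with one gap possibly holding a single green vertex. The inequality $N-n_r\le 2(n_r-3)$ guarantees this fits, and the weight equals the bound.

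Step 1 is the only step needing care, namely checking that recoloring white to red never breaks either Italian condition. Step 2 needs the first and last empty gaps to be distinct, which is what the assumption $n_r\ge 3$ (equivalently $N\ge 3$) provides. The rest is arithmetic.
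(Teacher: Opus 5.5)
Your proof is correct and follows the same route as the paper. Both arguments eliminate white vertices by recoloring them red; you do this in place, which is simpler than the paper's shifting of the whites to the end of the path. Both then use that at most two greens lie between consecutive reds while the first two and last two vertices are red, and finish by evaluating according to the residue of $n$.

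The remaining difference is bookkeeping. Your single inequality $n_g\le 2(n_r-3)$, giving $\omega\le 4n-6-\lceil (2n+3)/3\rceil$, handles all residues at once, whereas the paper optimizes separately in three cases on the residue of $n'=2n-7$ modulo $3$. Your Step~4 also makes the tightness construction explicit.
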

\begin{proof}
    Let $\pi$ be a path with $2n-3$ vertices having an Italian coloring $\chi$ of weight $\omega$. We first claim that if there are white vertices in $\pi$ we can find a new coloring with no white vertex and larger weight than $\omega$. Let $\sigma$ be the sequence of red and green vertices along the path; we can re-color $\pi$ so that the first $|\sigma|$ vertices are colored according to $\sigma$ and the remaining vertices are white (in other words, we move the white vertices to the end of $\pi$). Re-coloring the white vertices red, we obtain a valid Italian coloring $\chi'$ whose weight $\omega'$ is larger than $\omega$. Namely, the number of green vertices is unchanged while the number of red vertices has increased. So we can assume that $\chi$ has no white vertex. 
    We now claim that the maximum weight that can be obtained by using only red and green colors is the one in the statement. The first two and the last two vertices of $\pi$ are necessarily colored red in any Italian coloring. Let $n'=2n-7$ be the number of remaining points and let $n'_r=n_r-4$. Since $n_g+n_r=2n-3$, we have $n_g+n'_r=n'$. We consider the following cases:
    \begin{itemize}
        \item Case 1:  $n'=3s'$ for some $s' \in \mathbb{N}^+$. The value of the weight is $2n_g+n_r=2n_g+n'_r+4$. Further, since $n'$ is a multiple of $3$ and at most two green vertices can exist between two red vertices, it must be $n'_r \geq \frac{n_g}{2}$. The maximum value for $2n_g+n'_r+4$ subject to the constraints $n_g+n'_r=n'$ and $n'_r \geq \frac{n_g}{2}$ is achieved when $n_g$ is maximum, which in this case happens when $n_g=\frac{2n'}{3}$ and $n'_r=\frac{n'}{3}$. In this case the weight of $\chi$ is $\frac{4n'}{3}+\frac{n'}{3}+4=\frac{5n'+12}{3}=\frac{5(2n-7)+12}{3}=\frac{10n}{3}-\frac{23}{3}$. Notice that if $n'=3s'$, we have $n=3s+2$, for  $s=\frac{s'+1}{2}$, and thus the statement holds when $n'=3s'$.
        \item Case 2:  $n'=3s'+1$ for some integer $s' \in \mathbb{N}^+$. The argument is similar to the previous one, but in this case the maximum of $2n_g+n'_r+4$ is achieved when $n_g = \frac{2(n'-1)}{3}+1$ and $n'_r=\frac{(n'-1)}{3}$. In this case the weight of $\chi$ is $\frac{4(n'-1)}{3}+\frac{(n'-1)}{3}+6=\frac{5n'+13}{3}=\frac{5(2n-7)+13}{3}=\frac{10n}{3}-\frac{22}{3}$. In this case, $n'=3s'+1$ implies $n=3s+1$ for $s=\frac{s'}{2}+1$ (with $s'$ even) and the statement holds.
        \item Case 3: $n'=3s'+2$ for some $s' \in \mathbb{N}^+$. The argument is similar to the previous cases, but in this case the maximum of $2n_g+n'_r+4$ is achieved when $n_g=\frac{2(n'-2)}{3}+2$ and $n'_r=\frac{(n'-2)}{3}$. The weight is now $\frac{4(n'-2)}{3}+4+\frac{(n'-2)}{3}+4=
        \frac{5n'+14}{3}=\frac{5(2n-7)+14}{3}=\frac{10n}{3}-\frac{21}{3}$. In this case $n'=3s'+2$ implies $n=3s$ for $s=\frac{s'+3}{2}$ (with $s'$ odd) and the statement holds.\qedhere
\end{itemize}
\end{proof} 

\begin{lemma}\label{le:density-k2-h1-upper}
Any $n$-vertex graph that admits a $2$-defective $1$-queue layout has at most $\left(\frac{10}{3}n-\frac{21+r}{3}\right)$ edges, where $n \geq 3$ and $r=n \mod 3$.
\end{lemma}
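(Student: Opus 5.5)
The plan is to chain the normalization lemmas with the Italian-coloring bound. Let $G$ be an $n$-vertex graph with a $2$-defective $1$-queue layout $\Gamma$ and $m$ edges. I first apply \cref{le:all-rectangles} to $\Gamma$. This gives a $2$-defective $1$-queue layout $\Gamma_1$ with at least $m$ edges whose $M$-diagram is canonical: it is standard, its non-degenerate polygons are rectangles, and every valley is used.

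Next I make the diagram maximal. While some unused point can be used without creating an edge with three conflicts, I add the corresponding edge. Each addition strictly increases the number of edges, and the number of points is bounded, so the process terminates. I must check that the result is still canonical. If it is not, I reapply \cref{le:no-crossing-frontiers,le:shared-used,le:all-rectangles}; each of these never decreases the number of edges. To guarantee termination I alternate the two procedures and use the edge count as a potential, bounded above by $\binom{n}{2}$. The potential strictly increases at each addition step, while the normalization steps do not decrease it. So eventually additions stop, and we have a layout $\Gamma'$ with $m'\ge m$ edges and a maximal canonical $M$-diagram.

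This termination argument is the main obstacle. The normalizations may in principle remove edges while keeping the count fixed, so I would simply take, among all $2$-defective $1$-queue layouts on the same $n$ vertices, one with the maximum number of edges. By the three lemmas we may assume it is canonical. It is then automatically maximal, since any usable point could be added, contradicting the maximality of the edge count. This extremal choice sidesteps the iteration entirely, and I would present the argument this way.

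Now I apply \cref{le:italian-coloring} to $\Gamma'$. It gives a path with $2n-3$ vertices and an Italian coloring of weight $m'$. By \cref{le:italian-weight}, $m'\le \frac{10}{3}n-\frac{21+r}{3}$, and hence $m\le m'$ yields the claim. The degenerate small cases are the last thing to handle. For $n=3$ the bound evaluates to $3$, which is trivially true since $\binom{3}{2}=3$. For $n=4$ it evaluates to $6=\binom{4}{2}$, again fine. So I would note that either the lemmas apply directly, or the bound is at least $\binom{n}{2}$ for tiny $n$.
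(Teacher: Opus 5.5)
Your proposal is correct and follows the paper's proof: you reduce to a maximal canonical $M$-diagram via the normalization lemmas, then combine \cref{le:italian-coloring} with \cref{le:italian-weight}. Choosing an edge-maximum layout is a clean way to justify the paper's terse ``we can assume $M(\Gamma)$ is maximal canonical.'' Your separate check of $n\in\{3,4\}$ usefully covers small cases that the case analysis in \cref{le:italian-weight} does not explicitly reach.
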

\begin{proof}
    By \cref{le:no-crossing-frontiers,le:shared-used,le:at-most-two-points,le:all-rectangles}, we can assume that $\Gamma$ is such that $M(\Gamma)$ is maximal canonical. Assume that the number of edges of $\Gamma$ is larger than the values in statement.  
    By \cref{le:italian-coloring}, there exists a path $\pi$ with $2n-3$ vertices that has an Italian coloring whose weight is $m$ and therefore larger than the values in the statement; but this contradicts \cref{le:italian-weight}.
\end{proof}

\noindent Now, we are ready to prove \cref{th:density-1page-2defect}.

\begin{figure}[tb]
    \centering
    \begin{subfigure}{.45\textwidth}
        \centering        {\includegraphics[page=1,width=.85\textwidth]{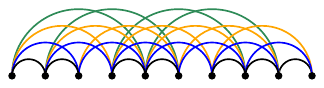}
        \subcaption{\label{fi:density-1-page-2-defective-a}}}
    \end{subfigure}
    \hfil
    \begin{subfigure}{.45\textwidth}
        \centering
        {\includegraphics[page=2,width=.85\textwidth]{figures/density-1-page-2-defective.pdf}
        \subcaption{\label{fi:density-1-page-2-defective-b}}}
    \end{subfigure}
    \caption{Illustration for the proof of \cref{th:density-1page-2defect}. (a) A $2$-defective $1$-queue layout of a graph with $n=10$ vertices. (b) The conflict graph of the layout in \cref{fi:density-1-page-2-defective-a}.}
    \label{fi:density-1-page-2-defective}
\end{figure}

\begin{proof}[Proof of \cref{th:density-1page-2defect}]
The upper bound of the statement is given by \cref{le:density-k2-h1-upper}.
We now show that for every integer $s \geq 1$, there exists an $n$-vertex graph with $n = 3s+r$ vertices and $\frac{10}{3}n - \frac{21+r}{3}$ edges that admits a $2$-defective $1$-queue layout.

Let $s$ be any positive integer and let $n=3s+r$. We show a graph $G$ with $n$ vertices and $\frac{10}{3}n - \frac{21+r}{3}$ edges that admits a $2$-defective $1$-queue layout. We describe a $2$-defective $1$-queue layout of $G$ for each of the three possible values of $r$.
Refer to \cref{fi:density-1-page-2-defective-a} for an illustration of a graph with $10$ vertices (i.e., $s=3$ and $r=1$).
The layout of $G$ consists of:
        \begin{itemize}
        \item $\frac{2}{3}(n-2-r'+1)=\frac{2}{3}n - \frac{1}{3} + \frac{r'}{3}$ 1-hop edges (black in \cref{fi:density-1-page-2-defective-a}), where $r' = (n-2) \mod 3$. Consider the complete sequence of $(n-1)$ 1-hop edges from left to right. We delete the last edge from each triple of consecutive 1-hop edges, except for the very last edge of the sequence;
        \item $(n-2)$ 2-hop edges (blue in \cref{fi:density-1-page-2-defective-a});
        \item $(n-3)$ 3-hop edges (orange in \cref{fi:density-1-page-2-defective-a});
        \item $\frac{2}{3} (n-4-r'') + r'' = \frac{2}{3}n - \frac{8}{3} + \frac{r''}{3}$ 4-hop edges (green in \cref{fi:density-1-page-2-defective-a}), where $r'' = (n-4) \mod 3$. Consider the complete sequence of $(n-4)$ 4-hop edges from left to right. We delete the last edge from each triple of 4-hop edges.
        \end{itemize}
        
Observe that: $(i)$ when $r=0$, we have $r'=1$ and $r''=2$; $(ii)$ when $r=1$, we have $r'=2$ and $r''=0$; $(iii)$ when $r=2$, we have $r'=0$ and $r''=1$.
The graph has in total $\frac{2}{3}n - \frac{1}{3} +\frac{r'}{3} + n-2 + n-3 + \frac{2}{3}n - \frac{8}{3} +\frac{r''}{3} = \frac{10}{3}n - \frac{24}{3} + \frac{r'}{3} + \frac{r''}{3}$ = $\frac{10}{3}n - \frac{21+r}{3}$ edges. 
        
By construction, there is no nesting relationship between $p$-hop edges and $(p+1)$-hop edges, for any $p \in \{1,2,3\}$. Furthermore: $(i)$ each 1-hop edge is in a nesting relationship with at most two edges, one being a 3-hop edge and the other being a 4-hop edge; $(ii)$ each 2-hop edge is in a nesting relationship with at most one edge, which is a 4-hop edge; $(iii)$ each 3-hop edge is in a nesting relationship with at most one edge, which is a 1-hop edge; $(vi)$ each 4-hop edge is in a nesting relationship with at most two edges, one being a 2-hop edge and the other being a 1-hop edge. \cref{fi:density-1-page-2-defective-b} depicts the conflict graph of the layout in \cref{fi:density-1-page-2-defective-a}, which shows the nesting relationships between the different types of edges (the conflict graph has vertex-degree at most two).
Hence, the described layout for $G$ is a $2$-defective $1$-queue layout. 
\end{proof}

\subsection{Defective Stack Layouts}\label{sse:stack-density}

As we will argue in~\cref{sse:stack-defective-page-number}, a graph has $k$-defective stack number $1$ if and only if it is outer $k$-planar (see \cref{th:charact-k-defective-stack}).
This characterization naturally extends the corresponding one by Bernhart and Kainen~\cite{DBLP:journals/jct/BernhartK79} for $k=0$. 
Recall that a graph is outer $k$-planar if it admits a drawing in which all vertices are on the outer face and each edge has at most $k$ crossings.
For $k \geq 1$, it also implies the next theorem using known density results for outer~\cite{DBLP:journals/algorithmica/AuerBBGHNR16,abrego2024book} and general~\cite{DBLP:journals/combinatorica/PachT97} $k$-planar graphs.

\begin{theorem}[\cite{DBLP:journals/algorithmica/AuerBBGHNR16},\cite{DBLP:journals/combinatorica/PachT97},\cite{abrego2024book}]\label{th:k-defective-1-stack-density}
Any $n$-vertex graph that admits a $k$-defective $1$-stack layout has at most: 
\begin{enumerate*}[label=(\roman*)]
    \item $\frac{5}{2}n - 4$ edges, if $k=1$,
    \item $3n-5$ edges, if $k=2$,
    \item $\frac{13}{4}n-\frac{11}{2}$ edges, if $k=3$,
    \item $O(\sqrt{k}n)$, otherwise.
\end{enumerate*}
Also, the first three bounds are~tight.
\end{theorem}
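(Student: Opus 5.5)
The plan is to reduce the statement to known density results for outer $k$-planar graphs and for general $k$-planar graphs. The key bridge is the characterization announced just before the theorem, to be proved as \cref{th:charact-k-defective-stack}: a graph admits a $k$-defective $1$-stack layout if and only if it is outer $k$-planar.

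The direction we need is easy and I would make it explicit here. Given a $k$-defective $1$-stack layout with order $v_0\prec\dots\prec v_{n-1}$, place the vertices in this order on a circle and draw each edge as a straight chord. Two chords cross exactly when their endpoints alternate around the circle. This is exactly the crossing relationship in the linear order, and it can only occur for independent edges. Hence each edge has at most $k$ crossings, all vertices lie on the outer face, and the drawing is outer $k$-planar. The edge count is therefore bounded by the maximum edge count of $n$-vertex outer $k$-planar graphs.

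Next I would invoke the known bounds case by case:
\begin{itemize}
\item For $k=1$: outer $1$-planar graphs have at most $\frac{5}{2}n-4$ edges, by Auer et al.~\cite{DBLP:journals/algorithmica/AuerBBGHNR16}.
\item For $k=2$ and $k=3$: outer $2$-planar and outer $3$-planar graphs have at most $3n-5$ and $\frac{13}{4}n-\frac{11}{2}$ edges, respectively, by the recent results of \cite{abrego2024book}.
\item For general $k$: every outer $k$-planar graph is in particular $k$-planar, so the Pach--T\'oth bound of $4.108\sqrt{k}\,n$ edges~\cite{DBLP:journals/combinatorica/PachT97} gives $O(\sqrt{k}n)$.
\end{itemize}

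For tightness, I would use the reverse direction of the characterization, or rather its easy half. An outer $k$-planar drawing, read along the outer face, yields a vertex order in which two edges cross in the drawing whenever their endpoints alternate. This holds because any two edges with alternating endpoints on a convex-position outer boundary must cross an odd number of times. In the extremal constructions crossing pairs cross exactly once, so the extremal outer $k$-planar graphs from the cited papers give $k$-defective $1$-stack layouts attaining the bounds.

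The only delicate point is this reverse correspondence. It must be checked that the extremal drawings are simple, with adjacent edges not crossing and each pair crossing at most once, so that the number of conflicts per edge in the induced order does not exceed the number of crossings per edge. Simple drawings of this kind can be assumed without increasing crossings by standard redrawing, and this is where I expect the main care to be needed. The remaining steps are citation.
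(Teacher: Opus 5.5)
Your proposal is correct and follows the paper's route: the paper also derives the statement from the characterization in \cref{th:charact-k-defective-stack} (a graph has $k$-defective stack number $1$ iff it is outer $k$-planar), combined with the cited density bounds for outer $1$-, $2$- and $3$-planar graphs and the Pach--T\'oth bound for $k$-planar graphs. The simplicity check you flag is unnecessary, since in any outer drawing with edges inside the disk an alternating pair crosses at least once, so the conflicts of an edge in the induced order never exceed its crossings.
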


\begin{corollary}\label{th:1-defective-stack-density}
Any $n$-vertex graph that admits a $k$-defective $h$-stack layout has at most 
\begin{enumerate*}[label=(\roman*)]
    \item $(\frac{3}{2}h+1)n - 4h$ edges, if $k=1$,
    \item $(2h+1)n-5h$ edges, if $k=2$,
    \item $(\frac{9}{4}h+1)n-\frac{11}{2}h$, if $k=3$,
    \item $O(\sqrt{k}hn)$, otherwise.
\end{enumerate*}
\end{corollary}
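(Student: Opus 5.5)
The corollary follows from \cref{th:k-defective-1-stack-density} by splitting the layout into its defective stacks and bounding each one separately. Consider a $k$-defective $h$-stack layout $\mathcal{L}$ of an $n$-vertex graph $G$ with defective stacks $S_1,\dots,S_h$. For each $i$, the subgraph $G_i$ on all $n$ vertices with edge set $S_i$, together with the restriction of the vertex order of $\mathcal{L}$, is a $k$-defective $1$-stack layout. Therefore $|S_i|$ is at most the bound $f_k(n)$ given by \cref{th:k-defective-1-stack-density}.

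Summing over the $h$ stacks directly gives $h\cdot f_k(n)$. This is weaker than the claimed bounds, whose leading coefficient has the form $(c_k-1)h+1$, so I will not count the spine edges $h$ times. The refinement is as follows.

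\begin{enumerate}
\item Let $E_1$ be the set of $1$-hop edges of $G$ together with the edge $(v_0,v_{n-1})$, if present. These at most $n$ edges cross no other edge, so they can be assigned to any stack without creating conflicts.
\item Work with the maximal situation: add all of $E_1$ to every stack $S_i$. Each stack remains a $k$-defective $1$-stack layout, and its outer $k$-planar drawing (via \cref{th:charact-k-defective-stack}) then has the full Hamiltonian cycle on its boundary.
\item Show that the extremal bound of \cref{th:k-defective-1-stack-density} is attained only with all $n$ boundary edges present. Then each stack contributes at most $f_k(n)-n$ edges outside $E_1$, where $f_k(n)=c_kn-d_k$ is the bound of \cref{th:k-defective-1-stack-density}, since any maximal stack can be augmented by the boundary cycle.
\end{enumerate}

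Combining these steps gives $|E(G)|\le n+h\,(f_k(n)-n)=((c_k-1)h+1)n-h\,d_k$. Substituting the constants:
\begin{itemize}
\item for $k=1$, $c_1=\frac52$ and $d_1=4$, giving $(\frac32 h+1)n-4h$;
\item for $k=2$, $c_2=3$ and $d_2=5$, giving $(2h+1)n-5h$;
\item for $k=3$, $c_3=\frac{13}{4}$ and $d_3=\frac{11}{2}$, giving $(\frac94h+1)n-\frac{11}{2}h$;
\item for general $k$, the trivial sum $h\cdot O(\sqrt{k}n)$ already suffices.
\end{itemize}

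The main obstacle is step~3: justifying that each stack's non-boundary edges number at most $f_k(n)-n$. I will prove this by noting that adding every missing boundary edge to a $k$-defective $1$-stack layout introduces no crossings, so the augmented layout is still $k$-defective and obeys the bound of \cref{th:k-defective-1-stack-density}. Its non-boundary edges then number at most $f_k(n)-n$, and removing boundary edges does not change this count.
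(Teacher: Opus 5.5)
Your proposal is correct and follows the paper's proof: set aside the at most $n$ edges $(v_i,v_{i+1})$ (indices mod $n$), bound the remaining edges of each defective stack by $f_k(n)-n$ via \cref{th:k-defective-1-stack-density}, and sum to get $n+h\,(f_k(n)-n)$. Your augmentation argument (adding the boundary cycle creates no crossings, so the augmented stack still obeys the bound) is exactly the justification the paper leaves implicit when it says it ``neglects'' these edges; note that you must add the full cycle, including edges not in $G$, rather than only $E_1$, as your final paragraph correctly does.
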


\begin{proof}
We give the proof for the case $k=1$; the cases $k=2$ and $k=3$ are analogous, while the case $k>3$ trivially follows from \cref{th:charact-k-defective-stack}.
Let $G$ be an $n$-vertex graph admitting a $1$-defective $h$-stack layout $\mathcal{L}$. Let $v_0,\ldots,v_{n-1}$ be the vertices of $G$ as they appear in $\mathcal{L}$. We say that an edge $e$ of $G$ \emph{connects consecutive vertices} in $\mathcal{L}$ if there exists an $i$ in $\{0,\ldots,n-1\}$ such that $e=(v_i,v_{i+1})$; indices are taken$\mod n$. By \cref{th:charact-k-defective-stack}, it follows that each defective stack of $\mathcal{L}$ is an outer 1-planar graph on $n$ vertices. By neglecting the (at most $n$) edges connecting consecutive vertices in $\mathcal{L}$, it follows that each defective stack of $\mathcal{L}$ has at most $\frac{3}{2}n - 4$ edges by \cref{th:k-defective-1-stack-density}. Since $\mathcal{L}$ has $h$ defective stacks in total, this gives at most $\frac{3}{2}hn-4h$ edges in $G$. Thus, $G$ has at most $(\frac{3}{2}h+1)n-4h$ edges as claimed, since the edges connecting consecutive vertices in $\mathcal{L}$ are at most $n$.
\end{proof}

We remark that a tight bound on the maximum number of edges of $n$-vertex graphs that admit classic stack layouts (i.e., when $k=0$) with $h \geq 1$ stacks is $(h+1)n - 3h$~\cite{DBLP:journals/jct/BernhartK79,DBLP:journals/dmtcs/DujmovicW04}.

\section{Defective Queue and Stack Number for Some Graph Classes}\label{se:defective-page-number}

In this section we prove upper and lower bounds on the defective queue and stack number of some notable graph classes.

\subsection{Defective Queue Number}\label{sse:queue-defective-page-number}

It is known that outerplanar graphs have queue number $2$~\cite{DBLP:conf/cocoon/RengarajanM95}, i.e., there exist outerplanar graphs that do not admit a $1$-queue layout and $2$ queues always suffice. It is natural to ask whether a constant defectiveness is enough to lay out every outerplanar graph on one defective queue, i.e., whether every outerplanar graph has $k$-defective queue number $1$ for $k=O(1)$. Unfortunately, this is not always the case. The next lemma shows that there exist outerplanar graphs requiring linearly many defects in any such $k$-defective $1$-queue layout. 

\begin{lemma}\label{le:outerplanar-defectiveness-1-page}
For every integer $j \geq 3$, there exists an outerplanar graph $G$ with $n = 3j+1$ vertices such that every defective $1$-queue layout of $G$ has defectiveness at least $\frac{n-1}{3} - 2$. 
\end{lemma}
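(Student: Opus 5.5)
The plan is an explicit construction followed by a pigeonhole argument over the position of a distinguished vertex. For $n=3j+1$, take a vertex $c$ together with $j$ disjoint paths $P^{(1)},\dots,P^{(j)}$, each on three vertices $a_i - b_i - d_i$. Make $c$ adjacent to every vertex of every path. Each $c+P^{(i)}$ is a fan with two triangles, so it is outerplanar. The gadgets share only the cut vertex $c$, so gluing them there keeps $G$ outerplanar, and $G$ has $3j+1$ vertices.

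Fix an arbitrary linear order $\prec$ and a single defective queue. I aim to exhibit one edge that nests at least $j-2$ other edges. Let $\ell$ be the leftmost and $r$ the rightmost neighbour of $c$; since $c$ is adjacent to everything except itself, $\ell$ and $r$ are the extreme vertices of the order, unless $c$ itself is extreme. For every gadget $i$, I classify it by where its two path edges $(a_i,b_i)$ and $(b_i,d_i)$ lie relative to $c$:
\begin{itemize}
\item \emph{right}: some path edge has both endpoints to the right of $c$;
\item \emph{left}: some path edge has both endpoints to the left of $c$;
\item \emph{spanning}: both path edges have $c$ strictly between their endpoints.
\end{itemize}
A right path edge not incident to $r$ is strictly nested inside $(c,r)$. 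At most one gadget contains $r$, and even that gadget has a chosen path edge avoiding $r$ unless $r=b_i$, which loses at most one gadget. Hence $(c,r)$ nests at least $(\#\text{right})-1$ edges, and symmetrically $(\ell,c)$ nests at least $(\#\text{left})-1$ edges. For spanning gadgets, $b_i$ lies on one side of $c$ and $a_i,d_i$ on the other. I will compare the spoke $(c,b_i)$ with the spoke of another spanning gadget $i'$ whose $b_{i'}$ is farther from $c$ on the same side. Alternatively I would use the outermost spanning path edge $(u,v)$ and the spokes $(c,x)$ with $x$ strictly between $u$ and $v$: every such spoke is nested by $(u,v)$. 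This edge nests the spokes towards all vertices of other spanning gadgets that lie inside $(u,v)$, again up to one exceptional gadget.

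The combination step is to take the edge realizing the maximum of these three counts. A plain maximum only gives about $j/3$, so this is not yet enough. To reach $j-2$, I would instead find a single edge that simultaneously covers all three classes up to two exceptions. The natural candidate is the longest edge $(\ell,r)$ whenever $c$ is not an extreme vertex. It nests every edge that avoids both $\ell$ and $r$, and at most two gadgets contain $\ell$ or $r$. Every other gadget contributes at least one path edge avoiding $\ell$ and $r$, so $(\ell,r)$, if it is an edge, nests at least $j-2$ edges. If $(\ell,r)$ is not an edge, or $c$ is extreme, I fall back on $(c,r)$ or $(\ell,c)$ with $c$ at one end, where all gadgets are on one side and the first count gives $j-1$.

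The main obstacle is exactly this case analysis: when $c$ is interior and $\ell,r$ are not adjacent, the longest edge need not nest everything. I may need to modify the construction so that a long edge is forced, for instance by also joining consecutive gadgets $d_i - a_{i+1}$ and $d_j - a_1$ around $c$ to create an outer cycle. That change requires rechecking outerplanarity and the count $n=3j+1$. I would first try to handle interior $c$ directly via the spanning-edge argument above. If that fails, I would switch to a wheel-like outer cycle, where the cycle edge joining the two extreme vertices is guaranteed to exist (up to a short detour), to certify $j-2=\frac{n-1}{3}-2$ nestings.
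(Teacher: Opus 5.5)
\textbf{The gap is in your construction.} The graph $c+jP_3$ admits $1$-queue layouts whose defectiveness is far below $j-2$, so no argument (longest edge or otherwise) can prove the bound for it.

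To see this, put $c$ in the middle. Place about a third of the gadgets entirely to the left of $c$ in the order $b_1,\dots,b_t,a_1,d_1,\dots,a_t,d_t$, and the mirror image of this block entirely to the right of $c$. Let the remaining gadgets span $c$: their $a$- and $d$-vertices go just left of $c$, their $b$-vertices just right of $c$, both in the same gadget order. Then no two path edges nest, and every edge is in at most roughly $\frac{2j}{3}$ nesting relations.

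Concretely, for $j=5$ ($n=16$) the order $b_1,b_2,a_1,d_1,a_2,d_2,a_3,d_3,c,b_3,d_4,a_4,d_5,a_5,b_4,b_5$ gives a $2$-defective $1$-queue layout. Only the spokes $(b_1,c)$ and $(c,b_5)$ reach two nestings; every other edge has at most one. The lemma requires defectiveness at least $j-2=3$, so your graph fails already at $j=5$.

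The reason is that your gadgets contribute only $2j$ non-spoke edges. The left/spanning/right split can therefore be balanced into classes of size about $\frac{2j}{3}$, so your remark that a plain maximum is too weak is actually tight for this graph. Separately, your fallback to $(c,r)$ or $(\ell,c)$ does not cover the typical case, where $c$ is interior and $(\ell,r)$ is not an edge.

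\textbf{The missing idea.} Chain all $3j$ non-central vertices into a single path, i.e., use the fan $K_1+P_{3j}$. This is your proposed modification with the edges $d_ia_{i+1}$ for $i<j$, but without $d_ja_1$: that closing edge produces a wheel, which contains a subdivision of $K_4$ and is not outerplanar.

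Now the backbone has $3j-1=n-2$ edges. Split them into $E_l$, $E_m$, $E_r$ according to whether they lie left of $c$, span $c$, or lie right of $c$. Then $\max\{|E_l|,|E_m|,|E_r|\}\ge j$, and a plain maximum suffices because each class has its own witness:
\begin{itemize}
\item the spoke to the leftmost vertex nests every edge of $E_l$ except the at most two incident to that vertex;
\item the spoke to the rightmost vertex does the same for $E_r$;
\item the spoke from $c$ to the vertex immediately preceding (or following) it is nested by every edge of $E_m$ except at most two.
\end{itemize}
This gives defectiveness at least $j-2=\frac{n-1}{3}-2$, which is exactly the paper's proof.
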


\begin{proof}
Given any integer $j \geq 3$, consider the outerplanar graph $G$ with $n = 3j+1$ vertices consisting of a path of $n-1$ vertices, called the \emph{backbone} of $G$, plus a vertex, called the \emph{apex} of $G$, connected to all the vertices of the backbone. The edges of $G$ that connect the apex to the backbone are called the \emph{fan edges} while the edges of the backbone are called the \emph{backbone edges}. \cref{fi:fan-a} shows an example of $G$ for $j=4$, and hence $n = 13$. The apex is the red vertex with label 0; the backbone edges are bold and the fan edges are gray. Note that $G$ consists in total of $m = 2n-3$ edges, namely $n-1$ fan edges and $n-2$ backbone edges.

Let $\mathcal{L}$ be any linear layout of $G$ on one defective queue. We show that $\mathcal{L}$ necessarily contains a $k$-defect with $k \geq \frac{(n-1)}{3} - 1$. Denote by $a$ the apex of $G$, and consider the position of $a$ in the linear order $\prec$ of $\mathcal{L}$. Denote by $E_m$ the subset of backbone edges that connect a vertex to the left of $a$ with a vertex to the right of $a$ (i.e., each of these edges cuts the vertical line passing through $a$). Also, let $E_l$ be the subset of backbone edges that lie entirely to the left of $a$, and let $E_r$ be the subset of backbone edges that lie entirely to the right of $a$.  Clearly, $E_m$, $E_l$, and $E_r$ form a partition of the backbone edges, thus $|E_m| + |E_l| + |E_r| = n-2$.
Observe that, if $|E_m| \geq 2$ then $\mathcal{L}$ contains a $k$-defect with $k \geq |E_m| - 2$. Indeed, if $x$ is the vertex that immediately precedes (or that immediately follows) $a$ in the linear order $\prec$, then the fan edge connecting $x$ to $a$ is nested in at least $|E_m| -2$ edges of $E_m$ (at most two of the edges in $E_m$ are incident to $x$). Also, if $|E_l| \geq 2$ then $\mathcal{L}$ contains a $k$-defect with $k \geq |E_l| - 2$. Indeed, let $y$ be the first vertex of the linear order $\prec$ (i.e., the leftmost vertex of $\mathcal{L}$). There are at least $|E_l|-2$ edges of $E_l$ that are nested in the edge $(a,y)$ (at most two of the edges in $E_l$ are incident to $y$). With a symmetric argument, if $|E_r| \geq 2$ then $\mathcal{L}$ contains a $k$-defect with $k \geq |E_r| - 2$. Therefore, $\mathcal{L}$ contains a $k$-defect such that $k \geq M = \max\{|E_l|-2, |E_m|-2, |E_r|-2\}$. Since $n-2$ is not divisible by 3 (whereas $n-1$ is), the value of $M$ is minimized when two among $E_l, E_m$, and $E_r$ have the same size, namely size $\frac{n-1}{3}$, and the other one has one fewer element, namely it has size $\frac{n-1}{3}-1$; in this case $M=\frac{n-1}{3}-2$, which proves the claim.

A construction of a linear layout that matches $k = \frac{n-1}{3}-2$ for the graph in \cref{fi:fan-a} is shown in \cref{fi:fan-b}. In this construction $|E_l|=|E_r|=\frac{n-1}{3}$ and $|E_m|=\frac{n-1}{3}-1$. Also, both the leftmost and the rightmost vertices of the layout are degree-2 vertices of the backbone.
\end{proof}

\begin{figure}[tb]
    \centering
    \begin{subfigure}{.48\textwidth}
        \centering
        {\includegraphics[page=1,width=.8\textwidth]{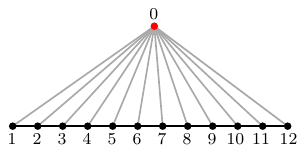}}
        \subcaption{\label{fi:fan-a}}
    \end{subfigure}
    \hfil
    \begin{subfigure}{.48\textwidth}
        \centering
        {\includegraphics[page=2,width=.8\textwidth]{figures/fan.pdf}}
        \subcaption{\label{fi:fan-b}}
    \end{subfigure}   
    \caption{Illustration for the proof of \cref{le:outerplanar-defectiveness-1-page}.}
    \label{fi:fan}
\end{figure}

On the positive side, we prove that outer $1$-planar graphs, which include the outerplanar ones, have $1$-defective queue number $2$.

\begin{theorem}\label{th:outer-1-planar}
Outer $1$-planar graphs have $1$-defective queue number $2$.
\end{theorem}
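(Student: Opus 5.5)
The statement has two halves. The lower bound, that some outer $1$-planar graph has no $1$-defective $1$-queue layout, follows from \cref{le:outerplanar-defectiveness-1-page}. Take $j=4$, so $n=13$. Every defective $1$-queue layout of the fan graph $G$ then has defectiveness at least $\frac{n-1}{3}-2=2>1$. Since outerplanar graphs are outer $1$-planar, this witness suffices. (It also covers planar graphs in \Cref{tb:queue-stack-number-1}.) The real work is the upper bound: every outer $1$-planar graph admits a $1$-defective $2$-queue layout.

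For the upper bound I plan to reduce to maximal outer $1$-planar graphs and then adapt the BFS-based $2$-queue layout of outerplanar graphs by Heath and Rosenberg. We may assume $G$ is connected and maximal outer $1$-planar. Fix an outer $1$-planar drawing with vertices on a circle. Every crossing pair $(a,c),(b,d)$ then has its four endpoints inducing a $K_4$ whose boundary $a,b,c,d$ appears in this cyclic order.

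Remove one edge from each crossing pair, chosen consistently, for instance the edge whose endpoints are farther apart in the BFS order defined below. The result is an outerplanar graph $H$. The removed edges form a matching-like set $X$, in which each removed edge crosses exactly one edge of $H$. I will then build a vertex order and a $2$-queue layout of $H$ in the style of Heath and Rosenberg:
\begin{itemize}
\item Order the vertices by BFS levels from a vertex on the outer face, respecting the planar order within each level.
\item Put the intra-level edges in queue~$1$. In the outerplanar case these are nonnested "short" edges.
\item Put the inter-level edges in queue~$2$. The standard argument shows they form a queue.
\end{itemize}
Finally, each removed edge $e\in X$ is assigned to a queue in which it nests with at most one edge, and its insertion must create at most one new nesting for every existing edge of that queue.

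The key step is analyzing where a crossing pair lands in this order. The endpoints $a,b,c,d$ of a crossing $K_4$ span at most two consecutive BFS levels, since $a,c$ and $b,d$ are adjacent and the boundary cycle has length $4$. Within a level, vertices are ordered along the outer cycle. So the diagonal that crosses in the drawing either crosses or nests with the other diagonal in the linear order. A nest only happens in the configuration where both diagonals go between the same two levels with reversed orientation.

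I will place each removed diagonal in the queue of the same type as its partner: queue~$1$ if intra-level, queue~$2$ if inter-level. I then argue that its only possible nesting conflict is with its partner, or with the boundary edges of its own $K_4$ in the degenerate case. Because distinct crossing $K_4$'s share at most an edge and the BFS order respects planarity, no edge of $H$ nests with two diagonals from $X$.

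The hard part is this last verification: showing that every edge, old or new, is in conflict with at most one edge of its queue. If a direct BFS argument becomes messy, my fallback is induction on the weak dual tree of the maximal outer $1$-planar graph, whose pieces are triangles and crossing $K_4$'s. I would attach one piece at a time to the current layout, inserting its new vertex or vertices next to the attachment edge's endpoints in the level order. At each attachment I would check directly that each new edge gets at most one nesting partner in its queue and that each old edge gains at most one.
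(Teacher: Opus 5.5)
Your lower bound is correct and matches the paper ($j=4$, $n=13$ forces defectiveness $2$). The upper bound has a genuine gap. The verification you postpone is the whole theorem, and the set-up leading to it does not hold together.

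\emph{Which layering?} Suppose the levels are BFS levels of $H$. Then choosing the removed diagonal by ``the BFS order defined below'' is circular. Worse, your claim that a crossing $K_4$ spans at most two consecutive levels uses the adjacency of $a$ and $c$, which is lost in $H$. There the endpoints of the removed diagonal are at distance $2$, so you can have $a\in L_i$, $b,d\in L_{i+1}$, $c\in L_{i+2}$. The reinserted $(a,c)$ then nests every edge of its queue with both endpoints strictly between $a$ and $c$, and that window contains all of $L_{i+1}$. Suppose instead the levels are BFS levels of $G$. Then the $K_4$ claim holds, but this is not a BFS layering of $H$, so Heath--Rosenberg cannot be quoted. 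You would have to prove directly that, for a suitable order within levels, intra-level edges of $G$ never nest and two inter-level edges nest only if they are the two diagonals of one crossing $K_4$. ``The BFS order respects planarity'' does not do this.

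\emph{The assignment rule is wrong as stated.} Take two crossing $K_4$'s with cyclic orders $r,p_1,q_1,s_1$ and $r,s_1,q_2,s_2$, glued along $(r,s_1)$, and run BFS in $G$ from $r$. The order is $r,p_1,q_1,s_1,q_2,s_2$. Your rule removes $(r,q_2)$ and, because its partner $(s_1,s_2)$ is intra-level, puts it in the intra-level queue. There it nests both $(p_1,q_1)$ and $(q_1,s_1)$, giving two conflicts. Each edge must be assigned by its own type; in this mixed configuration the two diagonals cross rather than nest. Also, the boundary edges of a $K_4$ share an endpoint with each of its diagonals, so they can never nest them.

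Your fallback, induction over the weak dual, is the paper's route, but as described it lacks the invariant that makes it work. A per-attachment check that ``each old edge gains at most one'' nesting bounds nothing, since an old edge can gain one nesting at each of many later attachments. You need a global certificate.

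The paper draws the planar skeleton $G_p$ straight-line with convex faces. $G_p$ is $G$ with both edges of every crossing pair deleted, so its bounded faces are triangles and quadrangles. The drawing maintains two invariants: the outer cycle consists of two strictly $x$-monotone paths, and every edge has span $0$ or $1$ (its endpoints lie on the same level or on consecutive levels). A leaf face attached along an edge $(u,w)$ of the upper envelope gets its one or two new vertices strictly between $x(u)$ and $x(w)$. They go on the level of the higher endpoint if $(u,w)$ has span $1$, and one level above if it has span $0$; the lower envelope is symmetric.

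Each crossing pair is then either two span-$1$ edges or one span-$0$ and one span-$1$ edge, drawn inside its convex quadrangle, so the final drawing is $1$-planar. Order the vertices level by level, and by $x$ within a level. Span-$0$ edges never nest. Two span-$1$ edges nest if and only if they cross (\cref{le:charact-arched-level-1-planar}), which gives at most one conflict per edge in one stroke. Your BFS version would need this level-drawing structure proved for BFS layers of $G$ before the final check becomes routine.
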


\begin{proof}
The lower bound follows from \cref{le:outerplanar-defectiveness-1-page}.
To prove the upper bound, we adjust a construction by Felsner, Liotta, and Wismath~\cite{DBLP:journals/jgaa/FelsnerLW03} and independently by Bannister, Devanny, Dujmović, Eppstein, and Wood~\cite{DBLP:journals/algorithmica/BannisterDDEW19} that yields planar straight-line drawings of biconnected outerplanar graphs, so-called \emph{weakly leveled planar drawings}, in which the vertices are drawn on a set of pairwise distinct horizontal lines (called \emph{levels}), the edges of the same level (\emph{span-$0$ edges}) form a forest of paths, while the remaining edges connect vertices on consecutive levels (\emph{span-$1$ edges}); variants have also been studied by Dujmović, P{\'o}r, and Wood~\cite{DBLP:journals/dmtcs/DujmovicPW04} and by Bekos, Gronemann, and Raftopoulou~\cite{DBLP:journals/algorithmica/BekosGR23}. 

Let $G$ be an outer $1$-planar graph. Without loss of generality, we assume that $G$ is maximal, that is, the addition of any edge yields a graph that is not outer $1$-planar. Let~$G_p$ be the \emph{planar skeleton} of $G$, i.e., the subgraph of $G$ obtained by removing all pairs of crossing edges. Since $G$ is maximal, both $G$ and $G_p$ are biconnected~\cite{DBLP:journals/algorithmica/AuerBBGHNR16}. Further, each bounded face of~$G_p$ is either of length~$4$ (if it contains a pair of crossing edges in $G$) or of length~$3$ (otherwise). We present a recursive construction to produce a planar straight-line drawing~$\Gamma(G_p)$ of $G_p$, whose bounded faces are convex, that maintains the following invariants:

\begin{enumerate} [label=I.\arabic*\upshape,ref=I.\arabic*]
\item\label{inv:1} The cycle delimiting the outer face consists of two strictly $x$-monotone paths, referred to as \emph{upper} and \emph{lower envelopes}, respectively.
\item\label{inv:2} The $y$-coordinates of the endvertices of each edge differ either by $0$ (span-$0$ edge) or by $1$ (span-$1$ edge). 
\end{enumerate}

In the base of the recursion, $G_p$ consists either of a solely triangular face or of a solely quadrangular face that can be trivially drawn convex with span-$0$ and span-$1$ edges satisfying Invariants~\ref{inv:1} and~\ref{inv:2}. Assume now that $G_p$ has at least two bounded faces. Let $f$ be a bounded face corresponding to a leaf of the weak dual\footnote{The weak dual of a biconnected outerplanar graph $H$ is a tree having a vertex for each bounded face of the outerplanar embedding of $H$, while an edge between two vertices $u$ and $v$ exists in the weak dual if and only if the faces corresponding to $u$ and $v$ are adjacent in the outerplanar embedding of $H$.} of $G_p$. If $f$ is a face of length~$3$, then it contains a vertex $v$ of degree $2$. Otherwise (namely, $f$ is a face of length~$4$), it contains two adjacent vertices $v_1$ and $v_2$ of degree~$2$. In the former case, we remove $v$ from $G_p$, while in the latter case we remove $v_1$ and $v_2$ from $G_p$. This yields a new outerplanar graph $G_p'$ with fewer vertices, which admits a planar straight-line drawing $\Gamma(G_p')$ satisfying Invariants~\ref{inv:1} and \ref{inv:2}, whose bounded faces are convex. By Invariant~\ref{inv:1}, none of the edges on the outer face is drawn vertically in $\Gamma(G_p')$. To obtain drawing $\Gamma(G_p)$ of $G_p$, we proceed as follows. 

Assume that $G_p$ was obtained by removing a single degree-$2$ vertex $v$ (two adjacent degree-$2$ vertices $v_1$ and $v_2$) from $G$. Let $u$ and $w$ be the neighbors of $v$ (of $v_1$ and $v_2$) in $G_p$. It follows that $(u,w)$ is an edge of $G$ that lies on the outer face of $\Gamma(G_p')$. By Invariant~\ref{inv:2}, $(u,w)$ is either a span-$0$ or a span-$1$ edge. Assume that $(u,w)$ is along the upper envelope of $\Gamma(G')$; the case in which it is along the lower envelope is symmetric. Assume first that $(u,w)$ is a span-$1$ edge in $\Gamma(G_p')$ and, without loss of generality, that $y(u)=y(w)+1$; see \cref{fig:outer-1-planar-queue-1,fig:outer-1-planar-queue-2}. Then, we insert vertex $v$ (vertices $v_1$ and $v_2$) into $\Gamma(G_p')$ by setting $y(v)=y(u)$ and $x(v)=\frac{1}{2}x(u)+\frac{1}{2}x(w)$ ($y(v_1)=y(v_2)=y(u)$, $x(v_1)=\frac{2}{3}x(u)+ \frac{1}{3}x(w)$ and $x(v_2)=\frac{1}{3}x(u)+ \frac{2}{3}x(w)$). Hence, the newly introduced edges are span-$0$ and span-$1$ edges, which implies that Invariant~\ref{inv:2} is maintained. Since (the non-vertical) edge $(u,w)$ of the upper envelope of $\Gamma(G_p')$ is replaced by the $x$-monotone path $u \rightarrow v \rightarrow w$ ($u \rightarrow v_1 \rightarrow v_2 \rightarrow w$) in $\Gamma(G_p)$, Invariant~\ref{inv:1} is also maintained. Assume now that $(u,w)$ has span $0$ in $\Gamma(G_p')$; see \cref{fig:outer-1-planar-queue-2,fig:outer-1-planar-queue-4}. Then, we insert vertex $v$ (vertices $v_1$ and $v_2$) into $\Gamma(G_p')$ such that $y(v)=y(u)+1$ and $x(v)=\frac{1}{2}x(w)+\frac{1}{2}x(u)$ ($y(v_1)=y(v_2)=y(u)+1$, $x(v_1)=\frac{2}{3}x(u)+ \frac{1}{3}x(w)$ and $x(v_2)=\frac{1}{3}x(u)+ \frac{2}{3}x(w)$). Similarly to the previous case, one can argue that Invariants~\ref{inv:1} and~\ref{inv:2} are maintained. Also, it is not difficult to see that the newly introduced bounded face is convex, by construction.

To obtain a $1$-defective $2$-queue layout of $G$, we proceed by inserting all pairs of crossing edges of $G\setminus G_p$ into $\Gamma(G_p)$ to obtain first a 1-planar drawing $\Gamma(G)$ of $G$ as follows. Let $e$ and $e'$ be such a pair and let $f$ be the bounded face of $G_p$ containing $e$ and $e'$. By the construction of face $f$ in $\Gamma(G_p)$, either both edges $e$ and $e'$ are span-$1$ edges, or one, say $e$, is a span-$0$ edge and the other one, that is, $e'$, is a span-$1$ edge. In the former case, since $f$ is convex, both $e$ and $e'$ can be drawn as straight-line segments completely in the interior of $f$. In the latter case, edge $e$ has to be drawn as a curve in the interior of $f$ to avoid edge overlaps, while edge $e'$ is drawn as a straight-line segment in the interior of $f$ (refer to the gray-colored edges of \cref{fig:outer-1-planar-queue-2,fig:outer-1-planar-queue-4}).  We transform drawing $\Gamma(G)$ into a $1$-defective $2$-queue layout of $G$ as follows
:
\begin{enumerate}[label=(\emph{\roman*})]
\item For any two vertices $u$ and $v$ of $G$, $u \prec v$ if and only if either $y(u) > y(v)$, or $y(u)=y(v)$ and $x(u)<x(v)$ in $\Gamma(G)$, and 
\item the edges assigned to the first (second) defective queue in the layout are span-$0$ edges (span-$1$ edges, respectively) in~$\Gamma(G)$.
\end{enumerate}
The edges assigned to the first (defective) queue form a $0$-defective layout, as no edge nests another edge. By \cref{le:charact-arched-level-1-planar}, the edges assigned to the second defective queue form a $1$-defective queue layout, completing the proof.
\end{proof}

\begin{figure}[tb]
    \centering
    \begin{subfigure}{.48\textwidth}
    \centering
    {\includegraphics[page=1,width=.8\textwidth]{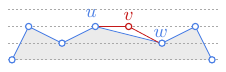}
    \subcaption{\label{fig:outer-1-planar-queue-1}}}    
    \end{subfigure}
    \begin{subfigure}{.48\textwidth}
    \centering
    {\includegraphics[page=3,width=.8\textwidth]{figures/outer-1-planar-queue.pdf}
    \subcaption{\label{fig:outer-1-planar-queue-3}}}    
    \end{subfigure}
    \begin{subfigure}{.48\textwidth}
    \centering
    {\includegraphics[page=2,width=.8\textwidth]{figures/outer-1-planar-queue.pdf}
    \subcaption{\label{fig:outer-1-planar-queue-2}}}    
    \end{subfigure}
    \begin{subfigure}{.48\textwidth}
    \centering
    {\includegraphics[page=4,width=.8\textwidth]{figures/outer-1-planar-queue.pdf}
    \subcaption{\label{fig:outer-1-planar-queue-4}}}
    \end{subfigure}
    \caption{Illustration for the proof of \cref{th:outer-1-planar}.}
    \label{fig:outer-1-planar-queue}
\end{figure}

As the layouts constructed in \cref{th:outer-1-planar} consist of two defective queues, one of which is $0$-defective, \cref{th:outer-1-planar} and  \cref{obs:bounded-defective} imply that the queue number of outer 1-planar graphs is at most $3$. This is an alternative proof of the same upper bound given by~Auer~et~al.~\cite{DBLP:journals/algorithmica/AuerBBGHNR16} based on BFS traversal of the graph. 

For general planar graphs, we prove in the following an upper bound of $33$ on their $1$-defective queue number by adjusting a well-known technique~\cite{DBLP:journals/jacm/DujmovicJMMUW20}. 
To do so, however, we need to introduce a few additional definitions and an auxiliary lemma.
Given a graph $G$, an \emph{$H$-partition} of $G$ is a partition of the vertices of $G$ into sets $A_x$ with $x\in V(H)$, called \emph{bags}, such that for each edge $(u,v)$ of $G$ with $u\in A_x$ and $v \in A_y$ either $x=y$ holds or $(x,y)$ is an edge of $H$. In the former case, $(u,v)$ is called \emph{intra-bag} edge, while in the latter case it is called \emph{inter-bag} edge. A \emph{BFS-layering} of $G$ is a partition $\mathcal{L}=(V_0, V_1,\ldots)$ of its vertices according to their distance from a specific vertex of~$G$. In other words, it is a special type of an $H$-partition, where $H$ is a path and each bag $V_i$ corresponds to a \emph{layer}. In this special type of $H$-partition, an intra-bag edge is called \emph{intra-layer}, while an inter-bag edge is called \emph{inter-layer}. An $H$-partition has \emph{layered-width} $\ell$ with respect to a BFS-layering $\mathcal{B}$ if each bag of $H$ has at most $\ell$ vertices on each layer of~$\mathcal{B}$.

\begin{lemma}\label{le:BFS-layering}
 Let $G$ be a graph admitting an $H$-partition of layered-width $\ell$ with respect to a BFS-layering $\mathcal{B}$. Then, the $1$-defective queue number of $G$ is upper-bounded by:
 \[
 (3 \cdot \text{qn}(H)+1) \cdot \left\lceil \frac{2\ell}{3} \right\rceil + \left\lceil \frac{\ell-1}{3} \right\rceil.
 \]
 \end{lemma}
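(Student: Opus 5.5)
The plan is to follow the layered queue-layout technique of Dujmović et al.~\cite{DBLP:journals/jacm/DujmovicJMMUW20} essentially unchanged. The one new ingredient: each place where that construction spends $\ell$ queues on a ``$\ell\times\ell$ block'' of edges, we spend only about $\frac{2\ell}{3}$ defective queues, by merging three consecutive ``diagonal'' classes into one defective queue.

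\textbf{Linear order.} Fix a queue layout of $H$ with $\text{qn}(H)$ queues and its vertex order $\prec_H$. Inside each bag $A_x$, label the at most $\ell$ vertices lying on each layer $V_i$ by $1,\dots,\ell$. The order $\prec$ of $G$ is as follows:
\begin{itemize}
\item layers come left to right, $V_0,V_1,\dots$;
\item within a layer, vertices are grouped by bag, with bags ordered by $\prec_H$;
\item within a bag and layer, vertices are ordered by label.
\end{itemize}
Since $\mathcal{B}$ is a BFS-layering, every edge is either intra-layer or joins consecutive layers. Two edges spanning different pairs of consecutive layers cannot nest. Two intra-layer edges in different layers cannot nest either. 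An intra-layer edge can be nested only by an inter-layer edge touching its layer, and the partition below separates these two types. Hence it suffices to handle each layer, and each pair of consecutive layers, separately.

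\textbf{Edge partition.} Inter-layer inter-bag edges between layers $i$ and $i+1$ are assigned to one of $3\,\text{qn}(H)$ groups. The group records the queue of $(x,y)$ in $H$ together with one of three types, exactly as in~\cite{DBLP:journals/jacm/DujmovicJMMUW20}: which of the two bags is on the left in $\prec_H$, plus the intra-layer inter-bag case. Intra-bag inter-layer edges form one additional group, which accounts for the ``$+1$''. Inside each of these $3\,\text{qn}(H)+1$ groups, an edge joining the vertex with label $a$ to the vertex with label $b$ has a diagonal $b-a\in\{-(\ell-1),\dots,\ell-1\}$. The original technique essentially puts each diagonal (or each label) in its own queue. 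Instead, we will put three consecutive diagonals into one defective queue, in the spirit of the hop-size grouping of \cref{th:density-1page-1defect}. That grouping uses $\lceil\frac{2\ell-1}{3}\rceil\le\lceil\frac{2\ell}{3}\rceil$ defective queues per group. The remaining edges are intra-layer intra-bag: each bag restricted to one layer induces a subgraph of $K_\ell$ occupying a contiguous block. These blocks do not interleave, so a single $1$-defective layout of $K_\ell$ with $\lceil\frac{\ell-1}{3}\rceil$ defective queues (the bound of \cref{cor:kn-1defect}, the later corollary giving the $K_n$ bound) can be reused for all blocks at once. Summing gives the claimed bound.

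\textbf{Verification and main obstacle.} The key step, and the main obstacle, is checking that each edge is nested by, or nests, at most one other edge of its defective queue. Across different bag pairs inside one group, the argument of~\cite{DBLP:journals/jacm/DujmovicJMMUW20} transfers directly. If the two underlying $H$-edges are independent, a nesting would force the $H$-edges to nest in $\prec_H$, which is excluded because they share an $H$-queue. When the $H$-edges share a bag, the type distinction rules out nesting. The real work is inside a single bag pair, where the edges form a separated bipartite block $K_{\ell,\ell}$ in which nesting is governed by the diagonals. Two edges on the same diagonal never nest. Edges with diagonals $d<d'$ nest only when one endpoint index strictly increases while the other strictly decreases. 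Within a triple $\{d,d+1,d+2\}$, adjacent diagonals should yield no nesting, since that would need index shifts of at least $1$ on both sides summing to $1$. Diagonals $d$ and $d+2$ should yield exactly one nesting partner, namely the edge obtained by shifting both endpoints by one. I expect this local case analysis to go through, but the ``at most one'' count has to be done carefully for edges at the ends of the label range and for the intra-bag inter-layer group. If the triple grouping fails there, the fallback is to group only pairs of diagonals, accepting a weaker constant.
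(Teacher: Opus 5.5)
Your vertex order, your $3\,\mathrm{qn}(H)+1$ groups, the reuse of one $K_\ell$ layout for the intra-layer intra-bag edges, and the final count all match the paper's proof. Your diagonal $b-a$ is just the hop-size in a separated $K_{\ell,\ell}$ shifted by $\ell$, so grouping three consecutive diagonals is exactly the layout behind \cref{lem:knn-sep-1defect}, and your analysis inside a single bag pair is correct.

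The step that fails is the claim that, inside one inter-bag group, edges from different bag pairs never nest. That claim holds when the two $H$-edges are independent, and also when they share a bag in different roles. It fails when two $H$-edges of the same $H$-queue share an endpoint in the \emph{same} role, e.g.\ $xy_1,\dots,xy_r$ with $x\prec_H y_1\prec_H\dots\prec_H y_r$. A star has no independent pair, so neither the queue property of $H$ nor the forward/backward/intra-layer split excludes nestings among these edges.

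Here is a concrete instance. Let $G$ consist of a root $\rho$ adjacent to all other vertices, plus all edges between $a_1,a_2,a_3$ and $w_1,\dots,w_r$. Use BFS from $\rho$, bags $A_x=\{\rho,a_1,a_2,a_3\}$ and $A_{y_t}=\{w_t\}$, and $H=K_{1,r}$ laid out with $x$ first. Then $\ell=3$ and $\mathrm{qn}(H)=1$, and the intra-layer inter-bag group is a separated $K_{3,r}$ in $\prec$. Every $w_t$ has label $1$, so your rule assigns $a_cw_t$ by $c$ alone. Two rows $c<c'$ must therefore share one of the two available defective queues, and then $a_cw_r$ nests $a_{c'}w_t$ for every $t<r$. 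This does not contradict the bound itself, only the per-group accounting.

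No cleverer assignment saves this vertex order. Take a separated $K_{\ell,r}$ with sides $a_1\prec\dots\prec a_\ell$ and $w_1\prec\dots\prec w_r$, and a set $S$ of its edges in which every edge nests at most one other. Let $M_c$ be the largest $t$ with $a_cw_t\in S$. Then $a_cw_{M_c}$ allows at most one edge of $S$ at the later vertices $a_{c'}$ to end before index $M_c$. So $S$ is essentially a monotone staircase, and $|S|\le r+2\ell-2$. Hence for $r>2(\ell-1)^2$ such a group needs $\ell$ defective queues, which exceeds $\lceil 2\ell/3\rceil$ once $\ell\ge 3$.

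In the $0$-defective setting this configuration is harmless. There one only needs that a rainbow inside a group has all its left (or all its right) endpoints in one bag, hence at most $\ell$ edges, and then a global rainbow-to-queues argument applies. That argument does not decompose into separate per-bag-pair layouts, so it does not ``transfer directly'' to your setting.

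The paper's own proof passes over the same point with the single sentence that no two edges of $H$ nest in $\prec_H$. So you have reproduced its route together with this unresolved step. Closing it needs a new ingredient for a bag joined to many bags through one $H$-queue, such as a different vertex order or a finer split of the $H$-queues. With the present order, such groups cannot do better than $\ell$ defective queues.
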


\begin{proof}
The order of the vertices of $G$ is defined as in the original proof in~\cite[Lemma~9]{DBLP:journals/jacm/DujmovicJMMUW20}. More specifically, let $x_1, \dots, x_h$ be the vertices of $H$ as they appear in a $qn(H)$-queue layout of $H$, that is, $x_1\prec_{H}\; \dots \prec_{H}\; x_h$, and let  $A_{x_1}, \dots, A_{x_h}$ be the corresponding bags of the $H$-partition. Then, the order $\overrightarrow{V_i}$ of the vertices of each layer $V_i$ of the BFS-layering $\mathcal{B}$ with $i\geq 0$ is:
\[
\overrightarrow{V_i}=A_{x_1}\cap V_i, A_{x_2}\cap V_i, \dots, A_{x_h}\cap V_i
\]
where each subset $A_{x_j}\cap V_i$ is ordered arbitrarily, for $j=1,\ldots, h$. To obtain a total order $\prec_{G}$ for the vertices of $G$ we concatenate, for each $i=0,1,\ldots,n$, the order $\overrightarrow{V_i}$ of the vertices of each layer $V_i$. 
For the edge-to-queue assignment, we exploit two properties that are given in the proof of~\cite[Lemma~9]{DBLP:journals/jacm/DujmovicJMMUW20}. First, two intra-bag edges of $G$ nest in $\prec_{G}$, only if they belong to the same bag; see the blue and red edges in \cref{fig:intra_bag_edges}. Second, two inter-layer edges of $G$ nest in $\prec_{G}$, only if their endpoints belong to the~same pair of consecutive layers of $\mathcal{B}$; see the blue, purple and orange edges in \cref{fig:inter_bag_edges}. With these two properties in mind, we are now ready to describe the edge-to-queue assignment: 

\begin{figure}
    \centering
    \begin{subfigure}{.45\textwidth}
    \centering
    {\includegraphics[page=1]{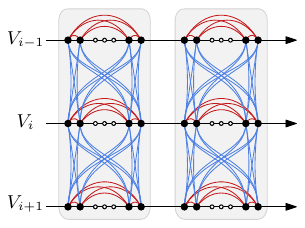}
    \subcaption{\label{fig:intra_bag_edges} Intra-bag edges}} 
    \end{subfigure}
    \begin{subfigure}{.45\textwidth}
    \centering
    {\includegraphics[page=2]{figures/planar_queues}
    \subcaption{\label{fig:inter_bag_edges} Inter-bag edges}} 
    \end{subfigure}
    \caption{Illustration of
    (a)~Intra-bag edges; the intra-layer ones are red, 
    while the inter-layer ones are blue, and 
    (b)~inter-bag edges; the intra-layer ones are green, 
    while the inter-layer ones are purple (forward) and orange (backward).
    In both subfigures, a bag is illustrated as a gray box, in which vertices of different layers (that are part of the bag) are drawn at different horizontal lines; the white vertices indicate that the intersection of a layer with the content of a bag may consist of more than four vertices; the edges towards these vertices have not been illustrated.}
    \label{fig:categories-edges}
\end{figure}

\begin{itemize}
\item The intra-layer intra-bag edges of $G$ are assigned to at most $\left\lceil \frac{\ell-1}{3} \right\rceil$ $1$-defective queues, since, by \cref{cor:kn-1defect}, the $1$-defective queue~number of $K_\ell$ is $\left\lceil \frac{\ell-1}{3} \right\rceil$; refer to the red edges of \cref{fig:intra_bag_edges}.
\item The inter-layer intra-bag edges of $G$ are assigned to at most $\left\lceil \frac{2\ell}{3}\right\rceil$ queues, since, by \cref{lem:knn-sep-1defect}, the $1$-defective queue~number of $K_{\ell, \ell}$  is at most $\left\lceil \frac{2\ell}{3}\right\rceil$, when all vertices of the first bipartition precede those of the second; refer to the blue edges of \cref{fig:intra_bag_edges}.

\item The remaining edges of $G$ are inter-bag edges connecting vertices of different~bags; see \cref{fig:inter_bag_edges}. Their edge-to-queue assignment is determined by further partitioning the inter-layer inter-bag edges into two categories as follows. Let $(u,v)$ be an inter-layer inter-bag edge with $u \in A_x \cap V_i$ and $v \in A_y \cap V_{i+1}$, for some $i\geq0$ and $x\neq y$. Then $(u,v)$ is \emph{forward}, if $x\; \prec_{H}\; y$ holds in $\prec_{H}$, otherwise, $(u,v)$ is \emph{backward}; refer to the purple and the orange edges of \cref{fig:inter_bag_edges}, respectively. In the following, we demonstrate that all inter-bag edges can be assigned to at most $3 \cdot\left\lceil \frac{2\ell}{3}\right\rceil\cdot qn(H)$ $1$-defective queues.
\begin{itemize}
\item By \cref{lem:knn-sep-1defect}, intra-layer inter-bag edges of $G$ can be assigned to at most $qn(H)\cdot\left\lceil \frac{2\ell}{3}\right\rceil$ $1$-defective queues; on each layer, an edge of $H$ corresponds to a subgraph of $K_{\ell, \ell}$, where all vertices of the first bipartition precede those of the second; refer to the green edges of \cref{fig:inter_bag_edges}.
\item By \cref{lem:knn-sep-1defect}, forward inter-layer inter-bag edges of $G$ can be assigned to at most $qn(H)\cdot\left\lceil \frac{2\ell}{3}\right\rceil$ $1$-defective queues; for two consecutive layers, an edge of $H$  corresponds to a subgraph of $K_{\ell, \ell}$, where all vertices of the first bipartition precede those of the second; refer to the purple edges of \cref{fig:inter_bag_edges}.
\item Symmetrically, by \cref{lem:knn-sep-1defect},  all backward inter-layer inter-bag edges of $G$ can be assigned to at most $qn(H)\cdot\left\lceil \frac{2\ell}{3}\right\rceil$ $1$-defective queues; refer to the orange edges of \cref{fig:inter_bag_edges}.
\end{itemize}
\end{itemize}
Since no two edges of $H$ nest in $\prec_{H}$, it follows that no two edges of $G$ can form a $k$-defect in $\prec_G$ with $k>1$, which completes the proof of the lemma. 
\end{proof}

\begin{theorem}\label{th:planar-1-def-queue}
The $1$-defective queue number of planar graphs is at most $33$.
\end{theorem}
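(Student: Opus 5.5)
The plan is to combine \cref{le:BFS-layering} with the product structure theorem for planar graphs of Dujmović et al.~\cite{DBLP:journals/jacm/DujmovicJMMUW20}. That theorem states that every planar graph $G$ has an $H$-partition of layered-width $\ell=3$ with respect to some BFS-layering $\mathcal{B}$ of $G$, where $H$ is a planar graph of treewidth at most $3$. A more direct form also appears there: every planar graph admits an $H$-partition of layered-width $1$ in which $H$ has treewidth at most $8$. The two options give different numeric outcomes, so we first compute which one produces exactly $33$.

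\textbf{Option with $\ell=3$.} Here $H$ is planar with treewidth at most $3$, so $\text{qn}(H)\le 2^3-1=7$. This uses the Wiechert-type bound $\text{qn}\le 2^t-1$ for treewidth $t$, as invoked in~\cite{DBLP:journals/jacm/DujmovicJMMUW20}; the bound $2^t+1$ in \cref{tbl:bounds} is too weak to reach $33$. Then
\[
\left\lceil \tfrac{2\ell}{3}\right\rceil=2,\qquad \left\lceil \tfrac{\ell-1}{3}\right\rceil=1 .
\]
\cref{le:BFS-layering} gives $(3\cdot 7+1)\cdot 2+1=45$, which is too large. Using instead the sharper bound $\text{qn}(H)\le 5$ for planar $3$-trees from \cref{tbl:bounds} gives $(3\cdot5+1)\cdot 2+1=33$.

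So the intended proof should proceed as follows. First, invoke the theorem of~\cite{DBLP:journals/jacm/DujmovicJMMUW20}: every planar graph $G$ has an $H$-partition of layered-width at most $3$ with respect to a BFS-layering of $G$, with $H$ planar of treewidth at most $3$. Second, argue that $H$ may be taken to be a planar $3$-tree. Every planar graph of treewidth at most $3$ is a subgraph of a planar $3$-tree, and enlarging $H$ to a supergraph on the same vertex set preserves the $H$-partition property. Third, apply the bound $\text{qn}(H)\le 5$ of Alam et al.~\cite{DBLP:journals/algorithmica/AlamBGKP20} for planar $3$-trees, using that queue number is monotone under taking subgraphs. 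Fourth, substitute $\text{qn}(H)=5$ and $\ell=3$ into \cref{le:BFS-layering} to get $16\cdot 2+1=33$. Since $\ell$ is only an upper bound on the layered-width, we also note that the formula of \cref{le:BFS-layering} is nondecreasing in $\ell$, so using $\ell=3$ is valid.

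The main obstacle is the second step: justifying that a planar treewidth-$3$ graph embeds as a spanning subgraph of a planar $3$-tree. This is a known fact (planar graphs of treewidth at most $3$ are exactly the subgraphs of planar $3$-trees), which we would cite. If we prefer to avoid it, an alternative is to use that the construction in~\cite{DBLP:journals/jacm/DujmovicJMMUW20} directly produces $H$ as a planar $3$-tree, or at least as a subgraph of one obtained from a triangulation, so the bound of~\cite{DBLP:journals/algorithmica/AlamBGKP20} applies directly.
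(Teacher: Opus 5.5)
Your proposal is correct and follows the paper's proof: the product structure theorem gives an $H$-partition of layered-width $3$ with $H$ planar of treewidth at most $3$. Substituting $\text{qn}(H)\le 5$ (Alam et al.) into \cref{le:BFS-layering} then gives $(3\cdot 5+1)\cdot 2+1=33$, and your explicit use of the fact that planar treewidth-$3$ graphs are subgraphs of planar $3$-trees, together with monotonicity of queue number under subgraphs, spells out a step the paper compresses into ``i.e., a planar $3$-tree''.
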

\begin{proof}
Since every planar graph is a subgraph of a graph isomorphic to the strong product of a path,  
a clique on three vertices~\cite{DBLP:journals/jacm/DujmovicJMMUW20}, and a planar graph of tree-width at most $3$, it follows that it admits an $H$-partition of layered-width $3$, where $H$ is a planar graph of tree-width at most $3$ (i.e., a planar $3$-tree). 
By \cref{le:BFS-layering}, the $1$-defective queue number of planar graphs cannot be more than $33$, as planar $3$-trees have queue number at most $5$~\cite{DBLP:journals/algorithmica/AlamBGKP20}.   
\end{proof}

We now give a lower bound (\cref{th:kn-Kdefect-lower}) and an upper bound (\cref{th:kn-kdefect}) of the $k$-defective queue number of the complete graph $K_n$. Note that for $k\geq 2$ there is a gap between the bounds, while for $k=1$ the two bounds are the same, which directly implies~\cref{cor:kn-1defect}.

\begin{theorem}\label{th:kn-Kdefect-lower}
  The $k$-defective queue number of $K_n$ is at least $\left\lceil \frac{n-1}{k+2}\right\rceil$.
\end{theorem}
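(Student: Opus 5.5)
The plan is to reuse the class-counting argument from \cref{le:alternating} and the proof of \cref{th:density-general}, now applied to one particular class of edges. Fix any $k$-defective $h$-queue layout $\mathcal{L}$ of $K_n$ with linear order $v_0 \prec \dots \prec v_{n-1}$.

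Since $K_n$ contains all edges, the class $C_{n-1}$ has exactly $n-1$ edges. More concretely, take the edges $(v_i,v_j)$ with $\lfloor (i+j)/2\rfloor = c$ for the middle value $c=\lfloor (n-1)/2\rfloor$; this is the largest class. These edges form a path, and any two of them either nest or share an endpoint, so they are pairwise non-crossing in the queue sense.

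By \cref{le:alternating}, each defective queue of $\mathcal{L}$ contains at most $k+2$ edges of this class. The reason is that the longest such edge in a queue nests every other edge of the class in that queue except at most one, namely the one sharing an endpoint with it. Summing over the $h$ queues gives $h(k+2) \ge n-1$, hence $h \ge \lceil (n-1)/(k+2)\rceil$.

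The only step needing care is checking that $|C_{n-1}|=n-1$ in $K_n$, so that a class of size $n-1$ actually exists. Equivalently, we need that the number of pairs $\{i,j\}$ with $i\neq j$ and $\lfloor (i+j)/2\rfloor=c$ equals $n-1$ for the middle $c$. The paper already asserts that the class sizes are exactly $1,\dots,n-1$ for $K_n$, so this can simply be cited. With that in hand, the bound follows directly.
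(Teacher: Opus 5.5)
Your proposal is correct and follows the paper's proof: the class $C_{n-1}$ of $K_n$ has $n-1$ edges, \cref{le:alternating} allows at most $k+2$ of them in any defective queue, and so $h(k+2)\ge n-1$. Your explicit identification of that class as the one with $\lfloor (i+j)/2\rfloor=\lfloor (n-1)/2\rfloor$ is also right, and the ``at most one'' exception holds because edge lengths strictly decrease along that path, so the longest edge in a queue can share an endpoint only with its shorter path-neighbour.
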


\begin{proof}
Consider an (arbitrary) $k$-defective queue layout $\mathcal{L}$ of $K_n$.
As observed in \Cref{sse:queue-density}, there exists a partition of the edges of $\mathcal{L}$ into classes $C_1, C_2,\dots, C_{n-1}$. Since $K_n$ is complete, the class $C_{n-1}$ has $n-1$ edges and, by \Cref{le:alternating}, at most $k+2$ edges of this class can be in the same defective queue. Hence, all $n-1$ edges of $C_{n-1}$ require at least $\left\lceil \frac{n-1}{k+2}\right\rceil$ $k$-defective queues proving the lower bound.
\looseness=-1
\end{proof}

\begin{theorem}\label{th:kn-kdefect}
The $k$-defective queue number of $K_n$ is at most $\left\lceil \frac{n-1}{l} \right\rceil$, where $l {=} \left\lfloor \frac{3+\sqrt{8k+1}}{2} \right\rfloor$.
\end{theorem}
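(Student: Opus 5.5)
The plan is a direct construction that groups edges by hop-size, generalizing the proof of \cref{th:density-1page-1defect}. Fix the linear order $v_0 \prec \dots \prec v_{n-1}$. Partition the hop-sizes $1,\dots,n-1$ into $\left\lceil \frac{n-1}{l}\right\rceil$ blocks of at most $l$ consecutive values: block $i$ is $\{(i-1)l+1,\dots,il\}\cap\{1,\dots,n-1\}$. Assign all edges whose hop-size lies in block $i$ to the $i$-th defective queue. Each block is non-empty, so this gives exactly $\left\lceil \frac{n-1}{l}\right\rceil$ non-empty queues. It remains to show every edge nests with at most $k$ edges of its own queue.

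First I count nestings between two hop-sizes. An $r$-hop edge $(v_i,v_{i+r})$ nests an $s$-hop edge $(v_{i'},v_{i'+s})$ exactly when $i<i'$ and $i'+s<i+r$. Hence $s\le r-2$, and there are exactly $r-s-1$ such $s$-hop edges (choose $i'\in\{i+1,\dots,i+r-s-1\}$). Symmetrically, an $r$-hop edge is nested by at most $s'-r-1$ edges of hop-size $s'\ge r+2$, with fewer near the ends of the order.

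Next I bound the conflicts of one edge. Consider an $r$-hop edge in a block $\{a,\dots,a+l-1\}$, and write $d_1=r-a$ and $d_2=a+l-1-r$, so $d_1+d_2=l-1$. With $T(x)=x(x+1)/2$, it nests at most $\sum_{s=a}^{r-2}(r-s-1)=T(d_1-1)$ edges of its block. It is nested by at most $\sum_{s'=r+2}^{a+l-1}(s'-r-1)=T(d_2-1)$ edges of its block. Its total number of conflicts is therefore at most $T(d_1-1)+T(d_2-1)$.

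Since $T$ is convex and $d_1+d_2$ is fixed, this sum is maximized at an extreme, $d_1=l-1$ and $d_2=0$ (or vice versa). The maximum is $T(l-2)=\frac{(l-1)(l-2)}{2}$. A shorter last block only lowers the count.

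Finally I check the choice of $l$. The condition $\frac{(l-1)(l-2)}{2}\le k$ is equivalent to $l^2-3l+2-2k\le 0$, that is, $l\le \frac{3+\sqrt{8k+1}}{2}$. So $l=\left\lfloor \frac{3+\sqrt{8k+1}}{2}\right\rfloor$ makes every queue $k$-defective, which proves the bound.

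The main obstacle is the convexity step: I must verify that the extreme split really dominates, including the degenerate cases $d_j\in\{0,1\}$ where $T(d_j-1)=0$. Both are immediate from $T$ being nondecreasing and convex on the nonnegative integers, with $T(-1)=T(0)=0$.
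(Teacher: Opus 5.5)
Your proof is correct and uses the same construction as the paper: assign hop-sizes $\{al+1,\dots,al+l\}$ to the $a$-th defective queue, and count that an edge of maximum hop-size in its block nests exactly $\frac{(l-1)(l-2)}{2}\le k$ edges. Your convexity bound on $T(d_1-1)+T(d_2-1)$ also proves something the paper only asserts: an edge of intermediate hop-size, which both nests shorter edges and is nested by longer ones, never has more conflicts than an edge of extreme hop-size.
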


\begin{proof}
We prove the statement by giving an explicit construction.
For each integer $a \geq 0$, we place on defective queue $a$ all edges whose hop-size belongs to $\{al+1, \ldots, al+l\}$, where $l = \lfloor \frac{3+\sqrt{8k+1}}{2} \rfloor$.

We first show that this construction yields a $k$-defective queue layout.
Consider a defective queue $a$ of our construction.
It contains all edges with hop-size $\delta$ for $al+1 \leq \delta \leq al+l$.
Among the edges in defective queue $a$, the largest number of nested edges is attained by those with hop-size $al + l$.
Let $(v_i,v_{i+al+l})$ be such an edge in the given defective queue.
This edge nests precisely those edges~$(v_{i'},v_{j'})$ in the same defective queue satisfying $i<i'<j'<i+al+l$.
These are exactly the $\delta$-hop edges with $al+1 \leq \delta \leq al+l-2$.
Indeed, there is one nested $(al+l-2)$-hop edge (namely, $(v_{i+1},v_{i+al+l-1})$), two nested $(al+l-3)$-hop edges (namely, $(v_{i+1},v_{i+al+l-2})$ and $(v_{i+2},v_{i+al+l-1})$), and so on, and finally $al+l-1-al-1=l-1$ edges with hop-size $al+1$.
More generally, for each $\delta \in \{al+1, \ldots, al+l-2\}$, there are $al+l-1-\delta$ nested edges with hop-size $\delta$.
Thus, the edge $(v_i, v_{i+al+l})$ has $\sum_{\delta=al+1}^{al+l-2}(al+l-1-\delta)=\frac{1}{2}(l-2)(l-1)$ nested edges.
From $l = \left\lfloor \frac{3+\sqrt{8k+1}}{2} \right\rfloor$, it follows that $\frac{(l-2)(l-1)}{2}\leq k$, and thus the construction gives a $k$-defective queue layout.

It remains to bound the number of defective queues.
The possible hop-sizes in $K_n$ are exactly $1, 2, \ldots, n-1$.
Each hop-size belongs to exactly one interval $\{al+1, \ldots, al+l\}$.
So every edge of $K_n$ is assigned to exactly one defective queue.
Hence, the number of defective queues required is the number of intervals of the form $\{al+1, \ldots, al+l\}$ needed to cover the hop-sizes $1, 2, \ldots, n-1$, namely $\lceil \frac{n-1}{l} \rceil$.
This completes the proof.
\end{proof}

\begin{corollary}\label{cor:kn-1defect}
The $1$-defective queue number of $K_n$ is $\left\lceil \frac{n-1}{3} \right\rceil$.
\end{corollary}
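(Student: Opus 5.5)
The plan is to obtain both bounds by specializing the two preceding theorems to $k=1$ and checking that they coincide.

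\textbf{Lower bound.} Setting $k=1$ in \cref{th:kn-Kdefect-lower} gives $\left\lceil \frac{n-1}{3}\right\rceil$ immediately. The argument behind it is that the class $C_{n-1}$ of $K_n$ contains $n-1$ edges. By \cref{le:alternating}, each $1$-defective queue holds at most $3$ of these edges.

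\textbf{Upper bound.} Setting $k=1$ in \cref{th:kn-kdefect} requires evaluating $l=\left\lfloor \frac{3+\sqrt{8k+1}}{2}\right\rfloor$. At $k=1$ we have $\sqrt{9}=3$, so $l=\lfloor 3\rfloor = 3$, and the upper bound is $\left\lceil \frac{n-1}{3}\right\rceil$.

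As a sanity check, the construction places the $1$-, $2$- and $3$-hop edges in the first defective queue, and similarly for each later group of three hop-sizes. Within one queue, a $(3a+3)$-hop edge nests exactly one $(3a+1)$-hop edge, and the other pairs of hop-sizes in the group do not nest. This agrees with the computation $\frac{(l-2)(l-1)}{2}=1\le k$.

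Since the two bounds coincide, the statement follows. There is no real obstacle here; the only point needing care is confirming that the floor in $l$ equals exactly $3$ when $k=1$, which holds because $8k+1=9$ is a perfect square.
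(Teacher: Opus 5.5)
Your proposal is correct and matches the paper's argument: the corollary follows by setting $k=1$ in \cref{th:kn-Kdefect-lower} and \cref{th:kn-kdefect}, where $l=\lfloor (3+3)/2\rfloor=3$ makes the two bounds coincide. One small addition to your sanity check (not needed, since you cite the theorem): a full $1$-defectiveness check also needs the other direction, namely that each $(3a+1)$-hop edge $(v_g,v_{g+3a+1})$ is nested only by $(v_{g-1},v_{g+3a+2})$.
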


\noindent In the following we present bounds on the $k$-defective queue number of the complete bipartite graph $K_{n,n}$, i.e., we assume that both parts of it are of the same cardinality. In particular, we study the general case as well as the natural case in which the vertices of one part of $K_{n,n}$ precede those of its second part. We refer to the latter as the \emph{separated setting}~\cite{DBLP:journals/tcs/AlamBGKP22}.

\begin{corollary}\label{lem:knn-sep-1defect}
 The $1$-defective queue number of $K_{n,n}$ in the separated setting is $\left\lceil \frac{2n-1}{3}\right\rceil$. 
 For $k>1$, the $k$-defective queue number of $K_{n,n}$ in the separated setting is at least $\left\lceil \frac{2n-1}{k+2}\right\rceil$ and at most $\left\lceil \frac{2n-1}{l} \right\rceil$, while in the non-separated setting it is at least $\left\lceil \frac{n-1}{k+2}\right\rceil$ and at most $\left\lceil \frac{2n-1}{l} \right\rceil$, where $l=\left\lfloor \frac{3+\sqrt{8k+1}}{2} \right\rfloor$. 
\end{corollary}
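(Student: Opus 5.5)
The plan is to derive every bound from the two results on $K_n$ (\cref{th:kn-Kdefect-lower,th:kn-kdefect}). I would move between $K_{n,n}$ and complete graphs in two ways: by subgraph containment in one direction, and by exploiting the class/hop structure of the separated order in the other.

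\textbf{Upper bounds.} In both settings $K_{n,n}$ is a subgraph of $K_{2n}$. Any $k$-defective layout of $K_{2n}$ therefore restricts to a $k$-defective layout of $K_{n,n}$, because deleting edges never creates new nestings. For the separated setting, I would order the vertices of $K_{2n}$ so that the parts come out separated; since $K_{2n}$ is symmetric, any order can be relabelled this way. Then \cref{th:kn-kdefect} gives at most $\lceil \frac{2n-1}{l}\rceil$ defective queues. The same construction also serves for the non-separated setting. For $k=1$ we have $l=3$, so this yields $\lceil \frac{2n-1}{3}\rceil$. One small issue is that, after restriction, some queue could become empty. That only decreases the count, so it is harmless.

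\textbf{Lower bounds.} In the separated setting, let $v_0\prec\dots\prec v_{2n-1}$, with the first part being $v_0,\dots,v_{n-1}$. I would look at the class $C'_{2n-1}$ of $K_{2n}$, namely the edges $(v_i,v_j)$ with $\lfloor (i+j)/2\rfloor=n-1$, i.e.\ $i+j\in\{2n-2,2n-1\}$. These edges form a path of $2n-1$ edges $(v_0,v_{2n-1}),(v_0,v_{2n-2}),(v_1,v_{2n-2}),\dots,(v_{n-1},v_n),(v_{n-1},v_{n-1})$. I need to check that the edge $(v_i,v_j)$ with $i<j$ has $i\le n-1<j$ in all cases. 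One must be careful, because $i+j=2n-2$ also allows $i=j=n-1$, which is not an edge. Recounting: $i+j=2n-1$ gives $n$ edges $(v_i,v_{2n-1-i})$ for $i=0,\dots,n-1$, all crossing between the parts. $i+j=2n-2$ with $i<j$ gives $i=0,\dots,n-2$ and $j=2n-2-i\ge n$, which is $n-1$ edges, also crossing between the parts. In total this is $2n-1$ edges of the class, all present in $K_{n,n}$. By \cref{le:alternating}, each defective queue holds at most $k+2$ of them, giving $\lceil\frac{2n-1}{k+2}\rceil$; for $k=1$ this matches the upper bound.

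For the non-separated setting, $K_{n,n}$ is not complete, so I would instead find a large class in an arbitrary order. The first idea is to take the $n$ leftmost vertices. More cleanly, I would note that among the $2n$ vertices there is a position splitting the order into a prefix and a suffix, and use the class $C'_{2n-1}$ of the whole order. Its edges $(v_i,v_{2n-1-i})$ and $(v_i,v_{2n-2-i})$ need not be in $K_{n,n}$; only the bichromatic ones survive. The main obstacle is showing that at least $n-1$ of them survive. The path $v_0,v_{2n-1},v_1,v_{2n-2},\dots$ visits all $2n$ vertices as a Hamiltonian path, and we need many bichromatic consecutive pairs, which an adversarial coloring can kill. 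So instead I would choose a sub-ordering: pick $n$ vertices from each part and consider the $n$ vertices of part $A$ and the $n$ vertices of part $B$. I would then take a matching-like class, for instance pairing the $i$-th vertex of $A$ with the $(n-i)$-th vertex of $B$ in the order. These $n$ edges pairwise nest or are disjoint, with those left of a median splitting in a controlled way. I would try to show that at least $n-1$ of them lie in one class-like chain, where the longest nests all but one. This gives at most $k+2$ per queue, hence $\lceil\frac{n-1}{k+2}\rceil$. I expect this counting argument, which substitutes for \cref{le:alternating} in the non-separated case, to be the delicate step.
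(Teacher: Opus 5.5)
Your upper bounds and your separated-setting lower bound are correct and follow the paper's argument. For the upper bounds you restrict the layout of $K_{2n}$ from \cref{th:kn-kdefect}, with the order relabelled so that the parts are separated. For the separated lower bound, the class $C'_{2n-1}$ has $2n-1$ edges, all joining the two parts, so \cref{le:alternating} gives $\lceil\frac{2n-1}{k+2}\rceil$. Your check that $i+j\in\{2n-2,2n-1\}$ with $i<j$ forces $i\le n-1<j$ fills in a detail the paper leaves implicit.

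The gap is the non-separated lower bound $\lceil\frac{n-1}{k+2}\rceil$. You do not prove it, and the route you sketch fails.

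\begin{itemize}
\item The reverse pairing $a_i\leftrightarrow b_{n+1-i}$ does not consist of edges that pairwise nest or are disjoint: two of its edges can cross.
\item It need not contain $n-1$ edges forming a nesting chain. For $n$ even, take the order $a_1,\dots,a_{n/2},b_1,\dots,b_n,a_{n/2+1},\dots,a_n$. The pairing splits into two rainbows of $n/2$ edges each, and every edge of one rainbow crosses every edge of the other. For $n=4$ the edges occupy positions $[1,6]$ and $[2,5]$, versus $[4,7]$ and $[3,8]$.
\item A matching cannot supply the pattern you want anyway. In a nesting chain taken from a matching, the longest edge nests all the others, so a queue can hold only $k+1$ of them. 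The ``longest nests all but one'' pattern, which permits $k+2$, comes from the zigzag-path structure of a class.
\end{itemize}

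The missing idea is a pigeonhole reduction to the separated case, which is what the paper uses. Split the order into its first $n$ and last $n$ vertices. If the first half contains $x$ vertices of $A$, then it contains $n-x$ vertices of $B$, and the second half contains $n-x$ vertices of $A$ and $x$ vertices of $B$. One of $x$ and $n-x$ is at least $m=\lceil n/2\rceil$. Either way you get $m$ vertices of one part entirely to the left of $m$ vertices of the other, that is, a separated $K_{m,m}$ with the inherited order. Restricting the $k$-defective layout to this subgraph keeps it $k$-defective and can only reduce the number of non-empty queues. Your separated bound then gives at least $\lceil\frac{2m-1}{k+2}\rceil\ge\lceil\frac{n-1}{k+2}\rceil$ defective queues, since $2m-1\ge n-1$.
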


\begin{proof}
The upper bounds for all settings follow immediately from those of the $k$-defective queue number of $K_{2n}$ (\cref{th:kn-Kdefect-lower}). 
For the lower bounds of the $1$- and $k$-defective queue numbers of  $K_{n,n}$ in the separated setting, we partition the edges as in the proof of Theorem~\ref{th:kn-Kdefect-lower}. The lower bounds of $K_{n,n}$ in the general setting follow directly, as every defective queue layout of $K_{n,n}$ contains a corresponding one of $K_{\frac{n}{2},\frac{n}{2}}$ in the separated setting.   
\end{proof}

\subsection{Defective Stack Number}\label{sse:stack-defective-page-number}

The definition of $k$-defective $h$-stack layout of a graph $G$ implies that the subgraph of $G$ induced by the edges of each defective stack is an outer $k$-planar graph, and vice versa. This immediately yields the following characterization of the graphs admitting $k$-defective $1$-stack layouts, which was known for traditional stack layout, i.e., for $k=0$~\cite{DBLP:journals/jct/BernhartK79}. 

\begin{theorem}\label{th:charact-k-defective-stack}
A graph has $k$-defective stack number $1$ if and only if it is outer $k$-planar.
\end{theorem}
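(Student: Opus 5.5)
The plan is to prove the two directions of the equivalence separately. Both directions rest on the standard correspondence between a single page of a linear layout and a drawing with all vertices on a circle. This is the same correspondence that underlies the characterization of Bernhart and Kainen for $k=0$. The one thing to check is that the crossing notion used in the definition of a $k$-defective stack matches the crossing notion of outer $k$-planar drawings.

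\emph{($\Rightarrow$)} Take a $k$-defective $1$-stack layout of $G$ with vertex order $v_0 \prec \dots \prec v_{n-1}$. Place $v_0,\dots,v_{n-1}$ in this cyclic order on a circle and draw every edge as a straight chord. Every vertex then lies on the outer face. Two chords cross in their interiors if and only if their four endpoints alternate along the circle. Since the circular order restricted to any four points is just the linear order, this happens if and only if the edges are independent and satisfy $v_i \prec v_g \prec v_j \prec v_l$, i.e., they are in a crossing relationship in the layout. Edges sharing an endpoint meet only at that endpoint, so they do not cross. Hence each edge crosses exactly the edges it is in a crossing relationship with, and there are at most $k$ of these. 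The drawing is therefore outer $k$-planar.

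\emph{($\Leftarrow$)} Take an outer $k$-planar drawing of $G$. First I would normalize it so that the outer-face boundary is a closed curve through all vertices and every edge runs in the disk it bounds. Then I would read off the cyclic order of the vertices along that curve and cut it at an arbitrary vertex to get a linear order $\prec$. The key claim is that two independent edges whose endpoints alternate in $\prec$ must cross in the drawing. This follows from a Jordan-curve argument: such an edge splits the disk into two regions containing the two endpoints of the other edge, so the other edge must meet it. Consequently, the number of edges in a crossing relationship with a given edge is at most its number of crossings in the drawing, hence at most $k$. Putting all edges into one stack gives a $k$-defective $1$-stack layout.

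The main obstacle is the normalization step in the ($\Leftarrow$) direction. An arbitrary outer $k$-planar drawing need not have edges lying inside a disk bounded by the vertex circle; edges may wind around outside, and adjacent edges or an edge with itself might cross. I would handle this by the standard argument for the equivalent definition of outer $k$-planarity as a convex drawing. Two points make this work. First, redrawing each edge as a chord preserves the cyclic vertex order. Second, under chords only alternating pairs cross, and by the Jordan-curve argument these pairs also crossed in the original drawing. So the redrawing never increases any edge's crossing count and keeps the drawing outer $k$-planar. As in the paper, this normalization may also simply be taken as the definition of outer $k$-planarity.
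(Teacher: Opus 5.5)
Your proposal is correct and follows the same route as the paper, which treats the statement as immediate from the definitions: a single defective stack with spine order $\prec$ is the same thing as a drawing with the vertices on a circle in that order, in which crossing relationships are exactly alternations, and you simply spell out both directions of this correspondence. In the $(\Leftarrow)$ direction the only substantive step is choosing a closed curve through all vertices that lies in the outer face and has every edge on its inner side; once that curve is fixed, your Jordan-curve argument applies directly to the original drawing, so you do not need to redraw the edges as chords.
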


The bounds on the defective stack number of outer $k$-planar graphs reported in \cref{tb:queue-stack-number-1,tb:queue-stack-number-k} immediately derive from \cref{th:charact-k-defective-stack}.
We now study the $k$-defective stack number of $K_n$ and $K_{n,n}$. For $k=1$, our upper bound for $K_n$ is tight (up to an additive constant) as we demonstrate graphs with the same amount of edges minus $O(1)$. 

\begin{theorem}\label{th:1-defective-sn-kn}
The $1$-defective stack-number of $K_n$ is at least $\left\lfloor \frac{n}{3}\right\rfloor-1$ and at most $\left\lceil \frac{n}{3} \right\rceil$.
\end{theorem}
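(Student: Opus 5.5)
For the lower bound I would use a density argument via \cref{th:1-defective-stack-density}. If $K_n$ admits a $1$-defective $h$-stack layout, then $\binom{n}{2} \le (\frac{3}{2}h+1)n - 4h$. Rearranging gives $\frac{n(n-1)}{2} - n \le h(\frac{3}{2}n - 4)$, that is,
\[
h \ \ge\ \frac{n(n-3)}{3n-8}.
\]
A direct computation shows that $\frac{n(n-3)}{3n-8} \ge \frac{n}{3} - \frac{1}{9} - O(1/n)$, which exceeds $\lfloor n/3 \rfloor - 1$ for all $n \ge 3$. Since $h$ is an integer, $h \ge \lfloor n/3 \rfloor - 1$ follows, with room to spare. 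The only care needed is the small-$n$ arithmetic; for $n\le 5$ the claimed bound is trivial.

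For the upper bound I would give an explicit construction on a circular vertex order $v_0,\dots,v_{n-1}$. It is natural to think of a stack as a set of chords of a convex polygon, each crossing at most one other chord. The template I would lift is the classical Bernhart--Kainen decomposition of $K_{2m}$ into $m$ ``zigzag'' Hamiltonian paths on the circle, each of which is a $0$-defective stack. My aim is to cover three vertices per stack instead of two. Concretely, I would partition the vertices into $\lceil n/3\rceil$ consecutive triples $T_0,T_1,\dots$ and assign to stack $i$ a rotated ``triple-zigzag'': a structure that alternates between vertices on the two sides of the diameter through $T_i$. Its chords are nested parallel diagonals, augmented by one extra diagonal inserted in each quadrilateral strip between consecutive parallels. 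Each such strip is a convex quadrilateral, so both of its diagonals can be used, and they cross only each other.

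The key steps are as follows. (1) Define, for each $i$, the edge set $S_i$ as the union of the ``parallel'' chords $v_{a}v_{b}$ with $a+b$ in a fixed residue set of size $3$ modulo $n$, namely $\{3i,3i+1,3i+2\}$. (2) Observe that chords with fixed $a+b \equiv c$ are pairwise non-crossing (they are parallel in the regular polygon). (3) Check that a chord with sum $c$ crosses exactly one chord with sum $c+1$, possibly two, and that chords with sums $c$ and $c+2$ interleave in a controlled way.

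Since every edge has a unique value of $a+b \bmod n$, the $\lceil n/3\rceil$ groups of three residues cover all edges. The work is to verify that, within a group of three consecutive sums, every chord is crossed at most once. If three consecutive sum classes create too many crossings, I would instead use only two consecutive sums plus a sparse subset of a third, rebalancing so that the total number of stacks is still $\lceil n/3\rceil$.

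I expect the main obstacle to be this verification step. Chords with sums $c$ and $c+1$ form a zigzag path, which is non-crossing in itself. Adding the sum-$(c+2)$ chords must create crossings that are pairwise disjoint, and I would try to check this first by drawing the case $n=9$. If it fails, the fallback is to split the residues differently: for example, use sums $\{c, c+1\}$ as a plane zigzag and distribute the remaining third of the sums as ``one extra crossing diagonal per quadrilateral face'' of the zigzag triangulation.
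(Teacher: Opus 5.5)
Your lower bound is the paper's density argument. Your upper-bound decomposition is exactly the paper's construction in cleaner coordinates. For even $n$, the paper's stack $\mathcal{A}_j$ consists of the non-side edges $v_av_b$ with $a+b\equiv 6j-4,6j-3,6j-2 \pmod n$, and $\mathcal{B}_j$ of those with $a+b\equiv 6j-1,6j,6j+1$; odd $n$ is analogous. So the paper also groups edges into consecutive triples of index sums.

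As an aside, your identity $\frac{n(n-3)}{3n-8}=\frac n3-\frac19-\frac{8}{9(3n-8)}$, together with integrality of $h$, gives $h\ge\lceil n/3\rceil$ for all $n\ge 5$. That matches the upper bound and is stronger than the stated $\lfloor n/3\rfloor-1$.

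The gap is in the upper bound. The one nontrivial claim is that the union of three consecutive sum classes is $1$-defective. You leave it unverified, hedge it with a vague fallback, and misstate the crossing structure in step (3). A chord of sum $c$ crosses \emph{no} chord of sum $c+1$, as your last paragraph correctly says; if it crossed one or two, the stack could fail to be $1$-defective. All crossings are between sums $c$ and $c+2$, and there $v_av_b$ crosses only $v_{a+1}v_{b+1}$.

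To close the gap, make your strip picture precise, taking the middle class $c+1$ as the ``parallels''.
\begin{itemize}
\item Let $v_xv_y$ be a non-side chord of sum $c$. The vertices $v_x,v_{x+1},v_y,v_{y+1}$ span a convex quadrilateral $Q$. Its boundary consists of the polygon sides $v_xv_{x+1}$, $v_yv_{y+1}$ and the sum-$(c+1)$ chords $v_{x+1}v_y$, $v_xv_{y+1}$. Its second diagonal $v_{x+1}v_{y+1}$ has sum $c+2$, and every non-side chord of sum $c+2$ arises this way.
\item Chords of equal sum are pairwise non-crossing, so no sum-$(c+1)$ chord crosses the boundary of $Q$. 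A chord that meets the interior of $Q$ without crossing its boundary must be a diagonal of $Q$, whose sum is $c$ or $c+2$. Hence $Q$ is a face of the subdivision of the polygon by the sum-$(c+1)$ chords.
\item Therefore every chord of sum $c$ or $c+2$ lies inside a single such face and crosses only the other diagonal of that face. Sum-$(c+1)$ chords and polygon sides cross nothing.
\end{itemize}
Every edge has a unique index sum mod $n$. So the residue groups $\{0,1,2\},\{3,4,5\},\dots$ give a $1$-defective layout with $\lceil n/3\rceil$ stacks for every $n$ at once. This avoids the paper's reduction to $3\mid n$ and its even/odd case split.
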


\begin{proof}
 We assume that $n$ is a multiple of $3$ and we prove that the complete graph on $n$ vertices $K_n$ admits a $1$-defective stack layout $\mathcal{L}_n$ with $\frac{n}{3}$ defective stacks. This proves the upper bound. We denote the defective stacks of $\mathcal{L}_n$ by $\mathcal{A}_0,\ldots,\mathcal{A}_{\kappa-1}$ and $\mathcal{B}_0,\ldots,\mathcal{B}_{\lambda-1}$, such that $\kappa+\lambda=\frac{n}{3}$; see \cref{app:fig:1-defective-sn-kn} for an illustration. Let $v_0,\ldots,v_{n-1}$ be the order of the vertices of $K_n$ in $\mathcal{L}_n$, and assume in the following that indices are taken $\mod n$. We first assign the edges $(v_i,v_{i+1})$, with $0 \leq i \leq n-1$, to any defective stack of $\mathcal{L}_n$. Thus, in the following assignment scheme, we omit these $n$ edges. 

We consider two cases in our proof, depending on whether $n$ is even or odd. First, assume that $n$ is even. In this case, $\kappa=\lambda=\frac{n}{6}$. We assign edges to the defective stacks $\mathcal{A}_0,\ldots,\mathcal{A}_{\frac{n}{6}-1}$, such that for each $j$ in $[0,\frac{n}{6}-1]$ the defective stack $\mathcal{A}_j$ contains the following $\frac{3n}{2}-4$ edges:

\noindent
\begin{tabular}{l l l l l }
\tikzcircle[defred, fill=defred]{2.8pt} $(v_{3j+i-1}, v_{n+3j-i-1})$ & $1 \leq i \leq \frac{n}{2}-1$ & & \tikzcircle[defblue, fill=defblue]{2.8pt} $(v_{3j+i-1}, v_{n+3j-i-2})$ & $1 \leq i \leq \frac{n}{2}-2$\\
\tikzcircle[defgreen, fill=defgreen]{2.8pt} $(v_{3j+i-1}, v_{n+3j-i-3})$ & $0 \leq i \leq \frac{n}{2}-2$ & & &
\end{tabular}
\medskip

Next, we assign edges to the defective stacks $\mathcal{B}_0,\ldots,\mathcal{B}_{\frac{n}{6}-1}$, such that for each $j$ in $[0,\frac{n}{6}-1]$ the defective stack $\mathcal{B}_j$ contains the following $\frac{3n}{2}-5$ edges:

\noindent
\begin{tabular}{ l l l l l }
\tikzcircle[defgray, fill=defgray]{2.8pt} $(v_{3j+i+1}, v_{n+3j-i})$ & $1 \leq i \leq \frac{n}{2}-2$ & & \tikzcircle[deforange, fill=deforange]{2.8pt} $(v_{3j+i+1}, v_{n+3j-i-1})$ & $0 \leq i \leq \frac{n}{2}-2$\\
\tikzcircle[defpurple, fill=defpurple]{2.8pt} $(v_{3j+i+1}, v_{n+3j-i-2})$ & $0 \leq i \leq \frac{n}{2}-3$ & & &
\end{tabular}
\bigskip 

In the case where $n$ is odd, our assignment is slightly different. Namely, we set $\kappa=\frac{n+3}{6}$ and $\lambda=\frac{n+3}{6}-1$ and assign edges to the defective stacks $\mathcal{A}_0,\ldots,\mathcal{A}_{\frac{n+3}{6}-1}$, such that for each $j$ in $[0,\frac{n+3}{6}-1]$ the defective stack $\mathcal{A}_j$ contains the following $\frac{3(n-1)}{2} - 3$ edges:

\noindent
\begin{tabular}{ l l l l l }
\tikzcircle[defred, fill=defred]{2.8pt} $(v_{3j+i}, v_{n+3j-i-1})$ & $1 \leq i \leq \frac{n-1}{2}-1$ & & \tikzcircle[defblue, fill=defblue]{2.8pt} $(v_{3j+i}, v_{n+3j-i-2})$ & $0 \leq i \leq \frac{n-1}{2}-2$\\
\tikzcircle[defgreen, fill=defgreen]{2.8pt} $(v_{3j+i}, v_{n+3j-i-3})$ & $0 \leq i \leq \frac{n-1}{2}-2$ & & &
\end{tabular}
\medskip

Finally, we assign edges to the defective stacks $\mathcal{B}_0,\ldots,\mathcal{B}_{\frac{n+3}{6}-2}$, such that for each $i$ in $[0,\frac{n+3}{6}-2]$ the defective stack $\mathcal{B}_i$ contains the following $\frac{3(n-1)}{2} - 3$ edges:

\noindent
\begin{tabular}{ l l l l l }
\tikzcircle[defgray, fill=defgray]{2.8pt}  $(v_{3j+i+1}, v_{n+3j-i+1})$ & $1 \leq i \leq \frac{n-1}{2}-1$ & & \tikzcircle[deforange, fill=deforange]{2.8pt} $(v_{3j+i+1}, v_{n+3j-i})$ & $1 \leq i \leq \frac{n-1}{2}-1$\\
\tikzcircle[defpurple, fill=defpurple]{2.8pt} $(v_{3j+i+1}, v_{n+3j-i-1})$ & $0 \leq i \leq \frac{n-1}{2}-2$ & & &
\end{tabular}
\medskip

\begin{figure}
    \begin{subfigure}{.18\textwidth}
        \centering
        {\includegraphics[page=6]{1-defective-stack-kn.pdf}}
        \subcaption{\label{app:fig:k12-a0}}
    \end{subfigure}
    \begin{subfigure}{.18\textwidth}
        \centering
        {\includegraphics[page=7]{1-defective-stack-kn.pdf}}
        \subcaption{\label{app:fig:k12-a1}}
    \end{subfigure}
    \begin{subfigure}{.18\textwidth}
        \centering
        {\includegraphics[page=8]{1-defective-stack-kn.pdf}}
        \subcaption{\label{app:fig:k12-b0}}
    \end{subfigure}
    \begin{subfigure}{.18\textwidth}
        \centering
        {\includegraphics[page=9]{1-defective-stack-kn.pdf}}
        \subcaption{\label{app:fig:k12-b1}}
    \end{subfigure}
    
    \begin{subfigure}{.18\textwidth}
        \centering
        {\includegraphics[page=1]{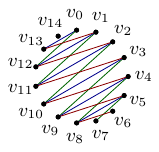}}
        \subcaption{\label{app:fig:k15-a0}}
    \end{subfigure}
    \begin{subfigure}{.18\textwidth}
        \centering
        {\includegraphics[page=2]{1-defective-stack-kn.pdf}}
        \subcaption{\label{app:fig:k15-a1}}
    \end{subfigure}
    \begin{subfigure}{.18\textwidth}
        \centering
        {\includegraphics[page=3]{1-defective-stack-kn.pdf}}
        \subcaption{\label{app:fig:k15-a2}}
    \end{subfigure}
    \begin{subfigure}{.18\textwidth}
        \centering
        {\includegraphics[page=4]{1-defective-stack-kn.pdf}}
        \subcaption{\label{app:fig:k15-b0}}
    \end{subfigure}
    \begin{subfigure}{.18\textwidth}
        \centering
        {\includegraphics[page=5]{1-defective-stack-kn.pdf}}
        \subcaption{\label{app:fig:k15-b1}}
    \end{subfigure}
   
    \caption{Illustration of $1$-defective stack layouts of $K_{12}$ with 4 defective stacks 
    (a)~$\mathcal{A}_0$,
    (b)~$\mathcal{A}_1$, 
    (c)~$\mathcal{B}_0$,  
    (d)~$\mathcal{B}_1$, and of $K_{15}$ with 5 defective stacks 
    (e)~$\mathcal{A}_0$,
    (f)~$\mathcal{A}_1$, 
    (g)~$\mathcal{A}_2$,
    (h)~$\mathcal{B}_0$, and 
    (i)~$\mathcal{B}_1$.}
    \label{app:fig:1-defective-sn-kn}
\end{figure}

By construction, each edge in $\mathcal{L}_n$ is either crossing-free (refer to the edges colored blue and orange above) or is crossed by at most one other edge (refer to the edges colored red and green, and also to the edges colored gray and purple above). Further, it is not difficult to see that no edge is assigned to two distinct defective stacks. To complete the proof, we count the total number of edges in $\mathcal{L}_n$. Summing up~over $j$, we conclude that $\mathcal{L}_n$ contains $(\frac{3n}{2} - 4)\cdot \frac{n}{6} + (\frac{3n}{2} - 5)\cdot \frac{n}{6}$ edges, if $n$ is even; otherwise, $\mathcal{L}_n$ contains $(\frac{3(n-1)}{2} - 3)\cdot \frac{n+3}{6} + (\frac{3(n-1)}{2} - 3)\cdot (\frac{n+3}{6}-1)$ edges. Thus, in total $\mathcal{L}_n$ has $\frac{1}{2}(n^2-3n)$ edges, i.e., the number of edges of $K_n$ neglecting the $n$ edges that we assigned at the beginning to some (defective) stack of $\mathcal{L}_n$.  This completes the proof of the upper bound.

For the lower bound, let $h$ be the number of defective stacks in a $1$-defective stack layout 
of $K_n$. Since $K_n$ has $\frac{n(n-1)}{2}$ edges, by \cref{th:1-defective-stack-density} we get:
$\frac{n(n-1)}{2} \leq (\frac{3h}{2}+1)n - 4h$,
which implies 
$h \geq \frac{n(n-3)}{3n - 8}$.
Since the right-hand side of the last inequality is lower bounded by $\frac{n-3}{3}$, the total number of defective stacks required in a $1$-defective stack layout of $K_n$ is~$\left\lfloor \frac{n}{3}\right\rfloor-1$.
\end{proof}

\noindent \cref{th:k-defective-stack-kn} forms a generalization of \cref{th:1-defective-sn-kn} providing upper and lower bounds on the $k$-defective stack number of $K_n$. Notably, the obtained upper bound is tight for $k\in \{0,1\}$.

\begin{theorem}\label{th:k-defective-stack-kn}
The $k$-defective stack number of $K_n$ is at least $\left\lceil \frac{n}{2k+2}\right\rceil$ and at most $\left\lceil \frac{n}{l+2} \right\rceil$, where $l {=} \left\lfloor \frac{-1 + \sqrt{8k+1}}{2} \right\rfloor$.
\end{theorem}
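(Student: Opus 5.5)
For the lower bound I would use a counting argument on a family of pairwise crossing edges, mirroring the class argument of \cref{th:kn-Kdefect-lower}. Fix any $k$-defective stack layout of $K_n$ with vertex order $v_0\prec\dots\prec v_{n-1}$. Assume first that $n$ is even and consider the $\frac{n}{2}$ edges $(v_i,v_{i+n/2})$ for $0\le i<\frac n2$. They pairwise cross (an $\frac n2$-twist). Within one defective stack, each of these edges may cross at most $k$ others. Since any two of them cross, a single defective stack can contain at most $k+1$ of them. Hence $h\ge \frac{n/2}{k+1}=\frac{n}{2k+2}$, and integrality gives $\lceil\frac{n}{2k+2}\rceil$. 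For odd $n$ I would use the twist on $K_{n-1}$ or $K_{n+1}$ and check that the ceiling still works out; if it fails by one in a corner case, I would instead add an extra edge to the twist. This part is routine.

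For the upper bound I would generalize the rotational scheme behind \cref{th:1-defective-sn-kn} and the classical Bernhart–Kainen construction for $k=0$. Place the vertices on a circle, with indices taken modulo $n$. The classical layout uses $\lceil n/2\rceil$ stacks, each a zigzag Hamiltonian path of "parallel" chords $(v_{j+i},v_{j-i-1})$ together with its shifts. Here I would set $s=l+2$ and, for each $j$ in $[0,\lceil n/s\rceil-1]$, give stack $j$ a "band" of chords. These are all chords $(v_a,v_b)$ whose midpoint index $a+b \pmod n$ lies in a window of $l+2$ consecutive values starting at $s\cdot j$ (suitably adjusted for parity). Chords with the same value of $a+b$ are parallel and never cross. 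Chords whose values of $a+b$ differ by $d$ cross only near the "ends" of the band, and each chord crosses a number of chords in the window bounded by a triangular number. The choice $l=\lfloor(-1+\sqrt{8k+1})/2\rfloor$ is exactly the largest $l$ with $\frac{l(l+1)}{2}\le k$. This suggests that a chord should cross $1+2+\dots+l$ chords of its band, namely those in the $l$ neighbouring parallel classes near its endpoints, analogous to the count $\frac{(l-2)(l-1)}{2}$ in \cref{th:kn-kdefect}.

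The key steps are then as follows. First, define the bands so that they partition all chords of $K_n$ apart from the boundary edges $(v_i,v_{i+1})$, which never cross anything and can go anywhere. The $2n$ residues of $a+b$ (counting both parities, as in the even/odd split of \cref{th:1-defective-sn-kn}) fall into about $n/(l+2)$ windows up to the rotational symmetry. Second, for a chord $e$ in a band, count the chords of the same band that cross it. A chord of class $c$ crosses a chord of class $c'$ exactly when their endpoints interleave. Within a window of width $l+2$ this happens only for short chords near the apex of the zigzag, which should give at most $\sum_{t=1}^{l}t=\frac{l(l+1)}2\le k$ crossings. Third, count the stacks: $\lceil n/(l+2)\rceil$. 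For $l=0$ this should reproduce $\lceil n/2\rceil$, and for $l=1$ it should reproduce $\lceil n/3\rceil$, matching \cref{th:1-defective-sn-kn}. I would use both cases as sanity checks.

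The main obstacle is the second step: a long chord in a band of $l+2$ consecutive parallel classes could, in principle, cross many chords from the other classes, not only $O(l^2)$ of them. I would have to choose the window so that near-parallel chords of classes $c$ and $c+d$ interleave only when one of them is within distance about $d$ of the end of the zigzag. If the straightforward window gives too many crossings for the longest chords, my fallback is to move those few long chords into an extra stack, or to shift them between neighbouring bands, as the $\mathcal A/\mathcal B$ alternation does in \cref{th:1-defective-sn-kn}, and then verify that the ceiling is unaffected.
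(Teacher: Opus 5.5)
\textbf{The upper bound.} Your construction is the paper's. Giving one stack all chords $(v_a,v_b)$ with $a+b \bmod n$ in $l+2$ consecutive residues is exactly the paper's scheme, in which each root vertex is joined to $l+2$ consecutive vertices and consecutive fans shift by one.

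The gap is that the only non-trivial step, bounding the crossings inside a stack, is never proved, and the picture you give of it is wrong. You correctly say the crossing chords lie ``near its endpoints'', but you then claim that crossings occur only for short chords near the apex, and you worry about long chords. In fact crossings are spread uniformly along the band. For $l=1$, every chord $(v_x,v_y)$ of the lowest class crosses $(v_{x+1},v_{y+1})$ of the highest class, however long it is. So there is no long-chord obstacle. Your fallback of moving long chords to extra stacks, or shifting them between bands, would only destroy the $\lceil n/(l+2)\rceil$ count.

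The missing lemma is this: a chord $e=(v_x,v_y)$ with $x+y\equiv\sigma \pmod n$ is crossed by at most $d-1$ chords of class $\sigma\pm d$, whatever its length.
\begin{itemize}
\item The reflection $\psi(z)=\sigma-z$ swaps $x$ and $y$ and maps each side $A$, $B$ of $e$ onto itself.
\item The chords of class $\sigma+d$ are $\{z,\psi(z)+d\}$. Hence those crossing $e$ correspond bijectively to the $u\in A$ with $u+d\in B$.
\item Traverse $A$ from $x$ to $y$ in increasing index. Such a $u$ must be among the $d-1$ vertices of $A$ just before $y$. The case $\sigma-d$ is symmetric.
\end{itemize}
So a chord in position $t\in\{0,\dots,l+1\}$ of its window has at most $\frac{t(t-1)}{2}+\frac{(l+1-t)(l-t)}{2}\le\frac{l(l+1)}{2}\le k$ crossings in its stack. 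The paper gets the same number by noting that the extreme edge of a fan is crossed by $l,l-1,\dots,1$ edges of the preceding fans. With this lemma no parity adjustment is needed, and $\lceil n/(l+2)\rceil$ windows cover $\mathbb{Z}_n$ for every $n$. Note also that there are $n$ parallel classes (sums modulo $n$), not $2n$; with $2n$ classes you would get twice as many stacks.

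\textbf{The lower bound.} For even $n$ your argument is the paper's. For odd $n$, a twist has at most $\lfloor n/2\rfloor$ edges, and $\lceil \lfloor n/2\rfloor/(k+1)\rceil$ falls one short of $\lceil n/(2k+2)\rceil$ exactly when $n=2q(k+1)+1$. The paper, which simply asserts an $\frac{n}{2}$-twist, does not address this either.

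Your patches do not work as stated: $K_{n+1}\not\subseteq K_n$, $K_{n-1}$ has the same deficit, and a maximum twist cannot be extended. Adding an edge does work once you drop the twist requirement. For $n=2m+1$, take the edges $(v_a,v_{a+m})$ for $0\le a<m$, together with $(v_m,v_{2m})$. The extra edge crosses all of them except $(v_0,v_m)$. For $k\ge 1$, any $k+2$ of these $m+1$ edges include some $(v_a,v_{a+m})$ with $a\ge 1$, and that edge crosses the other $k+1$. Hence $h\ge\lceil (m+1)/(k+1)\rceil\ge\lceil n/(2k+2)\rceil$.
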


\begin{proof}
The proof of the lower bound is an adjustment of a corresponding one given in~\cite{DBLP:journals/jct/BernhartK79}. More precisely, in any linear order of the vertices of $K_n$, there exists an $\frac{n}{2}$-twist, that is, $\frac{n}{2}$ edges that pairwise cross. It follows that at most $k+1$ of them can coexist in a single defective stack of a $k$-defective stack layout of $K_n$, yielding the claimed lower bound. 

The proof of the upper bound is more tedious and it forms an extension of the one of \cref{th:1-defective-sn-kn}.
We assume that $n$ is a multiple of $l+2$, and we prove that the complete graph on $n$ vertices $K_n$ admits a $k$-defective stack layout $\mathcal{L}_n$ with $\frac{n}{l+2}$ defective stacks. This proves the upper bound. We denote the defective stacks of $\mathcal{L}_n$ by $\mathcal{A}_0,\mathcal{A}_1,\ldots,\mathcal{A}_{\kappa-1}$ and $\mathcal{B}_0,\mathcal{B}_1,\ldots,\mathcal{B}_{\lambda-1}$, such that $\kappa+\lambda=\frac{n}{l+2}$. Let $v_0,v_1,\ldots,v_{n-1}$ be the order of the vertices of $K_n$ in $\mathcal{L}_n$ and assume in the following that indices are taken$\mod n$. We first assign the edges $(v_i,v_{i+1})$, with $0 \leq i \leq n-1$, to any defective stack of $\mathcal{L}_n$. Thus, in the following assignment scheme, we omit these $n$ edges. We proceed by considering cases in our proof, depending on whether each of $n$ and $l$ is even or odd. 

\paragraph*{Case 1: $l$ is even.}

Since $n$ is a multiple of $l+2$, which is even, it follows that $n$ is even. In this case, we set $\kappa = \frac{n}{l+2}$ and $\lambda = 0$, and we proceed to assign edges to the defective stacks $\mathcal{A}_0,\mathcal{A}_1,\ldots,\mathcal{A}_{\frac{n}{l+2}-1}$, such that for each $j \in [0,\frac{n}{l+2}-1]$ the defective stack $\mathcal{A}_j$ contains the following $\frac{l+2}{2} (n-3)$ edges (see \cref{fig:k-defective-l-even}): 

\medskip 
\noindent
\begin{tabular}{l@{\;\;}l@{\hspace{1ex}}r @{$ \;\; \leq \; i \; \leq \;\; $} l} 
\tikzcircle[defpurple, fill=defpurple]{2.8pt}      & $(v_{\frac{l+2}{2}j+i-1}, v_{n+\frac{l+2}{2}j-i-1})$   & $1$  & $\frac{n}{2}-1$ \\ 
\tikzcircle[deforange, fill=deforange]{2.8pt}     & $(v_{\frac{l+2}{2}j+i-1}, v_{n+\frac{l+2}{2}j-i-2})$     & $1$  & $\frac{n}{2}-2$ \\
\tikzcircle[defgray, fill=defgray]{2.8pt}   & $(v_{\frac{l+2}{2}j+i-1}, v_{n+\frac{l+2}{2}j-i-2s-3})$ & $-s$  & $\frac{n}{2}-s-2$, $s \in [0,\frac{l}{2}-1]$ \\
\tikzcircle[defgreen, fill=defgreen]{2.8pt}  & $(v_{\frac{l+2}{2}j+i-1}, v_{n+\frac{l+2}{2}j-i-2s-4})$ & $-s$  & $\frac{n}{2}-s-3$, $s \in [0,\frac{l}{2}-1]$ 
\end{tabular}

\begin{figure}[h]
    \begin{subfigure}{.48\textwidth}
        \centering
        {\includegraphics[page=8,width=\textwidth]{figures/k-defective.pdf}}
        \subcaption{\label{fig:k24-a0}$\mathcal{A}_0$}
    \end{subfigure}    
    \begin{subfigure}{.48\textwidth}
        \centering
        {\includegraphics[page=9,width=\textwidth]{figures/k-defective.pdf}}
        \subcaption{\label{fig:k24-a1}$\mathcal{A}_1$}
    \end{subfigure}
    
    \begin{subfigure}{.48\textwidth}
        \centering
        {\includegraphics[page=10,width=\textwidth]{figures/k-defective.pdf}}
        \subcaption{\label{fig:k24-a2}$\mathcal{A}_2$}
    \end{subfigure}
    \caption{A $k$-defective stack layout of $K_{18}$ with $\left\lceil \frac{n}{l+2} \right\rceil$ defective stacks, where $k=10$ and $l=4$.}
    \label{fig:k-defective-l-even}
\end{figure}

\medskip\noindent It follows that the total number of edges that participate in the aforementioned scheme is:
\begin{equation*}
\begin{split}
\sum_{j=0}^{\frac{n}{l+2}-1} \left( \frac{n}{2}-1 +\frac{n}{2}-2 + \sum_{s=0}^{\frac{l}{2}-1} \left( \frac{n}{2} - s -2 -(-s)+1 \right) + \sum_{s=0}^{\frac{l}{2}-1} \left( \frac{n}{2} - s -3 -(-s)+1 \right) \right)= \\
\sum_{j=0}^{\frac{n}{l+2}-1} \left( n-3 + \sum_{s=0}^{\frac{l}{2}-1} \left( n-3 \right) \right) 
=\sum_{j=0}^{\frac{n}{l+2}-1} \left( \frac{l+2}{2}\cdot(n-3) \right) 
= \frac{n}{l+2} \cdot \frac{l+2}{2} (n-3) = \frac{n(n-3)}{2}.
\end{split}
\end{equation*}

Since the indices guarantee that no edge connecting consecutive vertices along the spine appears in the aforementioned scheme, the total number of edges in $\mathcal{L}_n$ equals the number of edges of $K_n$, neglecting the $n$ edges that we assigned at the beginning to some defective stack of $\mathcal{L}_n$, as desired. We next prove that $\mathcal{L}_n$ is $k$-defective and that no edge has been assigned to two distinct defective stacks.
 
On a single defective stack, no two purple edges cross; the same holds for the orange ones. Also, on each defective stack $\mathcal{A}_j$ with $j$ in $[0,\frac{n}{l+2}-1]$, each so-called \emph{root vertex} $v_{\frac{l+2}{2}j+i-1}$ is incident to at most $l+2$ edges (one purple, one orange, at most $\frac{l}{2}$ gray, and at most $\frac{l}{2}$ green edges) that connect it to at most $l+2$ consecutive vertices. Among those, the edges $(v_{\frac{l+2}{2}j+i-1}, v_{n+\frac{l+2}{2}j-i-1})$ and $(v_{\frac{l+2}{2}j+i-1}, v_{n+\frac{l+2}{2}j-i-(l+2))})$ are the ones that receive the most crossings; see \cref{fig:k-defective-l-even-fan}. We argue for the former; the latter is symmetric. The edge $(v_{\frac{l+2}{2}j+i-1}, v_{n+\frac{l+2}{2}j-i-1})$ is crossed by $l$ edges rooted at $v_{\frac{l+2}{2}j+i-2}$, $l-1$ edges rooted at $v_{\frac{l+2}{2}j+i-3}$, and so on. This implies that the edge $(v_{\frac{l+2}{2}j+i-1}, v_{n+\frac{l+2}{2}j-i-1})$ is crossed $l+(l-1)+\ldots+1 = \frac{l(l+1)}{2}$ times. Since for every real number $x$ it holds $\lfloor x \rfloor \leq x$, by setting $l = \left\lfloor \frac{-1 + \sqrt{8k+1}}{2} \right\rfloor$ to $\frac{l(l+1)}{2}$, we obtain that the defectiveness of the layout is $k$, as desired. Further, the edges rooted at  $v_{\frac{l+2}{2}j+i-1}$ in the defective stacks $\mathcal{A}_{j-1}$ and $\mathcal{A}_{j+1}$ immediately precede or follows those at the defective stack $\mathcal{A}_j$, which implies that no edge has been assigned to two defective stacks.

\begin{figure}[thb]
    \centering
    {\includegraphics[page=1,width=\textwidth]{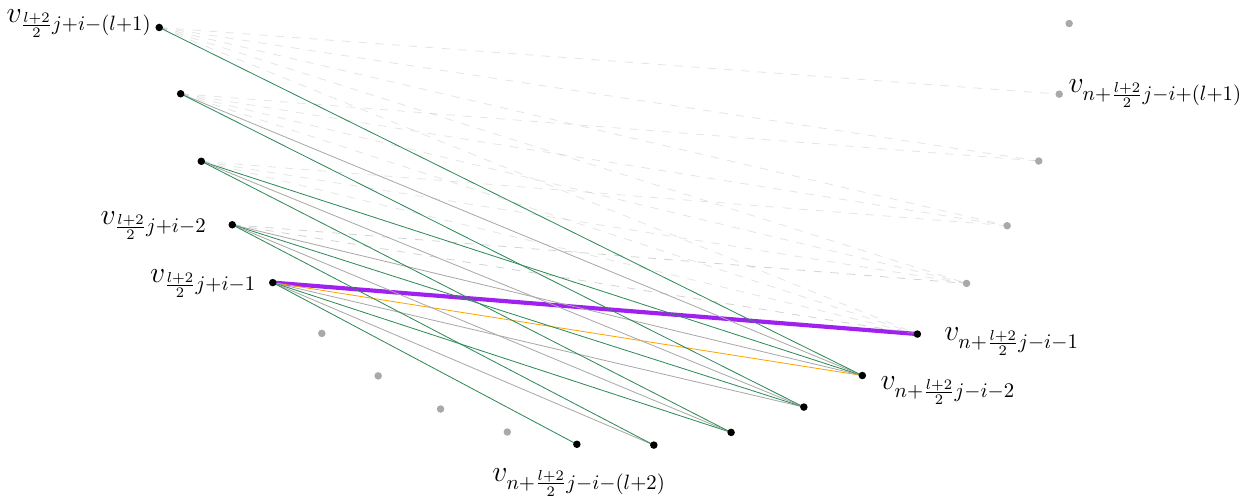}}
   \caption{Illustration for the crossings along the bold-drawn edge $\left(v_{\frac{l+2}{2}j+i-1}, v_{n+\frac{l+2}{2}j-i-1}\right)$.}
    \label{fig:k-defective-l-even-fan}
\end{figure}

\paragraph*{Case 2.a: $l$ is odd and $n$ is even.}

In this case, we set $\kappa=\lambda=\frac{n}{2(l+2)}$. We first assign edges to the defective stacks $\mathcal{A}_0,\mathcal{A}_1,\ldots,\mathcal{A}_{\frac{n}{2(l+2)}-1}$, such that for each $j$ in $[0,\frac{n}{2(l+2)}-1]$ the defective stack $\mathcal{A}_j$ contains the following edges:

\medskip 

\noindent
\begin{tabular}{l@{\;\;}l@{\hspace{1ex}}r @{$ \;\; \leq \; i \; \leq \;\; $} l} 
\tikzcircle[defgray, fill=defgray]{2.8pt}       & $(v_{(l+2)j+i-1}, v_{n+(l+2)j-i-1})$   & $1$  & $\frac{n}{2}-1$ \\ 
\tikzcircle[deforange, fill=deforange]{2.8pt}     & $(v_{(l+2)j+i-1}, v_{n+(l+2)j-i-2s-2})$  & $1-s$  & $\frac{n}{2}-s-2$, $s \in[0,\frac{l-1}{2}]$ \\
\tikzcircle[defpurple, fill=defpurple]{2.8pt}   & $(v_{(l+2)j+i-1}, v_{n+(l+2)j-i-2s-3})$   & $-s$  & $\frac{n}{2}-s-2$, $s \in[0,\frac{l-1}{2}]$\\  
\end{tabular}

\medskip\noindent Next, we assign edges to the defective stacks $\mathcal{B}_0,\mathcal{B}_1,\ldots,\mathcal{B}_{\frac{n}{2(l+2)}-1}$, such that for each $j$ in $[0,\frac{n}{2(l+2)}-1]$ the defective stack $\mathcal{B}_j$ contains the following edges:

\medskip 

\noindent
\begin{tabular}{l@{\;\;}l@{\hspace{1ex}}r @{$ \;\; \leq \; i \; \leq \;\; $} l} 

\tikzcircle[defgray, fill=defgray]{2.8pt} & $(v_{(l+2)j+i+1}, v_{n+(l+2)j-i+1})$ & $1$  & $\frac{n}{2}-2$ \\
\tikzcircle[deforange, fill=deforange]{2.8pt} & $(v_{(l+2)j+i+1}, v_{n+(l+2)j-i-2s})$ & $1-s$  & $\frac{n}{2}-s-2$, $s \in [0,\frac{l-1}{2}]$\\
\tikzcircle[defpurple, fill=defpurple]{2.8pt}   & $(v_{(l+2)j+i+1}, v_{n+(l+2)j-i-2s-1})$ & $-s$  & $\frac{n}{2}-s-2$, $s \in [0,\frac{l-1}{2}]$\\
\end{tabular}

\begin{figure}[h!]
    \begin{subfigure}{.48\textwidth}
        \centering
        {\includegraphics[page=4,width=\textwidth]{figures/k-defective.pdf}}
        \subcaption{\label{fig:k20-a0}$\mathcal{A}_0$}
    \end{subfigure}    
    \begin{subfigure}{.48\textwidth}
        \centering
        {\includegraphics[page=5,width=\textwidth]{figures/k-defective.pdf}}
        \subcaption{\label{fig:k20-a1}$\mathcal{A}_1$}
    \end{subfigure}
    
    \begin{subfigure}{.48\textwidth}
        \centering
        {\includegraphics[page=6,width=\textwidth]{figures/k-defective.pdf}}
        \subcaption{\label{fig:k20-b0}$\mathcal{B}_0$}
    \end{subfigure}
    \begin{subfigure}{.48\textwidth}
        \centering
        {\includegraphics[page=7,width=\textwidth]{figures/k-defective.pdf}}
        \subcaption{\label{fig:k20-b1}$\mathcal{B}_1$}
    \end{subfigure}
    \caption{A $k$-defective stack layout of $K_{20}$ with $\left\lceil \frac{n}{l+2} \right\rceil$ defective stacks, where $k=6$ and $l=3$.}
    \label{fig:k-defective-l-odd-n-even}
\end{figure}

\medskip\noindent The total number of edges assigned to $\mathcal{A}_0,\mathcal{A}_1,\ldots,\mathcal{A}_{\frac{n}{2(l+2)}-1}$ equals:

\begin{equation*}
\begin{split}
\sum_{j=0}^{\frac{n}{2(l+2)}-1} \left( \frac{n}{2}-1 + \sum_{s=0}^{\frac{l-1}{2}} \left( \frac{n}{2}-s-2-(1-s)+1 \right) + \sum_{s=0}^{\frac{l-1}{2}} \left( \frac{n}{2} - s -2 -(-s)+1 \right) \right)= \\
\sum_{j=0}^{\frac{n}{2(l+2)}-1} \left( \frac{n}{2}-1 + \sum_{s=0}^{\frac{l-1}{2}} \left( n-3 \right) \right) = \frac{n}{2(l+2)} \cdot \left(\frac{n}{2}-1+ \frac{l+1}{2} \cdot (n-3) \right).
\end{split}
\end{equation*}

\medskip\noindent Accordingly, the total number of edges assigned to $\mathcal{B}_0,\mathcal{B}_1,\ldots,\mathcal{B}_{\frac{n}{2(l+2)}-1}$ equals:

\begin{equation*}
\begin{split}
\sum_{j=0}^{\frac{n}{2(l+2)}-1} \left( \frac{n}{2}-2 + \sum_{s=0}^{\frac{l-1}{2}} \left( \frac{n}{2}-s-2-(1-s)+1 \right) + \sum_{s=0}^{\frac{l-1}{2}} \left( \frac{n}{2}- s-2 -(-s)+1 \right) \right)= \\
\sum_{j=0}^{\frac{n}{2(l+2)}-1} \left( \frac{n}{2}-2 + \sum_{s=0}^{\frac{l-1}{2}} \left( n-3 \right) \right) = \frac{n}{2(l+2)} \cdot \left(\frac{n}{2}-2+ \frac{l+1}{2} \cdot (n-3) \right).
\end{split}
\end{equation*}

Summing the number of edges assigned to $\mathcal{A}_0,\mathcal{A}_1,\ldots,\mathcal{A}_{\frac{n}{2(l+2)}-1}$ and $\mathcal{B}_0,\mathcal{B}_1,\ldots,\mathcal{B}_{\frac{n}{2(l+2)}-1}$, we obtain that the total number of edges that participate in the scheme, when $n$ is even and $l$ is odd, is:
\begin{equation*}
\begin{split}
\frac{n}{2(l+2)} \cdot \left(\frac{n}{2}-1+\frac{n}{2}-2+ (l+1) \cdot (n-3) \right) = \frac{1}{2} (n^2 -3n).
\end{split}
\end{equation*}

Thus, the total number of edges in $\mathcal{L}_n$ equals the number of edges of $K_n$, neglecting the $n$ edges that we assigned at the beginning to some defective stack of $\mathcal{L}_n$, as desired. As in Case 1 in which $l$ was even, the indices guarantee that no edge connecting consecutive vertices along the spine appears in the aforementioned scheme. In the following, we prove that $\mathcal{L}_n$ is $k$-defective and that no edge has been assigned to two distinct defective stacks. 

\begin{figure}[thb]
    \centering
    \begin{subfigure}{.78\textwidth}
    \centering
        {\includegraphics[page=2,width=\textwidth]{figures/k-def.pdf}}
        \subcaption{\label{fig:k-defective-l-odd-n-even-A-fan}$\mathcal{A}_j$}
    \end{subfigure}
    \begin{subfigure}{.78\textwidth}
        \centering
        {\includegraphics[page=3,width=\textwidth]{figures/k-def.pdf}}
        \subcaption{\label{fig:k-defective-l-odd-n-even-B-fan}$\mathcal{B}_j$}
    \end{subfigure}
   \caption{Illustration for the crossings along the bold-drawn edges $(v_{(l+2)j+i-1}, v_{n+(l+2)j-i-1})$ and $(v_{(l+2)j+i+1}, v_{n+(l+2)j-i+1})$, respectively.}
    \label{fig:k-defective-l-odd-n-even-fan}
\end{figure}

For the former, we argue about the edges assigned to each defective stack  $\mathcal{A}_j$ with $j$ in $[0, \frac{n}{2(l+2)} -1]$ (see \cref{fig:k-defective-l-odd-n-even-A-fan}); symmetrically, one can argue for the edges assigned to each defective stack $\mathcal{B}_j$ with $j$ in $[0, \frac{n}{2(l+2)} -1]$ (see \cref{fig:k-defective-l-odd-n-even-B-fan}). In particular, no two gray edges assigned to $\mathcal{A}_j$ cross. Also, on the defective stack $\mathcal{A}_j$ each $\emph{root vertex}$ $v_{(l+2)j+i-1}$ is incident to at most $l+2$ edges (one gray, at most $\frac{l+1}{2}$ orange, and at most $\frac{l+1}{2}$ purple edges) that connect it to at most $l+2$ consecutive vertices. Among those, the edges $(v _{(l+2)j+i-1}, v_{n+(l+2)j-i-1})$ and $(v _{(l+2)j+i-1}, v_{n+(l+2)j-i-(l+2)})$  are the ones that receive the most crossings. We argue about the number of crossings for the former; the latter is symmetric. The edge $(v _{(l+2)j+i-1}, v_{n+(l+2)j-i-1})$ is crossed by $l$ edges rooted at $v _{(l+2)j+i-2}$, $l-1$ edges rooted at $v _{(l+2)j+i-3}$, and so on. This implies that the edge $(v _{(l+2)j+i-1}, v_{n+(l+2)j-i-1})$ is crossed $l+(l-1)+ \ldots + 1=\frac{l(l+1)}{2}$ times. Since for every real number $x$ it holds $\lfloor x \rfloor \leq x$, by setting $l = \left\lfloor \frac{-1 + \sqrt{8k+1}}{2} \right\rfloor$ to $\frac{l(l+1)}{2}$, we obtain that the defectiveness of the layout is $k$, as desired. This completes the proof that $\mathcal{L}_n$ is $k$-defective. 

To complete the proof, we show that no edge has been assigned to two distinct defective stacks. To see this, observe that the edges rooted at  $v_{(l+2)j+i-1}$ in the defective stacks $\mathcal{B}_{j-1}$ and $\mathcal{B}_{j+1}$ immediately precede or follows those at the defective stack $\mathcal{A}_j$, while the edges rooted at $v_{(l+2)j+i-1}$ in the defective stacks $\mathcal{A}_{j-1}$ and $\mathcal{A}_{j+1}$ immediately precede or follow those at the defective stack $\mathcal{B}_j$. This implies that no edge has been assigned to two defective stacks.

\paragraph*{Case 2.b: $l$ is odd and $n$ is odd.} 

In this case, our assignment scheme is slightly different. Namely, we set $\kappa=\frac{1}{2}(\frac{n}{l+2}+1)$ and $\lambda=\frac{1}{2}(\frac{n}{l+2}-1)$. First, we assign edges to the defective stacks $\mathcal{A}_0,\mathcal{A}_1,\ldots,\mathcal{A}_{\frac{1}{2}(\frac{n}{l+2}+1)}$, such that for each $j$ in $[0,\frac{1}{2}(\frac{n}{l+2}+1)]$ the defective stack $\mathcal{A}_j$ contains the following edges:

\noindent
\begin{tabular}{l@{\;\;}l@{\hspace{1ex}}r @{$ \;\; \leq \; i \; \leq \;\; $} l} 
\tikzcircle[defgray, fill=defgray]{2.8pt}       & $(v_{(l+2)j+i}, v_{n+(l+2)j-i-1})$     & $1$  & $\frac{n-1}{2}-1$ \\
\tikzcircle[deforange, fill=deforange]{2.8pt}     & $(v_{(l+2)j+i}, v_{n+(l+2)j-i-2s-2})$    & $-s$  & $\frac{n-1}{2}-s-2$, $s \in [0,\frac{l-1}{2}]$ \\
\tikzcircle[defpurple, fill=defpurple]{2.8pt}   & $(v_{(l+2)j+i}, v_{n+(l+2)j-i-2s-3})$  & $-s$  & $\frac{n-1}{2}-s-2$, $s \in [0,\frac{l-1}{2}]$ 
\end{tabular}

\medskip 
\noindent 
Finally, we assign edges to the defective stacks $\mathcal{B}_0,\mathcal{B}_1,\ldots,\mathcal{B}_{\frac{1}{2}(\frac{n}{l+2}-1)}$, such that for each $j$ in $[0,\frac{1}{2}(\frac{n}{l+2}-1)]$ the defective stack $\mathcal{B}_j$ contains the following edges:

\noindent
\begin{tabular}{l@{\;\;}l@{\hspace{1ex}}r @{$ \;\; \leq \; i \; \leq \;\; $} l} 
\tikzcircle[defgray, fill=defgray]{2.8pt}     & $(v_{(l+2)j+i+1}, v_{n+(l+2)j-i+3})$ & $2$  & $\frac{n-1}{2}$ \\
\tikzcircle[deforange, fill=deforange]{2.8pt} & $(v_{(l+2)j+i+1}, v_{n+(l+2)j-i-2s+1})$ & $1-s$  & $\frac{n-1}{2}-s-1$, $s \in [0,\frac{l-1}{2}]$\\
\tikzcircle[defpurple, fill=defpurple]{2.8pt} & $(v_{(l+2)j+i+1}, v_{n+(l+2)j-i-2s+2})$ & $2-s$  & $\frac{n-1}{2}-s$, $s \in [0,\frac{l-1}{2}]$ \\ 
\end{tabular}

\begin{figure}[h!]
    \begin{subfigure}{.48\textwidth}
        \centering
        {\includegraphics[page=1,width=\textwidth]{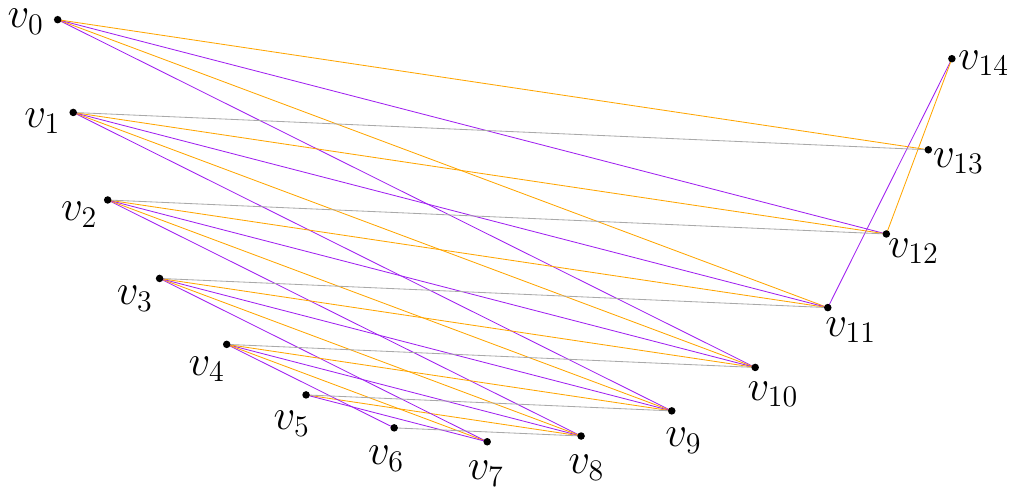}}
        \subcaption{\label{app:fig:k15-a0-k}$\mathcal{A}_0$}
    \end{subfigure}    
    \begin{subfigure}{.48\textwidth}
        \centering
        {\includegraphics[page=2,width=\textwidth]{figures/k-defective.pdf}}
        \subcaption{\label{app:fig:k15-a1-k}$\mathcal{A}_1$}
    \end{subfigure}
    
    \begin{subfigure}{.48\textwidth}
        \centering
        {\includegraphics[page=3,width=\textwidth]{figures/k-defective.pdf}}
        \subcaption{\label{app:fig:k15-b0-k}$\mathcal{B}_0$}
    \end{subfigure}
    \caption{A $k$-defective stack layout of $K_{15}$ with $\left\lceil \frac{n}{l+2} \right\rceil$ defective stacks, where $k=6$ and $l=3$.}
    \label{fig:k-defective-l-odd-n-odd}
\end{figure}

\medskip\noindent It follows that the total number of edges assigned to $\mathcal{A}_0,\mathcal{A}_1,\ldots,\mathcal{A}_{\frac{1}{2}(\frac{n}{l+2}+1)}$ equals:

\begin{equation*}
\begin{split}
\sum_{j=0}^{\frac{1}{2}(\frac{n}{l+2}+1)} \left( \frac{n-1}{2}-1 + \sum_{s=0}^{\frac{l-1}{2}} \left( \frac{n-1}{2}-s-2-(-s)+1 \right) + \sum_{s=0}^{\frac{l-1}{2}} \left( \frac{n-1}{2} - s -2 -(-s)+1 \right) \right)= \\
\sum_{j=0}^{\frac{1}{2}(\frac{n}{l+2}+1)} \left( \frac{n-1}{2}-1 + \sum_{s=0}^{\frac{l-1}{2}} \left( n-3 \right) \right) = \frac{n+l+2}{2(l+2)} \cdot \frac{(n-3)(l+2)}{2}= \frac{(n+l+2)(n-3)}{4}.
\end{split}
\end{equation*}

\medskip\noindent Accordingly, the total number of edges assigned to $\mathcal{B}_0,\mathcal{B}_1,\ldots,\mathcal{B}_{\frac{1}{2}(\frac{n}{l+2}-1)}$ equals:

\begin{equation*}
\begin{split}
\sum_{j=0}^{\frac{1}{2}(\frac{n}{l+2}-1)} \left( \frac{n-1}{2}-2 + \sum_{s=0}^{\frac{l-1}{2}} \left( \frac{n-1}{2}-s-1-(1-s)+1 \right) + \sum_{s=0}^{\frac{l-1}{2}} \left( \frac{n-1}{2}-s-(2-s)+1 \right) \right)= \\
\sum_{j=0}^{\frac{1}{2}(\frac{n}{l+2}-1)} \left( \frac{n-1}{2}-1 + \sum_{s=0}^{\frac{l-1}{2}} \left( n-3 \right) \right) = \frac{n-l-2}{2(l+2)} \cdot \frac{(n-3)(l+2)}{2}= \frac{(n-l-2)(n-3)}{4}.
\end{split}
\end{equation*}

\medskip 
Summing the number of edges assigned to $\mathcal{A}_0,\mathcal{A}_1,\ldots,\mathcal{A}_{\frac{1}{2}(\frac{n}{l+2}+1)}$ and $\mathcal{B}_0,\mathcal{B}_1,\ldots,\mathcal{B}_{\frac{1}{2}(\frac{n}{l+2}-1)}$, we obtain that the total number of edges that participate in the scheme, when both $n$ and $l$ are odd, is:
\begin{equation*}
\begin{split}
\frac{(n+l+2)(n-3)}{4} + \frac{(n-l-2)(n-3)}{4} = \frac{n-3}{4} \cdot (n+l+2+n-l-2) = \frac{n(n-3)}{2}.
\end{split}
\end{equation*}

The fact that $\mathcal{L}_n$ is $k$-defective and that no edge has been assigned to two distinct defective stacks can be proven following reasoning similar to the one of Case 2.a, which concludes the proof of the upper bound.
\end{proof}

\noindent We now turn our attention to $K_{n,n}$. The following is an immediate corollary of \cref{th:1-defective-sn-kn}.

\begin{corollary}\label{cor:1-defective-stack-knn-nonsep}
The $k$-defective stack number of $K_{n,n}$ in the separated setting is at least $\left\lceil \frac{n}{k+1}\right\rceil$ and at most $\left\lceil \frac{2n}{l+2}\right\rceil$, where $l {=} \left\lfloor \frac{-1 + \sqrt{8k+1}}{2} \right\rfloor$.
\end{corollary}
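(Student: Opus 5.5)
The upper bound follows directly from \cref{th:k-defective-stack-kn}. In the separated setting the vertices of one part precede those of the other, so $K_{n,n}$ is a spanning subgraph of $K_{2n}$ drawn on the same $2n$ vertices. I would take the $k$-defective stack layout of $K_{2n}$ given by \cref{th:k-defective-stack-kn}, which uses $\left\lceil \frac{2n}{l+2}\right\rceil$ defective stacks. I would then label its first $n$ spine vertices as the first part and its last $n$ as the second part, and delete all edges not in $K_{n,n}$. Deleting edges never increases the number of crossings on any remaining edge in its defective stack, so the result is still $k$-defective. If some defective stack becomes empty, merging or discarding it only lowers the count. This gives the upper bound $\left\lceil \frac{2n}{l+2}\right\rceil$.

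For the lower bound, I would exhibit an $n$-twist in the separated order. Let $u_1\prec\dots\prec u_n\prec w_1\prec\dots\prec w_n$ and consider the edges $(u_i,w_i)$ for $i=1,\dots,n$. For $i<j$ we have $u_i\prec u_j\prec w_i\prec w_j$, so these $n$ edges pairwise cross and are pairwise independent. In a $k$-defective stack, any edge of the twist crosses all other twist edges in that stack, so each defective stack contains at most $k+1$ of these edges. Covering all $n$ twist edges therefore requires at least $\left\lceil \frac{n}{k+1}\right\rceil$ defective stacks.

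Neither step is a real obstacle. The only points to check are that the ordering is truly forced to be separated, which holds by definition of the setting, and that the subgraph argument preserves the defectiveness bound. The lower bound argument is the same twist argument used in \cref{th:k-defective-stack-kn}, except that here a full $n$-twist exists, rather than an $\frac{n}{2}$-twist, so the denominator is $k+1$ instead of $2k+2$.
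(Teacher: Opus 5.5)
Your proposal is correct and matches the paper's argument, which the paper states only as an immediate corollary. The upper bound comes from restricting the $k$-defective stack layout of $K_{2n}$ to the edges between its first $n$ and last $n$ spine vertices, and the lower bound from the $n$-twist $(u_i,w_i)$, of which at most $k+1$ edges fit in one defective stack. The paper cites \cref{th:1-defective-sn-kn} at that point, but \cref{th:k-defective-stack-kn}, which you use, is the result actually needed for general $k$.
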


As in the separated setting, $k$-defective queue layouts can be turned into $k$-defective~stack~layouts by reversing the vertex order of one partition set; so, by~\cref{lem:knn-sep-1defect} the following~holds.

\begin{corollary}\label{cor:1-defective-stack-knn-sep}
The $1$-defective stack number of $K_{n,n}$ in the separated setting is at least $\left\lceil \frac{n}{2}\right\rceil$ and at most $\left\lceil \frac{2n}{3}\right\rceil$.    
\end{corollary}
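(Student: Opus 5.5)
The plan is to obtain both bounds directly: the upper bound by transferring defective queue layouts into defective stack layouts through a reversal of one side, and the lower bound from a large twist. Throughout, the separated setting means the vertex order is $a_1 \prec \dots \prec a_n \prec b_1 \prec \dots \prec b_n$, where $A=\{a_1,\ldots,a_n\}$ and $B=\{b_1,\ldots,b_n\}$ are the two parts of $K_{n,n}$.

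\emph{Reversal step.} Consider two independent edges $(a_i,b_j)$ and $(a_k,b_l)$ with $a_i \prec a_k$. They cross exactly when $b_j \prec b_l$, and they nest exactly when $b_l \prec b_j$. Now reverse the order of $B$ while keeping the order of $A$ and keeping every edge in its set. This reversal turns every crossing pair into a nesting pair and vice versa. It does not change which pairs of edges are independent. Hence the conflict graph of the queue version of the original order coincides with the conflict graph of the stack version of the reversed order. By \cref{pr:ch-crossing-graph}, a $1$-defective $h$-queue layout of $K_{n,n}$ in the separated setting therefore becomes a $1$-defective $h$-stack layout of $K_{n,n}$, still in the separated setting.

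\emph{Upper bound.} By \cref{lem:knn-sep-1defect}, $K_{n,n}$ in the separated setting admits a $1$-defective queue layout with $\left\lceil \frac{2n-1}{3}\right\rceil$ defective queues. The reversal step turns it into a $1$-defective stack layout with the same number of defective stacks. It remains to note that $\left\lceil \frac{2n-1}{3}\right\rceil \le \left\lceil \frac{2n}{3}\right\rceil$, which gives the claimed upper bound.

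\emph{Lower bound.} This needs no reversal. The $n$ edges $(a_i,b_i)$, for $i=1,\ldots,n$, are pairwise independent. They also pairwise cross, since $a_i \prec a_j \prec b_i \prec b_j$ whenever $i<j$, so they form an $n$-twist. Suppose three of these edges were placed in the same defective stack. Then each of the three would cross the other two, which violates $1$-defectiveness. So each defective stack contains at most two edges of the twist, and at least $\left\lceil \frac{n}{2}\right\rceil$ defective stacks are needed.

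The only point that needs care is the reversal step. One must check that every pair of independent edges of $K_{n,n}$ in the separated setting is in exactly one of the two relations, crossing or nesting. This holds because every edge has its left endpoint in $A$ and its right endpoint in $B$. Consequently the conflict graphs correspond exactly, and not merely up to inclusion.
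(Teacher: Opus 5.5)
Your proof is correct. Your upper bound is exactly the paper's argument. The paper's one-line proof reverses the order of one side of a separated $1$-defective queue layout and invokes \cref{lem:knn-sep-1defect}, and it gives no separate argument for the lower bound; you instead obtain the lower bound from an $n$-twist.

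Your reversal step gives more than you use. Crossing and nesting are swapped exactly, so reversal is a bijection between separated defective queue layouts and separated defective stack layouts that preserves the sets and their defectiveness. Hence the queue lower bound $\lceil \frac{2n-1}{3}\rceil$ of \cref{lem:knn-sep-1defect} transfers to stacks as well. Since $\lceil \frac{2n-1}{3}\rceil \ge \lceil \frac{n}{2}\rceil$ for every $n \ge 1$, this already implies the stated lower bound. Combined with the upper bound $\lceil \frac{2n-1}{3}\rceil$ you derive before rounding up, it shows that the $1$-defective stack number of $K_{n,n}$ in the separated setting is exactly $\lceil \frac{2n-1}{3}\rceil$.

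Your twist argument is weaker but self-contained. It needs neither the class partition nor \cref{le:alternating}. It also extends verbatim to the bound $\lceil \frac{n}{k+1}\rceil$ for every $k$, which is the lower bound stated in \cref{cor:1-defective-stack-knn-nonsep}.
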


\noindent In the non-separated setting, \Cref{th:1-defective-sn-kn,cor:1-defective-stack-knn-sep} give an upper bound of $\left\lceil \frac{2n}{3} \right\rceil$ on the $1$-defective stack number of $K_{n,n}$. We provide an improvement in the following theorem.
 
\begin{theorem}\label{thm:1-defective-stack-knn}
    The $1$-defective stack number of $K_{n,n}$ is at least $ \left\lceil \frac{n}{4} \right\rceil$ and at most $\left\lceil \frac{n}{2} \right\rceil$.  
\end{theorem}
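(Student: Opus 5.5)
The plan is to prove the two bounds separately: a twist argument for the lower bound, and a direct appeal to a known ordinary layout for the upper bound.

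\textbf{Lower bound.} I would adapt the twist argument from the proof of \cref{th:k-defective-stack-kn}. Take any linear order of the $2n$ vertices of $K_{n,n}$. Cut the order into a left half $L$ and a right half $R$, each with $n$ vertices. Let $a$ be the number of vertices of the first part lying in $L$; then $L$ contains $n-a$ vertices of the second part, and $R$ contains $n-a$ vertices of the first part and $a$ of the second.

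We match the $i$-th vertex of $L$ to the $i$-th vertex of $R$, wherever the two belong to opposite parts. The resulting edges pairwise cross and form a twist. By pairing first-part vertices of $L$ with second-part vertices of $R$, and vice versa, I expect a twist of size at least $n/2$ in every order. This is the standard argument behind the lower bound $\lceil n/2 \rceil$ on the stack number of $K_{n,n}$ listed in \cref{tbl:bounds}.

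In a $1$-defective stack, pairwise crossing edges form a clique in the conflict graph, and that clique has maximum degree at most $1$. Hence each defective stack contains at most $2$ edges of the twist, and at least $\lceil (n/2)/2 \rceil = \lceil n/4 \rceil$ defective stacks are needed.

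\textbf{Main obstacle.} The delicate step is the existence of an $\lceil n/2\rceil$-twist in every order, which needs a careful choice of the cut and of the matching. If a precise citation is unavailable, I would prove it directly. Choose a prefix in which the numbers of first-part and second-part vertices are balanced. Then match, in reverse, the first-part vertices of the prefix with the second-part vertices of the suffix. In a rainbow-versus-twist style argument, a nested matching contributes nothing, so the matching must be arranged so that the left endpoints and the right endpoints appear in the same relative order.

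\textbf{Upper bound.} I would simply invoke the classical result that the stack number of $K_{n,n}$ in some order is $\lceil n/2\rceil$ when the order is not required to be separated. The bound $\lfloor 2n/3\rfloor+1$ in \cref{tbl:bounds} is too weak, so this does not come for free from the table. Instead I would take the order that alternates the two parts, $a_1 b_1 a_2 b_2\cdots$, as in Bernhart and Kainen's layout of $K_{2n}$ with $\lceil 2n/2\rceil = n$ stacks.

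Restricting that layout to the edges of $K_{n,n}$ gives a $0$-defective layout with $n$ stacks. Merging stacks in pairs would produce too many crossings, so merging alone does not help. A better route is to reuse the $1$-defective scheme of \cref{th:1-defective-sn-kn} applied to $K_{2n}$, with the vertices ordered so that the parts alternate. That scheme uses $\lceil 2n/3\rceil$ stacks, which is still too many, so I would refine it.

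In the alternating order, each stack $\mathcal{A}_j$ of \cref{th:1-defective-sn-kn} contains zig-zag fans whose edges have hop-sizes of mixed parity. Only odd-hop edges are bipartite, so dropping the even-hop edges should let two consecutive index classes be packed into one defective stack. Then each stack handles $4$ consecutive root positions of the $2n$ vertices, which gives $2n/4 = n/2$ stacks. Verifying that each edge is crossed at most once after this packing is the main computational check.
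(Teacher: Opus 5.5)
Your lower bound is correct and is the paper's argument written out. Cutting the order into halves gives a separated $K_{\lceil n/2\rceil,\lceil n/2\rceil}$, hence a twist of size $\max(a,n-a)\ge\lceil n/2\rceil$, and a $1$-defective stack holds at most two twist edges. Clean up the matching, though. Pairing the $i$-th vertex of $L$ with the $i$-th of $R$ can give no edges at all. Matching ``in reverse'' gives a rainbow, not a twist. The two sub-matchings together need not form a twist, so you must keep only the larger one.

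The upper bound has a genuine gap: you give neither the stacks nor the crossing check.
\begin{itemize}
\item The ``classical result'' you invoke first is false. $K_{3,3}$ is nonplanar and so needs $3>\lceil 3/2\rceil$ stacks.
\item The fallback, refining \cref{th:1-defective-sn-kn}, does not work as described. Put $u_i$ at position $2i$ and $v_i$ at $2i+1$, and classify edges by the sum of their endpoint positions modulo $2n$.
\item Each stack of that scheme for $K_{2n}$ is a union of three consecutive sum classes. After discarding the even-sum (non-bipartite) edges, $\mathcal{A}_j$ keeps only class $6j-3$ and $\mathcal{B}_j$ keeps classes $6j-1$ and $6j+1$.
\item The leftover singletons $\{6j-3\}$ lie $6$ apart, so no packing of these remnants gives stacks of two adjacent odd classes. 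This is why that route stalls at $\lceil 2n/3\rceil$.
\end{itemize}

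What is missing is a crossing lemma, together with a regrouping from scratch. Read positions cyclically modulo $2n$. Two edges of the same sum class never cross. An edge $\{a,b\}$ of class $S$ is crossed by an edge of class $S+2$ only if that edge is $\{a+1,b+1\}$. Hence pairing the $n$ odd classes as $\{4p-5,4p-3\}$, for $p=1,\dots,n/2$, gives $n/2$ stacks of $2n$ edges each, in which every edge is crossed at most once.

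This is exactly the paper's stack $p$:
\begin{itemize}
\item its blue/green and red/orange edges are the rainbows of linear sums $2n+4p-5$ and $2n+4p-3$;
\item its gray/pink and cyan/purple edges are the rainbows of sums $4p-5$ and $4p-3$;
\item the only crossing between the two parts is the pair $(u_0,v_{2p-2})$, $(u_{2p-2},v_{n-1})$.
\end{itemize}
Your remark about ``4 consecutive positions per stack'' points the right way. Without this lemma and the explicit regrouping, however, the upper bound is unproved.
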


\begin{proof}
For the upper bound, let $u_0, \ldots, u_{n-1}$ and $v_0,\ldots,v_{n-1}$ be the two parts of vertices of $K_{n,n}$. 
We assume that $n$ is even, and we prove that $K_{n,n}$ admits a $1$-defective stack layout $\mathcal{L}_n$ with $\frac{n}{2}$ $1$-defective stacks, in which the underlying order is $u_0 \prec v_0 \prec u_1 \prec v_1 \prec \ldots \prec u_{n-1} \prec v_{n-1}$ (see \cref{fig:Knn_nonsep}). Each stack of $\mathcal{L}_n$ consists of exactly $2n$ edges that are partitioned into two parts. More precisely, for each $p = 1, \ldots, \frac{n}{2}$, the first part of stack $p$ of $\mathcal{L}_n$  contains the following $2n-4p+3$ edges:

\noindent
\begin{tabular}{ l l l l}
\tikzcircle[defblue, fill=defblue]{2.8pt} $(u_i, v_{n+2p-3-i})$ & $ 2p-2 \leq i \leq \frac{n}{2}+p-2$  &~~~
\tikzcircle[defred, fill=defred]{2.8pt} $(u_i, v_{n+2p-2-i})$ & $ 2p-1 \leq i \leq \frac{n}{2}+p-1$\\
\tikzcircle[deforange, fill=deforange]{2.8pt} $(v_i, u_{n+2p-2-i})$ & $ 2p-1 \leq i \leq \frac{n}{2}+p-2$ &~~~
\tikzcircle[defgreen, fill=defgreen]{2.8pt} $(v_i, u_{n+2p-3-i})$ & $ 2p-2 \leq i \leq \frac{n}{2}+p-2$ 

\end{tabular}
\medskip

\noindent The second part of stack $p$ contains the following $4p-3$ edges:

\noindent
\begin{tabular}{ l l l l}
\tikzcircle[defcyan, fill=defcyan]{2.8pt} $(u_i, v_{2p-2-i})$ & $0 \leq i \leq p-1$ &~~~
\tikzcircle[defgray, fill=defgray]{2.8pt} $(u_i, v_{2p-3-i})$ & $0 \leq i \leq p-2$ \\
\tikzcircle[defpink, fill=defpink]{2.8pt} $(v_i, u_{2p-3-i})$ & $0 \leq i \leq p-2$  &~~~
\tikzcircle[defpurple, fill=defpurple]{2.8pt} $(v_i, u_{2p-2-i})$ & $0 \leq i \leq p-2$ 

\end{tabular}

\begin{figure}
    \centering
    \includegraphics[width=.85\textwidth]{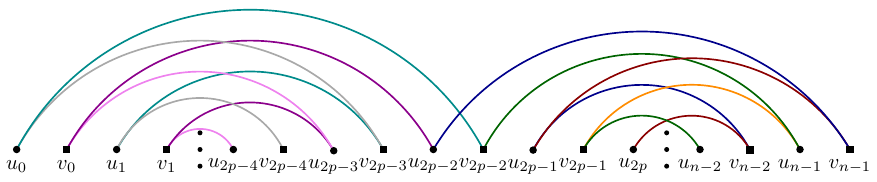}
    \caption{Illustration of the $p$-th stack of a $1$-defective stack layout of $K_{n,n}$ with $\left\lceil \frac{n}{2} \right\rceil$ stacks.}
    \label{fig:Knn_nonsep}
\end{figure}

\medskip \noindent Thus, in total $\mathcal{L}_n$ contains $ \sum_{p=1}^{\frac{n}{2}} (2n - 4p + 3 + 4p -3) = n^2$. By construction, each edge~in~$\mathcal{L}_n$ is crossed by at most one other. More precisely, for the edges of the first part, note that each blue edge crosses exactly one orange edge, each red edge crosses exactly one green edge, while no blue or orange edge crosses a red or green edge. Similarly, we argue about the edges of the second part (see \cref{fig:Knn_nonsep}). The outermost two edges of each part, namely $(u_0,v_{2p-2})$ and $(u_{2p-2},v_{n-1})$, are not involved in such crossings but cross each other. Since no further crossings between the two parts occur, this ensures that the obtained stack layout is $1$-defective. Since no two edges are assigned to the same stack, the proof follows. For the lower bound, we observe that every $1$-defective stack layout of $K_{n,n}$ contains a $1$-defective stack layout of $K_{\left\lceil n/2\right\rceil,\left\lceil n/2 \right\rceil}$ in the separated setting, which by \cref{cor:1-defective-stack-knn-sep} needs $\left\lceil\frac{n}{4}\right\rceil$ stacks.
\end{proof}

\begin{theorem}\label{th:k-defective-stack-knn}
The $k$-defective stack number of $K_{n,n}$ is at least $\left\lceil \frac{n}{2k+2} \right\rceil$ and at most~$\left\lceil \frac{n}{l} \right\rceil$, where $l = \sqrt{k}+1$.
\end{theorem}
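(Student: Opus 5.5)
Both bounds are proved separately, following the pattern used for \cref{th:k-defective-stack-kn} and \cref{thm:1-defective-stack-knn}.

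\textbf{Lower bound.} The plan is to adapt the twist argument of \cref{th:k-defective-stack-kn}. In any linear order of $K_{n,n}$, let $u$ be the first vertex of the order. We split the order at the median position and obtain a separated copy of $K_{\lceil n/2\rceil,\lceil n/2\rceil}$, exactly as at the end of the proof of \cref{thm:1-defective-stack-knn}. This copy has $a_1\prec\dots\prec a_t\prec b_1\prec\dots\prec b_t$ with $t=\lceil n/2\rceil$, all $a_i$ in one part and all $b_j$ in the other. The edges $(a_i,b_i)$ for $i=1,\dots,t$ pairwise cross, so they form a $t$-twist. In a $k$-defective stack, any edge of a twist crosses all other twist edges in that stack, so at most $k+1$ twist edges share a stack. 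Hence at least $\lceil t/(k+1)\rceil\ge\lceil n/(2k+2)\rceil$ defective stacks are needed.

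\textbf{Upper bound.} Set $l=\sqrt{k}+1$; for non-square $k$ we read this as $\lfloor\sqrt{k}\rfloor+1$. Partition each side into $\lceil n/l\rceil$ groups $U_1,V_1,\dots$ of at most $l$ consecutive vertices. I would use the interleaved order $u_0\prec v_0\prec u_1\prec v_1\prec\cdots$ as in \cref{thm:1-defective-stack-knn}.

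The idea is to generalize the "fan" structure of \cref{th:k-defective-stack-kn}. Stack $p$ receives, for every root vertex $u_i$, a bundle of at most $l$ edges to consecutive $v$-vertices. Each bundle is chosen along an anti-diagonal $i+j\equiv c_p$, shifted by offsets $0,\dots,l-1$. Concretely, stack $p$ gets all edges $(u_i,v_j)$ with $(i+j)\bmod n\in\{pl,\dots,pl+l-1\}$, adjusted for wraparound as in the $K_n$ constructions. The residues $0,\dots,n-1$ are covered by $\lceil n/l\rceil$ intervals of length $l$, and each pair $(u_i,v_j)$ has exactly one residue. This guarantees that every edge is assigned exactly once, and that there are $\lceil n/l\rceil$ stacks.

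Edges with equal $i+j$ are nested and never cross. Two edges $(u_i,v_j)$ and $(u_{i'},v_{j'})$ with $i<i'$ cross only if their sums differ appropriately and both are below $l$ apart. Counting these, an edge with sum $s$ crosses at most one edge of each other sum $s'$ for each root shift. This gives at most $(l-1)^2=k$ crossings, since each of the $l-1$ other diagonals in the window contributes at most $l-1$ crossing edges: the nested family on a diagonal meets a fixed edge in a bounded interval of length at most $l-1$.

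\textbf{Main obstacle.} The main obstacle is this crossing count. One must check that, in the interleaved order, an edge on diagonal $s$ is crossed by at most $l-1$ edges of each neighbouring diagonal $s'$, and that the wraparound edges (where $i+j$ exceeds $n$) do not add crossings. The first thing I would try is verifying, via the $M$-diagram-like picture, that crossings between diagonals $s$ and $s'$ correspond to endpoints interleaving within a window of width $|s-s'|\le l-1$. If the cyclic wraparound breaks this, I would fall back to splitting each stack's diagonal set into the non-wrapping and wrapping parts, as was done with the first and second parts in \cref{thm:1-defective-stack-knn}, and check that the two parts are separated except for $O(1)$ outer edges.
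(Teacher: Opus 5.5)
Your lower bound matches the paper's: it reduces to a separated $K_{\lceil n/2\rceil,\lceil n/2\rceil}$ and uses the fact that at most $k+1$ edges of a twist fit in one stack. Your construction is also the paper's. Up to an index shift, the paper's stack $p$ in the order $u_0\prec v_0\prec u_1\prec\cdots$ consists of the anti-diagonals $i+j=c$ (its ``second part'') and $i+j=c+n$ (its ``first part''), for $c$ ranging over $l$ consecutive residues mod $n$.

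The gap is in the crossing count, which is the only real content of the upper bound. Your claim that each other diagonal of the window contributes at most $l-1$ crossings is false. An edge of diagonal $s$ can cross up to $2d-1$ edges of diagonal $s\pm d$. For instance, with $l=3$, $n\ge 12$ and the window $\{9,10,11\}$, the edge $(u_0,v_9)$ crosses $(u_1,v_{10})$, $(u_{10},v_1)$ and $(u_{11},v_0)$ on $i+j=11$. The total $(l-1)^2$ survives only because $\sum_{d=1}^{l-1}(2d-1)=(l-1)^2$; an edge in the $t$-th class of the window gets at most $t^2+(l-1-t)^2$.

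The wraparound is also not something your fallback can dispose of. For $c'<c$ in the same window, the diagonals $i+j=c$ and $i+j=c'+n$ genuinely interleave. Take $n=12$, $l=3$ and the window $\{3,4,5\}$. The edge $(u_0,v_5)$ crosses nothing on $i+j\in\{3,4\}$, but it crosses $(u_4,v_{11})$, $(u_5,v_{10})$, $(u_{11},v_4)$ and $(u_5,v_{11})$ on $i+j\in\{15,16\}$, so its whole budget of $4$ comes from the other ``part''. Likewise $(u_5,v_0)$ gets two crossings from each part. Bounding the two parts separately and adding could give up to $2(l-1)^2$. You need a count per residue class that treats $i+j=c'$ and $i+j=c'+n$ together.

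Here is one way to do it. Crossings depend only on the cyclic order, so place $u_i$ at $2i$ and $v_j$ at $2j+1$ in $\mathbb{Z}_{2n}$. The class $c$ (edges with $i+j\equiv c \pmod n$) is then exactly the set of orbits of the fixed-point-free involution $\rho_c(x)=2c+1-x$; in particular, its chords pairwise do not cross.
\begin{enumerate}
\item Let $e=\{a,b\}$ be in class $c$, let $I$ be an open arc between $a$ and $b$, and let $\bar I=I\cup\{a,b\}$. Since $\rho_c$ swaps $a$ and $b$, we have $\rho_c(I)=I$.
\item Hence $\rho_{c'}(I)=\rho_{c'}\rho_c(I)=I+\delta$, where $\delta\equiv 2(c'-c)\pmod{2n}$ and $0<\delta<2n$.
\item The class-$c'$ edges crossing $e$ correspond bijectively to the points of $(I+\delta)\setminus\bar I$, namely their endpoints outside $\bar I$.
\item Translating an arc by $\delta$ moves at most $\min(\delta,2n-\delta)-1\le 2|c-c'|-1$ of its points outside its closure.
\end{enumerate}
So $e$ crosses at most $2|c-c'|-1$ edges of class $c'$, wraparound included. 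Summing over the other classes of its window gives at most $(l-1)^2=\lfloor\sqrt k\rfloor^2\le k$ crossings. This closes your upper bound, and it also supplies the verification that the paper only asserts ``by construction''. Your integer reading of $l$ is what the paper's case analysis implicitly uses as well.
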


\begin{proof}
As every $k$-defective stack layout of $K_{n,n}$ contains a corresponding one of $K_{\left\lceil\frac{n}{2}\right\rceil,\left\lceil\frac{n}{2}\right\rceil}$ in the separated setting, the proof of the lower bound follows from \cref{cor:1-defective-stack-knn-nonsep}.
The proof of the upper bound forms an extension of the one of \cref{thm:1-defective-stack-knn}.
We assume that $n$ is a multiple of $l$, and we prove that the complete bipartite graph $K_{n,n}$ on $2n$ vertices admits a $k$-defective stack layout $\mathcal{L}_n$ with $\frac{n}{l}$ defective stacks. Let $u_0 \prec v_0 \prec u_1 \prec v_1 \prec \ldots \prec u_{n-1} \prec v_{n-1}$ be the order of the vertices of $K_{n,n}$ in $\mathcal{L}_n$ and assume in the following that indices are taken$\mod n$. Each stack of $\mathcal{L}_n$ consists of exactly $nl$ edges that are partitioned into two parts. We proceed by considering cases in our proof, depending on whether each of $n$ and $l$ is even or odd. 

    \paragraph*{Case 1: $l$ is even.}

    Since $l$ is even, it follows that $n$ is even.
    
    \noindent For each $p=1,\ldots, \frac{n}{l}$ the first part of stack $p$ contains the following $nl-l^2p+\frac{l^2}{2}+\frac{l}{2}$ edges:
    \noindent
    \begin{tabular}{ l l l }
    \tikzcircle[defred, fill=defred]{2.8pt} $(u_i, v_{n+lp-l-1-i+2s})$ & $lp-l+2s \leq i \leq \frac{n}{2}+\frac{l}{2}p-\frac{l}{2}+s-1$ & $s\in [0,\frac{l}{2}-1]$ \\
    \tikzcircle[defblue, fill=defblue]{2.8pt} $(u_i, v_{n+lp-l-i+2s})$ & $lp-l+2s+1 \leq i \leq \frac{n}{2}+\frac{l}{2}p-\frac{l}{2}+s$ & $s\in [0,\frac{l}{2}-1]$ \\
    \tikzcircle[defgreen, fill=defgreen]{2.8pt} $(v_i, u_{n+lp-l-i+2s})$ & $lp-l+2s+1 \leq i \leq \frac{n}{2}+\frac{l}{2}p-\frac{l}{2}+s-1$ & $s\in [0,\frac{l}{2}-1] $ \\
     \tikzcircle[deforange, fill=deforange]{2.8pt} $(v_i, u_{n+lp-l-1-i+2s})$ & $lp-l+2s \leq i \leq \frac{n}{2}+\frac{l}{2}p-\frac{l}{2}+s-1$ & $s\in [0,\frac{l}{2}-1]$ \\
    \end{tabular}

    \medskip\noindent The second part of stack $p$ contains the following $l^2p-\frac{l^2}{2}-\frac{l}{2}$ edges:
    
    \noindent
    \begin{tabular}{ l l l }
    \tikzcircle[defcyan, fill=defcyan]{2.8pt} $(u_i, v_{lp-l-1-i+2s})$ & $0 \leq i \leq  \frac{l}{2}p-\frac{l}{2}+s-1$ & $s\in [0,\frac{l}{2}-1] $  \\
    \tikzcircle[defpurple, fill=defpurple]{2.8pt} $(u_i, v_{lp-l-i+2s})$ & $0 \leq i \leq \frac{l}{2}p-\frac{l}{2}+s$ & $s\in [0,\frac{l}{2}-1] $
    \\
    \tikzcircle[defpink, fill=defpink]{2.8pt} $(v_i, u_{lp-l-i+2s})$ & $0 \leq i \leq  \frac{l}{2}p-\frac{l}{2}+s-1$ & $s\in [0,\frac{l}{2}-1] $ \\
    \tikzcircle[defgray, fill=defgray]{2.8pt} $(v_i, u_{lp-l-1-i+2s})$ & $0 \leq i \leq  \frac{l}{2}p-\frac{l}{2}+s-1$ & $s\in [0,\frac{l}{2}-1] $ \\
    \end{tabular}

\paragraph*{Case 2.a: $l$ and $n$ are odd.}
In this case, we consider the cases in which $p$ is even or odd.

\noindent The first part of stack $p$ contains the following $nl-l^2p+\frac{l^2}{2}+\frac{l}{2}$ edges:

\[
\hat{p} =
\begin{cases}
\frac{p-1}{2}, & \text{if } p \text{ is odd} \\
\frac{p}{2}, & \text{if } p \text{ is even } 
\end{cases}
\]

\noindent

\begin{tabular}{ l l l }
    \tikzcircle[defred, fill=defred]{2.8pt} $(u_i, v_{n+lp-l-1-i+2s})$ & 
    $lp-l+2s \leq i \leq \frac{n+1}{2}+\frac{l+1}{2}p-\frac{l+1}{2}+s-1-\hat{p}$ & 
    $s\in [0,\frac{l-1}{2}]$ \\
    \tikzcircle[defblue, fill=defblue]{2.8pt} $(u_i, v_{n+lp-l-i+2s})$ & 
    $lp-l+2s+1 \leq i \leq \frac{n-1}{2}+\frac{l-1}{2}p-\frac{l-1}{2}+s+\hat{p}$ & 
    $s\in [0,\frac{l-1}{2}-1]$ \\
    \tikzcircle[deforange, fill=deforange]{2.8pt} $(v_i, u_{n+lp-l-1-i+2s})$ & 
    $lp-l+2s \leq i \leq \frac{n-1}{2}+\frac{l-1}{2}p-\frac{l-1}{2}+s-1+\hat{p}$ & 
    $s\in [0,\frac{l-1}{2}]$ \\
    \tikzcircle[defgreen, fill=defgreen]{2.8pt} $(v_i, u_{n+lp-l-i+2s})$ & 
    $lp-l+2s+1 \leq i \leq \frac{n+1}{2}+\frac{l+1}{2}p-\frac{l+1}{2}+s-1+\hat{p}$ & 
    $s\in [0,\frac{l-1}{2}-1] $ \\
\end{tabular}

\noindent The second part of stack $p$ contains the following $l^2p-\frac{l^2}{2}-\frac{l}{2}$ edges:

\[
\hat{p} =
\begin{cases}
\frac{p-1}{2}, & \text{if } p \text{ is odd} \\
\frac{p-2}{2}, & \text{if } p \text{ is even } 
\end{cases}
\]

\noindent
\begin{tabular}{ l l l }
\tikzcircle[defcyan, fill=defcyan]{2.8pt} $(u_i, v_{lp-l-1-i+2s})$ & $0 \leq i \leq \frac{l+1}{2}p-\frac{l+1}{2}+s-1-\hat{p}$ & $s\in [0,\frac{l-1}{2}] $  \\
\tikzcircle[defpurple, fill=defpurple]{2.8pt} $(u_i, v_{lp-l-i+2s})$ & $0 \leq i \leq  \frac{l-1}{2}p-\frac{l-1}{2}+s+\hat{p}$ & $s\in [0,\frac{l-1}{2}-1] $ \\
\tikzcircle[defpink, fill=defpink]{2.8pt} $(v_i, u_{lp-l-1-i+2s})$ & $0 \leq i \leq  \frac{l-1}{2}p-\frac{l-1}{2}+s-1+\hat{p}$ & $s\in [0,\frac{l-1}{2}] $ \\
\tikzcircle[defgray, fill=defgray]{2.8pt} $(v_i, u_{lp-l-i+2s})$ & $0 \leq i \leq  \frac{l+1}{2}p-\frac{l+1}{2}+s-1+\hat{p}$ & $s\in [0,\frac{l-1}{2}-1]$ \\
\end{tabular}

\paragraph*{Case 2.b: $l$ is odd and $n$ is even.}

\[
\hat{p} =
\begin{cases}
\frac{p-1}{2}, & \text{if } p \text{ is odd} \\
\frac{p-2}{2}, & \text{if } p \text{ is even } 
\end{cases}
\]

\noindent The first part of stack $p$ contains the following $nl-l^2p+\frac{l^2}{2}+\frac{l}{2}$ edges:

\noindent

\begin{tabular}{ l l l }
    \tikzcircle[defred, fill=defred]{2.8pt} $(u_i, v_{n+lp-l-1-i+2s})$ & 
    $lp-l+2s \leq i \leq \frac{n}{2}+ \frac{l+1}{2}p-\frac{l+1}{2}+s-1-\hat{p}$ & 
    $s\in [0,\frac{l-1}{2}]$ \\
    \tikzcircle[defblue, fill=defblue]{2.8pt} $(u_i, v_{n+lp-l-i+2s})$ & 
    $lp-l+2s+1 \leq i \leq  \frac{n}{2}+ \frac{l-1}{2}p-\frac{l-1}{2}+s+\hat{p}$ & 
    $s\in [0,\frac{l-1}{2}-1]$ \\
    \tikzcircle[deforange, fill=deforange]{2.8pt} $(v_i, u_{n+lp-l-1-i+2s})$ & 
    $lp-l+2s \leq i \leq \frac{n}{2}+\frac{l-1}{2}p-\frac{l-1}{2}+s-1+\hat{p}$ & 
    $s\in [0,\frac{l-1}{2}]$ \\
    \tikzcircle[defgreen, fill=defgreen]{2.8pt} $(v_i, u_{n+lp-l-i+2s})$ & 
    $lp-l+2s+1 \leq i \leq \frac{n}{2}+\frac{l+1}{2}p-\frac{l+1}{2}+s-1-\hat{p}$ & 
    $s\in [0,\frac{l-1}{2}-1] $ \\
\end{tabular}

\noindent The second part of stack $p$ contains the following $l^2p-\frac{l^2}{2}-\frac{l}{2}$ edges:

\noindent
\begin{tabular}{ l l l }
\tikzcircle[defcyan, fill=defcyan]{2.8pt} $(u_i, v_{lp-l-1-i+2s})$ & $0 \leq i \leq \frac{l+1}{2}p-\frac{l+1}{2}+s-1-\hat{p}$ & $s\in [0,\frac{l-1}{2}] $  \\
\tikzcircle[defpurple, fill=defpurple]{2.8pt} $(u_i, v_{lp-l-i+2s})$ & $0 \leq i \leq  \frac{l-1}{2}p-\frac{l-1}{2}+s+\hat{p}$ & $s\in [0,\frac{l-1}{2}-1] $ \\
\tikzcircle[defpink, fill=defpink]{2.8pt} $(v_i, u_{lp-l-1-i+2s})$ & $0 \leq i \leq  \frac{l-1}{2}p-\frac{l-1}{2}+s-1+\hat{p}$ & $s\in [0,\frac{l-1}{2}] $ \\
\tikzcircle[defgray, fill=defgray]{2.8pt} $(v_i, u_{lp-l-i+2s})$ & $0 \leq i \leq  \frac{l+1}{2}p-\frac{l+1}{2}+s-1-\hat{p}$ & $s\in [0,\frac{l-1}{2}-1]$ \\
\end{tabular}

\noindent Thus, in all cases, in total $\mathcal{L}_n$ contains $\sum_{p=1}^{\frac{n}{l}} \left(nl-l^2p+\frac{l^2}{2}+\frac{l}{2} + l^2p-\frac{l^2}{2}-\frac{l}{2} \right) = n^2$. By construction, each edge in $\mathcal{L}_n$ is crossed by at most $k$ other edges. 

Since no additional crossings occur between the two parts, this guarantees that the resulting stack layout is $k$-defective. Moreover, the proof is complete since no two edges share the same stack.
\end{proof}

\section{Additional Remarks and Open Research Directions}\label{se:final}
We have studied stack and queue layouts with defects from a combinatorial point of view. Precisely, we have given edge density results and bounds on the number of required queues/stacks, depending on the value of defectiveness. For this last type of bounds, we now report a summary of our main findings in comparison with the bounds known on the classic queue/stack number~(\cref{tbl:bounds}).

\subparagraph{Key findings for defective queue number.} 
\begin{description}
\item[F1.] Interestingly, the queue number of outerplanar graphs is not reduced if we allow defects, independent of the value $k$ of defectiveness. However, for $k=1$, the queue number of outer 1-planar graphs decreases by one unit, passing from 3 (for layouts without defects)~to~2.
\item[F2.] For planar graphs, 33 queues suffice for 1-defective layouts; recall that the best upper bound known for the queue number (without defects) of planar graphs is 42.
Note however that our lower bound for the queue number of 1-defective layouts is only 2; this large gap between lower and upper bound reflects the one known for the classic queue number and leaves space for improvements.
\item[F3.] About complete  and complete bipartite graphs (in the non-separated setting), 1-defective layouts reduce the number of queues by $33\%$ with respect to layouts without defects (obviously, the reduction asymptotically tends to $100\%$ for increasing values~of~defectiveness).
\end{description}

\subparagraph{Key findings for defective stack number.} 
\begin{description}
\item[F4.] Outer $k$-planar graphs can be realized with one stack if we allow defectiveness $k$. 
\item[F5.] For planar graphs, our lower bound for 1-defective stack number is $2$, while the upper bound is 4, inheriting the value for stack layouts without defects. This gap leaves space for future research. 
\item[F6.] Regarding complete graphs and complete bipartite graphs, the conclusions are similar to those for defective queue layouts (see Finding~F3).
\end{description}

Specific research directions that directly arise from our results include the following: 

\begin{description}
\item[OP1.] Finding tight upper bounds on the edge density of $k$-defective $h$-queue layouts and $k$-defective $h$-stack layouts, when $k,h > 1$ (see \cref{tb:edge-density}).

\item[OP2.] Filling the gap between lower and upper bounds on the $k$-defective stack and queue number for some of the studied classes of graphs, e.g., for planar graphs when $k=1$ (see \cref{tb:queue-stack-number-1}), and for $K_n$ or $K_{n,n}$ when $k>1$ (see \cref{tb:queue-stack-number-k}).

\item[OP3.] Extending the study to other meaningful classes of graphs, such as series-parallel graphs, planar 3-trees, planar bipartite graphs, and bounded-treewidth graphs.
\end{description}

\subsection{The Recognition Problem}\label{sse:recognition} Besides the study of combinatorial questions, establishing the complexity of recognizing graphs that admit $k$-defective $h$-queue/stack layouts (for different values of $k$ and $h$) is an interesting research direction. Below, we give some preliminary results for $h=1$. Note that the edge density bounds in \cref{tb:edge-density} provide a simple criterion for efficiently excluding graphs that do not admit $k$-defective $h$-stack/queue layouts, depending on the values of $k$ and $h$.

\medskip\noindent\textbf{Defective stack layouts.} 
The following result is a corollary of the characterization of \cref{th:charact-k-defective-stack} and a result in~\cite{DBLP:conf/compgeom/KobayashiO025}; it proves that recognizing which graphs admit a $k$-defective $1$-stack layout can be solved in quasi-polynomial time.

\begin{corollary}\label{th:recognition-k-defective-stack}
Deciding whether a graph has $k$-defective stack number~$1$, for $k \geq 1$, can be done in $O(2^{k \log k}n^{3k+O(1)})$ time.
\end{corollary}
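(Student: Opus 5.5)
The plan is to reduce the question to outer $k$-planarity testing and then invoke the algorithm of Kobayashi et al.~\cite{DBLP:conf/compgeom/KobayashiO025}. By \cref{th:charact-k-defective-stack}, a graph $G$ has $k$-defective stack number~$1$ if and only if $G$ is outer $k$-planar. So it suffices to decide outer $k$-planarity of $G$ within the claimed time bound.

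First, I will make the equivalence in \cref{th:charact-k-defective-stack} explicit in both directions, so that the reduction is clearly sound.
Given a $k$-defective $1$-stack layout, place the vertices on a circle in the spine order and draw each edge as a chord. Two independent chords cross exactly when their endpoints alternate, which is exactly the crossing relationship in the layout. Hence every edge has at most $k$ crossings, and this is an outer $k$-planar drawing.
Conversely, take an outer $k$-planar drawing and cut the outer circle at any point to obtain a linear order. Two edges whose endpoints alternate along the circle must cross in any drawing with all vertices on the outer face. So each edge crosses at most as many edges in the layout as it does in the drawing, namely at most $k$.

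Second, I invoke the result of~\cite{DBLP:conf/compgeom/KobayashiO025}, which tests whether an $n$-vertex graph is outer $k$-planar in $O(2^{k\log k}n^{3k+O(1)})$ time. The only real obstacle is checking that their notion of outer $k$-planarity matches ours. The concern is whether they count crossings in convex (chord) drawings or in arbitrary outer drawings. The argument of the previous paragraph shows that a convex drawing in the same cyclic order minimizes crossings per edge, so the two notions coincide.

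Running their algorithm on $G$ therefore decides whether the $k$-defective stack number of $G$ is~$1$ within the stated time.
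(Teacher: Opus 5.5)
Your proposal is correct and matches the paper's argument: the corollary is obtained by combining the characterization in \cref{th:charact-k-defective-stack} ($k$-defective stack number~$1$ if and only if outer $k$-planar) with the outer $k$-planarity testing algorithm of~\cite{DBLP:conf/compgeom/KobayashiO025}. The paper takes for granted the detail you check explicitly, namely that for a fixed cyclic order the convex chord drawing minimizes crossings per edge, so the two notions of outer $k$-planarity coincide.
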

In view of \Cref{th:charact-k-defective-stack}, we further note that in~\cite{DBLP:conf/swat/0001C00SU024} the authors study the number of edges one needs to remove from a graph with a given vertex order so that the result is a $k$-defective $h$-stack layout, as each stack in their approach is required to be (outer) $k$-planar.

\medskip\noindent\textbf{Defective queue layouts.}
In contrast to $k$-defective $1$-stack layouts, we do not provide an answer about the complexity of recognizing graphs with $k$-defective queue number one. However, we give a characterization that generalizes the one of graphs with queue number one, for which the recognition problem is NP-complete~\cite{DBLP:journals/siamcomp/HeathR92}. This suggests that the recognition of $k$-defective graphs with queue number one may also be NP-complete for fixed values~of~$k \geq 1$.

An \emph{arched level layout} $\Gamma$ of a graph $G$ is a drawing such that (see \cref{fi:arched-level-1-planar-layout}): 
$(i)$ the vertices of $G$ are points of a set of $\ell \geq 1$ horizontal lines, called \emph{levels}, numbered from $1$ to $\ell$ from top to bottom; $(ii)$ each edge of $G$ connects vertices either on the same level or on consecutive levels;
$(iii)$ an edge connecting vertices on consecutive levels is drawn as a straight-line segment; $(iv)$ an edge connecting two vertices of the same level $i$ ($1 \leq i \leq \ell$) is called \emph{arched edge} and is drawn as a simple Jordan arc $\gamma$ such that:
($a$) $\gamma$ intersects~level~$i$ exactly once (other than at its end-points) to the left of every vertex of level $i$;
($b$) all the vertices of every level $j<i$ are 
inside the closed region bounded by $\gamma$ and the portion of level $i$ that connects the endpoints of $\gamma$.
If $\Gamma$ has no crossings, it is an \emph{arched level planar layout}; if every edge of $\Gamma$ has at most $k$ crossings, it is an \emph{arched level $k$-planar layout}. 

\begin{figure}[tb]
    \centering
    \begin{subfigure}{.4\textwidth}
        \centering
        {\includegraphics[page=7, width=.75\textwidth]{figures/level-1-planar.pdf}}
        \subcaption{\label{fi:level-1-planar}}
    \end{subfigure}
    \hfil
    \begin{subfigure}{.4\textwidth}
        \centering
        {\includegraphics[page=8, width=.75\textwidth]{figures/level-1-planar.pdf}}
        \subcaption{\label{fi:arched-level-1-planar}}
    \end{subfigure}   
    \caption{Two examples of arched level $1$-planar layouts. The layout (a) is not nice, because the red edge in level $2$ is involved in a crossing even after removing all  green edges. The layout~(b)~is~nice.}
    \label{fi:arched-level-1-planar-layout}
\end{figure}

For an arched level $k$-planar layout, we classify the edges of $\Gamma$ as follows (see \cref{fi:arched-level-1-planar-layout}): $(i)$ an edge $(u,v)$ is \emph{green} if $v$ is the rightmost vertex of a level $i$ ($1 \leq i \leq \ell$) and $u$ is either on level $i-1$ or on level $i$; $(ii)$ an edge $(u,v)$ is \emph{blue} if $u$ and $v$ are on different levels and $(u,v)$ is not green; $(iii)$ an edge $(u,v)$ is \emph{red} if $u$ and $v$ are on the same level and $(u,v)$ is not green (that is, none of its end-vertices is the rightmost of this level).

An arched level $k$-planar layout is \emph{nice} if once all green edges are removed, there is no red edge with $k$ crossings (see \cref{fi:arched-level-1-planar}). Our characterization is stated in \cref{th:charact-nice}, whose proof however is based on the following auxiliary lemma.

\begin{lemma}\label{le:charact-arched-level-1-planar}
For every $k \geq 0$, a graph $G$ has $k$-defective queue number $1$ if and only if it is an arched level $k$-planar graph.
\end{lemma}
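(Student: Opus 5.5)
We prove both directions by generalizing the classical correspondence between $1$-queue graphs and arched level planar graphs due to Heath and Rosenberg. The key tool is a translation of nestings in the linear order into crossings in the leveled drawing, under which the number of defects per edge is preserved exactly.

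\emph{From layout to drawing.} Take a $k$-defective $1$-queue layout with order $v_0\prec\dots\prec v_{n-1}$ and perform a BFS-like sweep, as in the queue-layout literature. Level $1$ is $\{v_0\}$. Level $i+1$ consists of all vertices that follow the last vertex of level $i$, up to and including the rightmost neighbor of any vertex of level $i$; if that set is empty, we take the next single vertex. Each edge then joins vertices on the same level or on consecutive levels, because an edge jumping two levels would contradict the choice of the level boundaries. Within each level we place vertices left to right in $\prec$. Inter-level edges are straight segments. An intra-level edge $(v_a,v_b)$ with $v_a\prec v_b$ is routed as an arch: it leaves $v_a$ upward, goes around all higher levels, and comes back down to the left of level $i$ to reach $v_b$. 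This is exactly what conditions ($a$) and ($b$) require.

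The core step is a case analysis showing that two independent edges cross in this drawing if and only if they nest in $\prec$. There are three cases.
\begin{itemize}
\item Two straight edges between levels $i$ and $i+1$: they cross precisely when their endpoint orders disagree, which in the concatenated order is a nesting.
\item An arch on level $i$ and a straight edge between levels $i-1$ and $i$, or an arch on level $i$ and a straight edge between levels $i$ and $i+1$: by the wrapping of the arch, a crossing occurs exactly when the straight edge's endpoint on level $i$ lies strictly between the arch's endpoints, with the other endpoint outside. This needs to be checked separately for the two sub-cases.
\item Two arches on the same level: they cross iff they nest.
\end{itemize}
Edges on non-adjacent levels can neither cross nor nest. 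Since the crossing relation coincides with the nesting relation, the conflict graphs are the same, and the drawing is arched level $k$-planar.

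\emph{From drawing to layout.} Given an arched level $k$-planar drawing, order the vertices level by level from top to bottom, and left to right within each level. This is the same order used in the proof of \cref{th:outer-1-planar}. The same three-case analysis, read in reverse, shows that every nesting in this order is a crossing in the drawing. Hence each edge nests with at most $k$ others, and we obtain a $k$-defective $1$-queue layout. The case $k=0$ recovers the known result.

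The main obstacle is the arch-versus-straight-edge case. We must verify carefully, using property ($b$) (the arch encloses all higher levels) and the convention that the arch crosses level $i$ to the left of all its vertices, that the parity of crossings matches nesting exactly, including when the two edges share a level-$i$ endpoint. Edges sharing an endpoint are excluded, since both relations are defined only for independent edges. I would try to settle this by a direct planar-topology argument on the closed region bounded by the arch.
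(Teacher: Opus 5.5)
There is a genuine gap, in exactly the case you flagged as the main obstacle. Take an arch $(v_a,v_b)$ on level $i$ and a straight edge. Your condition, that the straight edge's level-$i$ endpoint lies strictly between $v_a$ and $v_b$, describes an \emph{interleaving} in $\prec$: for example $c\prec v_a\prec d\prec v_b$ with $c$ on level $i-1$. That is a stack-type conflict, not a nesting. The nestings are the opposite configurations:
\begin{itemize}
\item $v_b\prec d$ for an edge $(c,d)$ coming from level $i-1$;
\item $c\prec v_a$ for an edge $(c,f)$ going to level $i+1$.
\end{itemize}
Your routing also produces the wrong crossings. Suppose the arch leaves its \emph{left} endpoint $v_a$ upward. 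By~($a$), its upper part meets level $i$ only at $v_a$ and at the point $x_0$ to the left of all vertices. By~($b$), the higher levels then lie in the region bounded by this upper part and the segment of level $i$ from $x_0$ to $v_a$. Hence every straight edge from level $i-1$ to \emph{any} vertex right of $v_a$ must cross the arch. Note that your routing satisfies ($a$) and ($b$) as literally stated, so the orientation has to be chosen deliberately.

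A concrete failure: the path with edges $(c,d)$, $(c,v_b)$, $(v_a,v_b)$ and order $c\prec v_a\prec d\prec v_b$ is a $0$-defective $1$-queue layout. Your leveling gives levels $\{c\}$ and $\{v_a,d,v_b\}$. In your drawing $(c,d)$ crosses the arch, so the construction is not even arched level planar. Reading your criterion ``in reverse'' for the other direction also fails, because it asserts that nested pairs do not cross.

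The fix is to mirror the arch. Leave the \emph{right} endpoint $v_b$ upward, go around all higher levels, cross level $i$ to the left of all its vertices, and reach $v_a$ from below while staying above level $i+1$. Route arches of deeper levels outside those of shallower levels, so that arches of different levels and edges not touching level $i$ do not cross. Then:
\begin{itemize}
\item the upper part, together with the portion of level $i$ left of $v_b$, encloses the higher levels, so an edge from level $i-1$ to $d$ crosses the arch iff $v_b\prec d$;
\item the lower part, together with the portion of level $i$ left of $v_a$, bounds a region strictly between levels $i$ and $i+1$, so an edge from $c$ to level $i+1$ crosses iff $c\prec v_a$.
\end{itemize}
These are exactly the nestings.

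For the direction from drawings to layouts you only need ``nest $\Rightarrow$ cross''. This follows for a given drawing by the same separation argument: the straight edge stays strictly on one side of level $i$, so it cannot leave the region through level $i$. For edges to level $i+1$ you also need, as the paper implicitly assumes, that the arch does not wind around vertices of level $i+1$.

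Once this is repaired, your BFS leveling is valid but unnecessary. The paper simply notes that any partition of $\prec$ into consecutive groups, with every edge inside a group or between consecutive groups, gives an arched level layout with induced order $\prec$; one or two groups always work. The leveling is only needed for \cref{th:charact-nice}.
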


\begin{proof}
Let $\Gamma$ be an arched level layout of $G$ and let $v_{i,1}, v_{i,2}, \dots, v_{i,r_i}$ be the vertices of level $i$ ($1 \leq i \leq \ell$) in the left-to-right order they appear along level $i$. 
The linear order $v_{1,1},v_{1,2},\dots,v_{1,r_1},v_{2,1},v_{2,2},\dots,v_{2,r_2},\dots, v_{\ell,1},v_{\ell,2},\dots,v_{\ell,r_\ell}$ of the vertices of $V$ is called the \emph{induced order} of $\Gamma$ and is denoted as $\prec_{\Gamma}$.  

We claim that two edges nest in $\prec_{\Gamma}$ if and only if they cross in $\Gamma$. Namely, suppose that two edges $e_1=(a,d)$ and $e_2=(b,c)$ cross in $\Gamma$. We have three cases:
$(i)$ Both edges are arched edges; $(ii)$ both edges are not arched edges; $(iii)$ one edge is arched and the other one is not. In all the three cases, a crossing of $e_1$ and $e_2$ in $\Gamma$ implies that the two edges nest in $\prec_{\Gamma}$, as illustrated in \Cref{fi:configurations}.

Suppose now that  two edges $e_1=(a,d)$ and $e_2=(b,c)$ nest in $\prec_{\Gamma}$, and assume that  $a \prec_{\Gamma} b \prec_{\Gamma} c \prec_{\Gamma} d$. There are four cases: $(i)$ $a,b,c,d$ all appear on the same level $i$; $(ii)$ $a$ is on level $i$ and $b,c,d$ are on level $i+1$;  $(iii)$ $a, b$ are on level $i$ and $c,d$ are on level $i+1$; $(iv)$ $a,b,c$ are on level $i$, and $d$ is on level $i+1$.
In all cases, drawing the edges as required (i.e., straight-line when the end-vertices are on different levels and as curves when they are arched) creates a crossing; see \Cref{fi:configurations}.

Assume that $G$ admits an arched level $k$-planar layout $\Gamma$ and consider the induced order $\prec_{\Gamma}$ of $\Gamma$. By the claim above, two edges nest in $\prec_{\Gamma}$ if and only if they cross in $\Gamma$; since $\Gamma$ is $k$-planar, every edge is in at most $k$ nesting relationships, i.e., $\prec_{\Gamma}$ defines a $k$-defective $1$-queue layout.

\begin{figure}[tb]
    \centering
    \begin{subfigure}{.48\textwidth}
        \centering
        {\includegraphics[page=4,width=.8\textwidth]{configurations}
        \subcaption{\label{fi:configurations-d}}}
    \end{subfigure}
    \hfil
    \begin{subfigure}{.48\textwidth}
        \centering
        {\includegraphics[page=1,width=.8\textwidth]{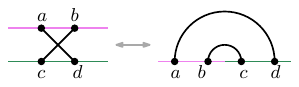}
        \subcaption{\label{fi:configurations-a}}}
    \end{subfigure}
    \hfil
    \begin{subfigure}{.48\textwidth}
        \centering
        {\includegraphics[page=2,width=.8\textwidth]{configurations}\subcaption{\label{fi:configurations-b}}}
    \end{subfigure}
    \hfil
    \begin{subfigure}{.48\textwidth}
        \centering
        {\includegraphics[page=3,width=.8\textwidth]{configurations}
        \subcaption{\label{fi:configurations-c}}}
    \end{subfigure}
    \hfil
    \caption{Crossings in arched level $k$-planar layouts and nestings in $k$-defective $1$-queue layouts.}
    \label{fi:configurations}
\end{figure}

Assume now that $G$ admits a $k$-defective $1$-queue layout $\mathcal{L}$.  Let $u_1 \prec u_2 \prec \dots \prec u_n$ be the linear order of $\mathcal{L}$. One may define several different arched level layouts whose induced order is $\prec$. Since $\mathcal{L}$ has defectiveness $k$, by the claim above every such an arched level layout is $k$-planar. 

We conclude by observing that at least one arched level layout whose induced order is $\prec$ exists. Namely, to obtain such an arched level layout it is sufficient to arbitrarily partition the vertices of $\mathcal{L}$ into $\ell \geq 1$ groups (each defining a different level) so that every group contains vertices that are consecutive according to $\prec$ and any two adjacent vertices are either in the same group or in consecutive groups. Notice that a partition with one or two groups necessarily satisfies the conditions above, and therefore at least one such arched level layout exists\footnote{Another way of partitioning the vertices of $\mathcal{L}$ in levels is described in \cite{DBLP:journals/siamcomp/HeathR92}.}.
\end{proof}

\begin{theorem}\label{th:charact-nice}
For every $k \geq 0$, a graph $G$ has $k$-defective queue number $1$ if and only if it admits a nice arched level $k$-planar layout.
\end{theorem}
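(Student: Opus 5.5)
The direction ``nice $\Rightarrow$ $k$-defective queue number $1$'' should be immediate. A nice arched level $k$-planar layout is in particular an arched level $k$-planar layout, so \cref{le:charact-arched-level-1-planar} gives a $k$-defective $1$-queue layout, namely the induced order $\prec_\Gamma$. All the work is in the converse: starting from a $k$-defective $1$-queue layout $\mathcal{L}$ with order $u_1\prec\dots\prec u_n$, we must choose the partition of the vertices into levels so that the resulting arched level layout is nice. By the claim in the proof of \cref{le:charact-arched-level-1-planar}, crossings in any arched level layout with induced order $\prec$ are exactly the nestings of $\mathcal{L}$. Hence every such layout is $k$-planar, and the freedom we have is only in the choice of the levels, which determines which edges are green, red or blue.

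I would use the greedy leveling in the spirit of Heath and Rosenberg~\cite{DBLP:journals/siamcomp/HeathR92}. Level $1$ is $\{u_1\}$. If level $i$ ends at $u_t$, then level $i+1$ consists of $u_{t+1},\dots,u_s$, where $u_s$ is the rightmost neighbour of a vertex of level $i$ (and $s=t+1$ if there is none, or if the graph has several components). This leveling is valid, since every edge joins vertices on the same or on consecutive levels. It also has the key property that the rightmost vertex of each level $i\ge 2$ is adjacent to level $i-1$, so some green edge always reaches it.

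The core step is a charging argument. Fix a red edge $e=(u,w)$ on level $i$, and suppose it takes part in exactly $k$ nestings, that is, it has $k$ crossings in $\Gamma$. I would then show that at least one of those $k$ nesting partners is green, so that after removing the green edges $e$ has fewer than $k$ crossings. Since $u,w$ are not the rightmost vertex $r_i$ of level $i$, there are two possibilities. Either $e$ nests inside an edge $(a,d)$ with $a\prec u\prec w\prec d$, or it nests an edge inside the interval $(u,w)$. For the first type, with $a$ on level $i-1$ or $i$ and $d$ on level $i$ or $i+1$, I would try to show that the edge from the previous level to $r_i$ (which exists by the greedy rule) itself nests $e$, or else to replace the leveling locally. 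For example, one could end level $i$ earlier, or move $r_i$ to the next level, so that the offending outer partner becomes an edge incident to a rightmost vertex.

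The main obstacle is precisely this case analysis. A red edge may have all $k$ partners nested inside it, and then no green edge need be involved for the greedy leveling. For this case I would first try to modify the leveling by cutting the level immediately after $w$ or immediately before $u$, which turns $e$ into a green or blue edge. I would then check that the modification does not create a new red edge with $k$ crossings. To make sure the process terminates, I would choose, among all valid levelings with induced order $\prec$, one minimizing the number of red edges with $k$ crossings after the green edges are removed (or maximizing the number of levels), and show that any remaining bad red edge allows a further improvement.
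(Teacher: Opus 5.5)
Your easy direction and your greedy leveling match the paper's. You even isolate the key property: the rightmost vertex $r_i$ of every level $i\ge 2$ that is not a singleton is adjacent to some vertex $u'$ of level $i-1$. But you never draw the one-line consequence that finishes the proof, and instead announce an obstacle that does not exist.

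Take any red edge $e=(u,w)$ on level $i$ with $u\prec w$. Then $u'$ lies on level $i-1$, and $r_i$ is the last vertex of level $i$ with $w\neq r_i$. Hence $u'\prec u\prec w\prec r_i$ with four distinct vertices, so the green edge $(u',r_i)$ \emph{always} nests $e$. Equivalently, it crosses $e$ in $\Gamma$, by property (b) of arched edges. So one of the at most $k$ nesting partners of $e$ is green, and after deleting the green edges $e$ has at most $k-1$ crossings. This is exactly the paper's argument, phrased there by contradiction: if $e$ kept $k$ crossings, then together with $(u',r_i)$ it would have $k+1$ crossings in $\Gamma$.

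In particular, your ``main obstacle'' cannot occur. A red edge all of whose $k$ partners lie inside it would have $(u',r_i)$ as a $(k+1)$-st partner, contradicting $k$-defectiveness. The only levels lacking such a green edge are those whose predecessor has no forward neighbour (e.g., the first level of a component). The greedy rule makes these singletons, so they carry no red edges. For $k=0$ the argument even shows there are no red edges at all.

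As written, your proposal does not prove the converse. The decisive step is left as ``I would try to show'', and the fallback is never justified: you locally re-cut levels and pick a leveling minimizing the number of bad red edges. You do not show that a re-cut removes the defect without creating a new red edge with $k$ crossings. Nor do you show that an improving modification always exists, which the extremal choice would need. None of this machinery is needed: replacing it with the nesting observation above gives a complete proof.
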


\begin{proof}
By \cref{le:charact-arched-level-1-planar}, if $G$ admits a nice arched level $k$-planar layout then it admits a $k$-defective $1$-queue layout.

Conversely, suppose that $G$ admits a $k$-defective $1$-queue layout $\mathcal L$. 
We define a leveling of the vertices of $G$ by partitioning them into sets $L_1, L_2,\dots,L_{\ell}$, with $\ell \geq 1$, as follows (note that this leveling is analogous to the one defined in~\cite{DBLP:journals/siamcomp/HeathR92}). Let $\prec$ be the linear order of $\mathcal L$ and let $v_1,v_2,\dots,v_n$ be the vertices of $G$ ordered according to $\prec$. 
Assume first that $G$ is connected.
Set $L_1$ contains only vertex $v_1$. Assume that $L_i$ contains vertices $v_j,v_{j+1},\dots,v_{j+|L_i|-1}$. Let $N_i$ be the set of vertices that are adjacent to a vertex of $L_i$ and that follow $v_{j+|L_i|-1}$ in $\prec$; let $v_g$ be the vertex with highest index in $N_i$. Set $L_{i+1}$ contains all vertices $v_{j+|L_i|}, v_{j+|L_i|+1},\dots,v_g$.

If $G$ is not connected, let $C_1, C_2,\dots, C_h$ be the connected components of $G$. We apply the procedure above to every connected component defining a collection of sets with different indices for each connected component. More precisely, if $L_j$ is the set with the largest index in the leveling of $C_1,C_2,\dots,C_i$, then the procedure is applied to $C_{i+1}$ starting with set $L_{j+1}$. With this leveling, if the vertices of a set $L_j$ are not connected to any vertex of the set $L_{j+1}$, then $L_{j+1}$ is the first set of a connected component and therefore it contains only one vertex.  

By definition every edge of $G$ connects vertices that are either in the same set or~in~sets~with consecutive indices. 
We construct an arched level layout $\Gamma$ by placing the vertices of~set~$L_i$~on level $i$ (for $i=1,2,\dots,\ell$) and by ordering them from left to right according~to~$\prec$. We color the edges of~$\Gamma$ green, blue, and red, as described above. We claim that $\Gamma$ is $k$-planar and nice.

Since $\mathcal L$ is $k$-defective  and since the induced order of $\Gamma$ is $\prec$, $\Gamma$ is $k$-planar. Indeed, as observed in the proof of~\cref{le:charact-arched-level-1-planar}, every arched level layout whose induced order is the linear order of a $k$-defective $1$-queue layout is $k$-planar. 

We now prove that $\Gamma$ is nice. Suppose it is not. Then, after removing the green edges, at least one red edge $(u,v)$ has $k$ crossings.
Let $i$ be the level of $u$ and $v$ ($1 \leq i \leq \ell$) and let $u \prec v$. Vertex $v$ is not the rightmost vertex of level $i$ as otherwise $(u,v)$ would be green and it would have been removed. Let $v'$ be the rightmost vertex of level $i$. Two cases are possible. 

\begin{figure}[tb]
    \centering
    {\includegraphics[page=1,width=.3\textwidth]{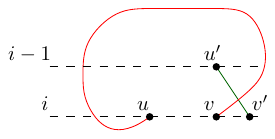}}

    \caption{Illustration for Case 1 of \cref{th:charact-nice}.}
    \label{fi:nice}
\end{figure}

\medskip\noindent{\bf Case 1:} There exists at least one edge connecting a vertex of level $i-1$ to a vertex of level~$i$. By the definition of the set $L_{i}$, there is an edge in $\Gamma$ connecting a vertex $u'$ of level $i-1$ to the rightmost vertex of level $i$, i.e., to $v'$. The (green) edge $(u',v')$ crosses the edge $(u,v)$; indeed, $v'$ is to the right of $v$ and the edge $(u,v)$ crosses level $i-1$ to the right of every vertex by property $(b)$ of the drawing of arched edges. See also~\cref{fi:nice}. Thus, $(u,v)$ has $k+1$ crossings in $\Gamma$ -- a contradiction to the fact that $\Gamma$ is $k$-planar.

\medskip\noindent{\bf Case 2:} There is no edge connecting a vertex of level $i-1$ to a vertex of level $i$. As observed above, this can happen only if level $i$ is the first level of a connected component and therefore, it contains only one vertex -- a contradiction to the existence of $(u,v)$.
\end{proof}

Heath and Rosenberg~\cite{DBLP:journals/siamcomp/HeathR92} characterized the graphs that have queue number $1$ and called them \emph{arched level planar} graphs. We remark that, although our definition of nice arched level $0$-planar layouts slightly differs from the definition of arched level planar layouts described in~\cite{DBLP:journals/siamcomp/HeathR92}, \cref{th:charact-nice} implies that the family of arched level planar graphs coincides with the family of graphs admitting a nice arched level $0$-planar layout.

\medskip
Establishing the complexity of the following problems remains open.
\begin{description}
\item[OP4.] Testing whether a graph admits a $k$-defective $h$-queue layout, for~$k \geq 1$ and $h \geq 1$.
\item[OP5.] Testing whether a graph admits a $k$-defective $h$-stack layout, for $k \geq 1$ and~$h \geq 2$.
\end{description}

\subsection{Further research directions}\label{sse:further-directions}
Our work also opens broader research directions beyond those already mentioned.

\begin{description}
\item[OP6.] Our research can naturally extend to other types of linear layouts (e.g., deques or riques) or even to mixed settings. More in general, one can study defective variants of related problems, such as defective track layouts and their implications for 3D graph~drawings. 

\item[OP7.] We considered defects obtained by imposing restrictions on the degree of the conflict graph. One could study other types of defects, e.g., conflict graphs of bounded diameter.
\end{description}

\bibliography{literature}

\end{document}